\documentclass{article}

\usepackage{arxiv}

\usepackage[utf8]{inputenc} 
\usepackage[T1]{fontenc}    
\usepackage{hyperref}       
\usepackage{url}            
\usepackage{booktabs}       
\usepackage{amsfonts}       
\usepackage{nicefrac}       
\usepackage{microtype}      
\usepackage{lipsum}		
\usepackage{graphicx}
\usepackage{natbib}
\usepackage{doi}
\usepackage[labelfont=bf]{caption}
\usepackage{bm}
\usepackage{graphicx}
\usepackage[space]{grffile}
\usepackage{latexsym}
\usepackage{textcomp}
\usepackage{longtable}
\usepackage{tabulary}
\usepackage{booktabs,array,multirow}
\usepackage{amsfonts,amsmath,amssymb}
\usepackage{natbib}
\usepackage{url}
\usepackage{hyperref}
\hypersetup{colorlinks=false,pdfborder={0 0 0}}
\usepackage{etoolbox}
\usepackage{mathtools}
\usepackage{dsfont}

\usepackage{siunitx}

\usepackage{caption}
\usepackage{subcaption}

\DeclareMathOperator*{\dif}{\mathrm{d} \!}
\DeclareMathOperator*{\vech}{vech}
\DeclareMathOperator*{\vect}{vec}
\DeclareMathOperator*{\svec}{svec}
\DeclareMathOperator*{\diag}{diag}
\DeclareMathOperator*{\tr}{Tr}

\DeclareMathOperator*{\argmin}{arg\,min}

\DeclareMathOperator{\arcsinh}{arcsinh}

\newtheorem{theorem}{Theorem}[section]
\newtheorem{lemma}[theorem]{Lemma}
\newtheorem{proposition}[theorem]{Proposition}

\newtheorem{definition}[theorem]{Definition}
\newtheorem{remark}{Remark}

\allowdisplaybreaks

\newenvironment{proof}[1][Proof]{\begin{trivlist}
\item[\hskip \labelsep {\bfseries #1}]}{\end{trivlist}}

\newcommand{\qed}{\nobreak \ifvmode \relax \else
      \ifdim\lastskip<1.5em \hskip-\lastskip
      \hskip1.5em plus0em minus0.5em \fi \nobreak
      \vrule height0.75em width0.5em depth0.25em\fi}

\makeatletter
\newcommand{\myitem}[1]{%
\item[#1]\protected@edef\@currentlabel{#1}%
} 
\makeatother

\makeatletter
\DeclareRobustCommand\widecheck[1]{{\mathpalette\@widecheck{#1}}}
\def\@widecheck#1#2{%
    \setbox\z@\hbox{\m@th$#1#2$}%
    \setbox\tw@\hbox{\m@th$#1%
       \widehat{%
          \vrule\@width\z@\@height\ht\z@
          \vrule\@height\z@\@width\wd\z@}$}%
    \dp\tw@-\ht\z@
    \@tempdima\ht\z@ \advance\@tempdima2\ht\tw@ \divide\@tempdima\thr@@
    \setbox\tw@\hbox{%
       \raise\@tempdima\hbox{\scalebox{1}[-1]{\lower\@tempdima\box
\tw@}}}%
    {\ooalign{\box\tw@ \cr \box\z@}}}
\makeatother

\title{Strang splitting estimator for nonlinear multivariate stochastic differential equations with Pearson-type multiplicative noise}

\author{ \href{https://orcid.org/0000-0002-8890-421X}{\includegraphics[scale=0.06]{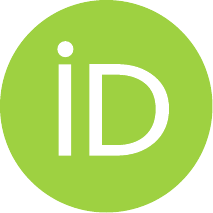}\hspace{1mm} Predrag ~Pilipović}\\
	Department of Mathematical Sciences\\
	University of Copenhagen\\
	2100 Copenhagen, Denmark \\
    \texttt{predrag@math.ku.dk} \\
    Bielefeld Graduate School of Economics and Management\\
    University of Bielefeld\\
    33501 Bielefeld, Germany\\
	\texttt{predrag.pilipovic@uni-bielefeld.de} \\
	\And
	Adeline ~Samson \\
	Univ. Grenoble Alpes\\
	CNRS, Grenoble INP, LJK\\
	38000 Grenoble, France\\
	\texttt{adeline.leclercq-samson@univ-grenoble-alpes.fr} \\
	\And
	\href{https://orcid.org/0000-0002-1998-2783}{\includegraphics[scale=0.06]{orcid.pdf}\hspace{1mm} Susanne ~Ditlevsen} \\
    Department of Mathematical Sciences\\
	University of Copenhagen\\
	2100 Copenhagen, Denmark \\
	\texttt{susanne@math.ku.dk}\\
}

\date{}

\renewcommand{\shorttitle}{Strang estimator for nonlinear multivariate SDEs with Pearson-type noise}

\hypersetup{
pdftitle={},
pdfsubject={math.ST},
pdfauthor={Predrag ~Pilipović, Adeline ~Samson, Susanne ~Ditlevsen},
pdfkeywords={First keyword, Second keyword, More},
}

\begin{document}
\maketitle

\begin{abstract}
Multivariate Pearson diffusions are characterized by a linear drift and a diffusion matrix that is quadratic in the state variables. We derive closed-form expressions for the mean and covariance matrix of this class using matrix exponential integrals, and extend this framework to a broader class of nonlinear diffusions with Pearson-type multiplicative noise. The main contribution is a new parameter estimator for these nonlinear multiplicative models based on Strang splitting, which decomposes the stochastic system into a deterministic nonlinear ordinary differential equation and a multivariate Pearson diffusion. The estimator is constructed by composing their respective flows and applying a Gaussian transition approximation with exact moments from the Pearson component. We prove that the estimator is consistent and asymptotically efficient. We also introduce a new model within this class, the Student Kramers oscillator, and prove existence and uniqueness of the strong solution and of an invariant measure. We evaluate the estimator through simulation studies on this oscillator and on the multivariate Wright-Fisher diffusion from population genetics, where it outperforms the Euler-Maruyama, Gaussian approximation, and local linearization estimators. We conclude with an application to Greenland ice core data using the Student Kramers oscillator.
\end{abstract}

\keywords{Gaussian approximation, Ice core data, Kramers oscillator, Multivariate Pearson diffusion, Multivariate Wright-Fisher diffusion, Nonlinear drift, Strang splitting estimator}

\section{Introduction}

Models based on multivariate nonlinear stochastic differential equations (SDEs) are often formulated with additive noise, primarily due to the resulting simplifications in statistical inference. However, in many applications, the underlying mechanisms naturally give rise to state-dependent (multiplicative) noise. This is particularly evident in continuous diffusion approximations of large discrete systems derived from Poisson processes. Representative examples include infectious disease models, predator–prey dynamics, and genetic population models, where the diffusion term inherently depends on the system state.

State-dependent noise also arises in models designed to capture heavy-tailed behavior or in processes constrained to a subset of the state space. In such settings, additive noise formulations may fail to adequately represent key distributional features. The motivation for the present work stems from the inadequate fit of an additive-noise diffusion model applied to ice core data in \cite{pilipovic2024SecondOrder}. In particular, the observed data exhibit pronounced heavy-tailed behavior that cannot be captured within an additive noise framework, thereby motivating the development of models with state-dependent noise.

The Pearson diffusions are a versatile class of tractable one-dimensional diffusion models applied across various disciplines due to their rich statistical properties and ease of parameter estimation \citep{FormanSorensen2008}. These models are particularly advantageous because moments and conditional moments are explicitly available, making them an attractive choice for statistical inference. Pearson diffusions encompass a wide range of stationary distributions, namely the family of Pearson distributions, including light- and heavy-tailed distributions, and with different state spaces. The Ornstein-Uhlenbeck process (OU), the Cox-Ingersoll-Ross process (CIR), the t-diffusion and the Jacobi diffusion are special cases of Pearson diffusions, with Gaussian, non-central chi-squared, skew $t$ and Beta invariant distributions, respectively. 

A Pearson diffusion is the solution to a one-dimensional SDE of the form \citep{FormanSorensen2008}
\begin{equation}
\dif X_t = -\lambda (X_t - m)\dif t + \sqrt{\alpha X_t^2 + \beta X_t + \gamma} \dif W_t,
\label{eq:Pearson1D}
\end{equation}
where $\lambda > 0$ determines the speed of mean reversion, $m$ is the mean of the invariant distribution and is in the interior of the state space, and $\alpha$, $\beta$, and $\gamma$ are such that the square root is well-defined when $X_t$ is in the state space. The parameters of \eqref{eq:Pearson1D} are $\bm{\theta} = \{\lambda, m, \alpha, \beta, \gamma\}$, where $\alpha$, $\beta$, and $\gamma$ shape the state space and the invariant distribution.

Parameter estimation in Pearson diffusions is facilitated by optimal martingale estimating functions based on exact conditional moments derived from eigenfunctions of the infinitesimal generator \citep{BibbySorensen1995, KesslerSørensen1999, FormanSorensen2008, MS12}. This approach simplifies statistical inference and yields consistent estimators even under low-frequency sampling, often encountered in empirical work. Other estimation methods based on conditional moments, such as the generalized method of moments, quasi-likelihood, and non-linear weighted least squares, are also straightforward to implement for Pearson diffusions, further highlighting their flexibility and statistical tractability.

Although Pearson diffusions are tractable and flexible, progress in extending these models to higher dimensions has been limited. A probabilistic generalization was developed by \cite{filipovic2016polynomial}, who proved existence and uniqueness of polynomial diffusions, a class of models that we refer to as multivariate Pearson diffusions. However, statistical methods for the multivariate case are still underdeveloped. \cite{Leonenko2012} used a high-order spectral approximation of the Fokker–Planck equation for Pearson diffusions and proposed to extend this to the multivariate setting by a quadratic form without linear or constant terms for the squared diffusion term. A significant gap remains in the statistical literature regarding the generalization of these models to multivariate settings, particularly to more general forms with nonlinear drift.

As we will show, extending to multivariate Pearson diffusions retains many of the analytical and statistical advantages of the univariate model, particularly the availability of closed‑form expressions for first, second, and conditional moments. A specific Pearson-type invariant distribution can be prescribed for each coordinate of the multi-dimensional process. For example, a process can be defined on the $d$-dimensional hypercube $[0,1]^d$, with each coordinate having a Beta invariant distribution. Alternatively, combinations of Pearson-type diffusion terms can enter in different coordinates of the quadratic diffusion matrix. Moreover, the quadratic diffusion matrix may be hypoelliptic, meaning it can be singular while the diffusion still admits a smooth density. Finally, allowing for a nonlinear drift opens new directions for modeling complex systems in fields such as finance, biology, and physics, capturing dependencies and interactions among multiple variables and providing a more realistic representation of the underlying processes.

The goal of this study is to develop efficient and accurate methods for parameter estimation in nonlinear SDEs with Pearson-type noise. Traditional estimation techniques often fall short in these scenarios due to the complexity of the noise structure and the nonlinearity in the drift function. Our goal is to create a robust framework for parameter estimation that can be applied to a wide range of models, including hypoelliptic diffusions. We have recently shown the potential of pseudo-likelihood approaches using splitting techniques, which approximate the likelihood while maintaining high accuracy \citep{pilipovic2024,pilipovic2024SecondOrder, DitlevsenDitlevsen2023}, however, only for additive noise. By building on these techniques, we aim to provide a more effective and versatile method for parameter estimation in complex systems with state-dependent noise structures, allowing for non-Gaussian and possibly heavy-tailed or bounded distributions.

Parameter estimation for SDEs with state-dependent noise has been studied before. The state-dependence of the diffusion coefficient complicates inference and has motivated a range of specialized methods. Most methods transform the original SDE into an SDE with additive noise at the cost of a more complicated drift function, then applying a method that requires additive noise. These transformation-based methods can be applied when the Lamperti transform is available \citep{AitSahalia2008}. For a review of likelihood-based parameter estimation for SDEs with additive noise, see \citet{pilipovic2024,pilipovic2024SecondOrder} and the references therein. 

Martingale estimating functions have emerged as a powerful tool for parameter estimation in SDEs, in particular, for one-dimensional SDEs with state-dependent noise. They exploit the underlying process' martingale property to construct consistent and asymptotically normal estimators. \cite{KesslerSørensen1999} introduced optimal martingale estimating functions based on eigenfunctions of the generator. It mimics the score function when the likelihood is not available or simplifies estimation when the likelihood is complex (as in the CIR model) and ensures high efficiency, particularly in high-frequency sampling scenarios often encountered in financial data. \cite{Sørensen2007} further demonstrated that these optimal estimating functions are comparable to maximum likelihood estimators under high-frequency asymptotics, offering a simpler alternative.

Quasi-likelihood methods, which approximate the likelihood using a tractable form, have also been applied to SDEs with non-constant noise. \citet{Kessler1997} proposed an estimator that approximates the unknown transition density of a diffusion process by a Gaussian density. This Gaussian approximation (GA) is constructed using the true conditional mean and covariance when available, or suitable approximations derived from the infinitesimal generator of the diffusion process.

Building on Kessler's foundational work, \citet{UchidaYoshida2012} extended the GA estimator to multivariate elliptic diffusions. They developed an adaptive-type contrast estimator that achieves a central limit theorem under the same design condition proposed by Kessler, namely $N h^p \to 0$, for $p \geq 2$.

More recently, \citet{iguchi2023.3} extended GA-type estimation to hypoelliptic SDEs, which pose additional challenges due to the degeneracy of the diffusion coefficient. They proposed a modified contrast estimator that achieves consistency and asymptotic normality  under the design condition $nh^p \to 0$ for $p \geq 2$. This matches the general condition proved in the elliptic setting, but not previously attained in the hypoelliptic case.

\citet{Gloter2006} developed a contrast function based on the Euler–Maruyama discretization for integrated diffusion processes, focusing on asymptotic properties as the sampling interval tends to zero and the sample size tends to infinity. To address the ill-conditioned nature of the contrast arising from the EM discretization, \citet{Gloter2006} proposed to use rough equations of the SDE and to recover the unobserved components through finite-difference approximations. However, this approach introduces bias and requires correction factors, which in turn affect the estimator’s variance.

Like \citet{Kessler1997}, \citet{Hurn2013} developed a quasi-maximum likelihood procedure for parameter estimation in multi-dimensional diffusions, where the transition density is approximated by a multivariate Gaussian distribution. Unlike \citet{Kessler1997}, \citet{Hurn2013} focused specifically on systems with affine drift and diffusion functions, where both the drift and the quadratic diffusion matrix are linear functions of the state variable. For such affine models, they derive closed-form expressions for the first and second moments of the true transition density, which are exact due to the affine structure. This explicit calculation of moments enables the construction of a Gaussian approximation that yields consistent parameter estimates even when the true transition density is misspecified, thereby providing robustness in parameter estimation. For non-affine models, \citet{Hurn2013} showed that numerical methods can still accurately compute the required moments of the transition distribution, so that Gaussian approximations achieve high computational precision in integral evaluations.

A key technical ingredient in our approach is the computation of conditional means and covariances for linear SDEs and for the linear moment systems obtained from polynomial (multivariate Pearson) diffusions. For linear SDEs with drift and diffusion coefficients affine in the state, explicit formulas for the mean and covariance in terms of matrix exponentials can be obtained via the block–matrix method of \citet{VanLoan1978}. Earlier work by \citet{JimenezOzaki2002,JimenezOzaki2003} and \citet{CarbonellJimenez2008} expressed these moments as linear combinations of several higher-dimensional matrix exponentials, while \citet{Jimenez2013} later derived simplified formulas in which the mean and covariance can be computed from a single, smaller block matrix exponential. These results provide the canonical reference for covariance computation in affine (linear) diffusion models and correspond to the linear special case of the multivariate Pearson framework considered in this paper.

Recent contributions have continued to build on these foundational methods. For instance, \cite{gloter2020,gloter2021} introduced adaptive and non-adaptive methods for hypoelliptic diffusion models, demonstrating asymptotic normality in complete observation regimes. They used higher-order It\^o–Taylor expansions to introduce additional Gaussian noise into the smooth coordinates, accompanied by higher-order mean approximations for the rough coordinates. These contributions have refined the conditions for asymptotic normality of the resulting estimators, extending the applicability of such methods to a broader class of models.

In this paper, we extend the one-dimensional Pearson diffusion to a multivariate setting, where the quadratic diffusion matrix has entries that are quadratic functions of the state variable. This generalization builds on the concept of affine diffusions as used in \citet{Hurn2013}, where first and second moments can be explicitly computed despite the unknown transition density. While \citet{Hurn2013} focused on the affine (linear) case, encompassing certain types of Pearson diffusions such as the OU and CIR processes, our approach fully generalizes to all Pearson diffusions. We further extend this framework from linear multivariate Pearson diffusions to models with nonlinear drift. We show that the Wright–Fisher diffusion, widely used in genetics, fits naturally into our framework, and we define the Student Kramers oscillator as a nonlinear hypoelliptic multivariate Pearson diffusion, proving that its solution exists, is unique, and possesses an invariant measure. 

We propose a novel method for parameter estimation in nonlinear multivariate Pearson diffusions that combines Strang splitting with a Gaussian transition density approximation. Specifically, we split the nonlinear drift into a multivariate Pearson diffusion and a nonlinear ODE, then approximate the transition density of the multivariate Pearson component by a Gaussian distribution with exact mean and covariance, computed via matrix-exponential formulas for linear SDEs. We solve the nonlinear ODE and compose the solutions of the split subsystems to obtain the Strang splitting approximation, upon which our Strang splitting (SS) estimator is based. We prove that the estimator is consistent and asymptotically normal, and we illustrate its performance in detailed simulation studies for the Wright–Fisher diffusion and the Student Kramers oscillator.

The main contributions of this paper are:
\begin{enumerate}
\item We define the class of multivariate Pearson diffusions, also known as polynomial diffusions with linear drift and quadratic diffusion matrix \citep{filipovic2016polynomial}. We derive closed-form expressions for the mean and covariance matrix of their transition distribution using established matrix-exponential methods for linear SDEs in the spirit of \citet{VanLoan1978, CarbonellJimenez2008}.
\item We introduce the Student Kramers oscillator as a specific example of a nonlinear hypoelliptic multivariate Pearson diffusion, and we show that its solution exists, is unique, and admits an invariant density. This example is also used in our simulation study.
\item We develop a new parameter estimation method for nonlinear SDEs with a multivariate Pearson-type diffusion matrix based on the Strang splitting scheme. Specifically, we split the nonlinear drift into a multivariate Pearson diffusion (the linear component) and a nonlinear ODE. The transition density of the multivariate Pearson component is approximated by a Gaussian distribution with exact first two moments, as derived in our moment formulas.
\item We prove that the proposed Strang splitting estimator is consistent and asymptotically normal.
\item We conduct a simulation study comparing our proposed estimator with the EM, GA, and LL methods and find that our estimator achieves substantially smaller estimation error.
\item We illustrate the applicability of the Strang splitting estimator and the Student Kramers oscillator by applying them to high-resolution Greenland ice-core data.
\end{enumerate}

The structure of the paper is as follows. Section \ref{sec:ProblemSetup} outlines the problem setup and introduces the multivariate Wright–Fisher diffusion and the Student Kramers oscillator as motivating examples. Section \ref{sec:MainResults} derives the first two moments of the multivariate Pearson diffusion models, presents the Strang splitting scheme and the resulting estimator, and states the asymptotic results. In Section \ref{sec:Simulations}, we report a simulation study comparing the performance of the proposed estimator with the EM, GA, and LL methods. Section \ref{sec:Greenland} applies the methodology and the Student Kramers oscillator model to Greenland ice-core data. Section \ref{sec:TechnicalDetails} contains further technical details and a description of the competing estimators used in the study, while all proofs of the main results are collected in the Supplementary Material.

\textbf{\emph{Notation}.} Capital bold letters denote random vectors, vector-valued functions, and matrices, lowercase bold letters denote deterministic vectors. $\|\cdot\|$ denotes the $L^2$ vector norm in $\mathbb{R}^d$. Superscript $(i)$ on a vector denotes the $i$-th component. The double subscript $ij$ on a matrix denotes the component in the $i$-th row and $j$-th column. $\tr(\cdot)$ returns the trace, $\det(\cdot)$ the determinant, $\top$ the transpose, $\vect$ vectorization, and $\diag(\cdot)$ either the diagonal matrix formed from a vector or the vector of diagonal entries of a matrix. The Kronecker product and the sum of two matrices are $\otimes$ and $\oplus$, respectively. $\mathbf{I}_d$ denotes the $d$-dimensional identity matrix, $\bm{0}_{d\times d}$ a $d\times d$-dimensional zero matrix, $\mathbf{1}$ a vector of ones. We denote by $[a_i]_{i=1}^d$ a vector with coordinates $a_i$, and by $[b_{ij}]_{i,j=1}^d$ a matrix with coordinates $b_{ij}$. For a function $\bm{g}$, $\partial_{x^{(i)}} \bm{g}(\mathbf{x})$ denotes the partial derivative with respect to $x^{(i)}$ and $\partial_{x^{(i)}x^{(j)}}^2 \bm{g}(\mathbf{x})$ with respect to $x^{(i)}$ and $x^{(j)}$. The nabla operator $\nabla_{\mathbf{x}}$ denotes the gradient vector of real-valued $g$ with respect to $\mathbf{x}$; $\nabla_{\mathbf{x}} g(\mathbf{x}) = [\partial_{x^{(i)}} g(\mathbf{x})]_{i=1}^d$. For function $\mathbf{F}: \mathbb{R}^d \to \mathbb{R}^d$, $D$ denotes the Jacobian matrix $D\mathbf{F}(\mathbf{x}) = [\partial_{x^{(i)}} F^{(j)}(\mathbf{x})]_{i,j=1}^d$. $\mathbf{R}$ represents a vector (or matrix) valued function defined on $(0,1) \times \mathbb{R}^d$ (or $(0,1) \times \mathbb{R}^{d\times d}$), such that, for some constant $C$, $\|\mathbf{R}(a,\mathbf{x})\| < aC(1+\|\mathbf{x}\|)^C$ for all $a, \mathbf{x}$; $R$ refers to a scalar function. $\overline{A}$ denotes the closure of an open set $A$. $\delta_{ij}$ denotes the Kronecker delta. We write $\mathbf{X} \sim \mathcal{N}_d(\bm{\mu}, \bm{\Omega})$ for  $\mathbf{X}$ a $d$-dimensional Gaussian with mean $\bm{\mu}$ and covariance $\bm{\Omega}$; in general, $\sim$ denotes distribution. $\propto$ denotes proportionality up to a constant, and $\approx$ denotes a numerical approximation. We write $\xrightarrow{\mathbb{P}}, \xrightarrow{d}, \xrightarrow{a.s.}$ for convergence in probability, in distribution, and almost surely, respectively. 

\section{Problem setup} \label{sec:ProblemSetup}

Let $\overline{\Theta} = \overline{\Theta}_{\theta_1} \times \overline{\Theta}_{\theta_2}$, where $\Theta_{\theta_1} \subset \mathbb{R}^r$ and $\Theta_{\theta_2} \subset \mathbb{R}^s$ are open, convex, and bounded. We consider the $d$-dimensional SDE parameterized by $\bm{\theta} = (\bm{\theta}^{(1)}, \bm{\theta}^{(2)}) \in \Theta$
\begin{equation}
    \dif \mathbf{X}_t = \mathbf{F}(\mathbf{X}_t; \bm{\theta}^{(1)}) \dif t + \bm{\Sigma}(\mathbf{X}_t; \bm{\theta}^{(2)}) \dif \mathbf{W}_t, 
    \qquad \mathbf{X}_0 = \mathbf{x}_0 \in \mathcal{X} \subset \mathbb{R}^d.
    \label{eq:SDE}
\end{equation}
For each $\bm{\theta} \in \Theta$, let $\mathbf{X} = (\mathbf{X}_t)_{t \geq 0}$ be a strong solution of \eqref{eq:SDE} (assumed to exist) on a complete filtered probability space $(\Omega, \mathcal{F}, (\mathcal{F}_t)_{t \geq 0}, \mathbb{P}_{\bm{\theta}})$, where $\mathbf{W} = (\mathbf{W}_t)_{t \geq 0}$ is a $d$-dimensional Wiener process adapted to $(\mathcal{F}_t)_{t \geq 0}$ under $\mathbb{P}_{\bm{\theta}}$.

We assume that
\begin{equation}
    [\bm{\Sigma}\bm{\Sigma}^\top(\mathbf{x}; \bm{\theta}^{(2)})]_{ij} \coloneqq[\bm{\Sigma}(\mathbf{x}; \bm{\theta}^{(2)})\bm{\Sigma}(\mathbf{x}; \bm{\theta}^{(2)})^\top]_{ij}
    = \mathbf{x}^\top \bm{\alpha}^{ij} \mathbf{x} + \mathbf{x}^\top \bm{\beta}^{ij} + \gamma^{ij},
    \qquad i,j = 1,\dots,d,
    \label{eq:SigmaSigmaT}
\end{equation}
where $\bm{\alpha}^{ij} \in \mathbb{R}^{d \times d}$, $\bm{\beta}^{ij} \in \mathbb{R}^d$, and $\gamma^{ij} \in \mathbb{R}$ satisfy $\bm{\alpha}^{ij} = \bm{\alpha}^{ji}$, $\bm{\beta}^{ij} = \bm{\beta}^{ji}$, and $\gamma^{ij} = \gamma^{ji}$$.$ Without loss of generality, matrices $\bm{\alpha}^{ij}$ can be chosen to be symmetric. Most entries of these matrices, vectors, and constants are fixed by the model, for example $0$, $1$, or $-1$, so the number of free diffusion parameters is typically much smaller than the total number of coefficients (see Remark~\ref{rmrk:DimensionalitySigma} below).

Since any square root of $\bm{\Sigma}\bm{\Sigma}^\top(\mathbf{x}; \bm{\theta}^{(2)})$ induces the same distribution, $\bm{\Sigma}$ is identifiable only up to equivalence classes, and we therefore work directly with $\bm{\Sigma}\bm{\Sigma}^\top(\mathbf{x}; \bm{\theta}^{(2)})$. For $d=1$, Eq.~\eqref{eq:SigmaSigmaT} reduces to Eq.~\eqref{eq:Pearson1D}. 
The diffusion parameter $\bm{\theta}^{(2)} \in \mathbb{R}^s$ consists of the free coefficients in \eqref{eq:SigmaSigmaT}. 

We write the drift as a sum of a linear and a nonlinear part 
and rewrite \eqref{eq:SDE} as
\begin{align}
     \dif \mathbf{X}_t
     = \mathbf{A}(\bm{\theta}^{(1)})(\mathbf{X}_t - \mathbf{b}(\bm{\theta}^{(1)})) \dif t
     + \mathbf{N}(\mathbf{X}_t; \bm{\theta}^{(1)}) \dif t
     + \bm{\Sigma}(\mathbf{X}_t; \bm{\theta}^{(2)}) \dif \mathbf{W}_t.
     \label{eq:SDEsplitted}
\end{align}

We denote the true parameter by $\bm{\theta}_0 = (\bm{\theta}_0^{(1)}, \bm{\theta}_0^{(2)}) \in \Theta$. 
We write $\mathbf{A}_0$, $\mathbf{b}_0$, $\mathbf{N}_0(\mathbf{x})$, and $\bm{\Sigma}\bm{\Sigma}_0^\top(\mathbf{x})$ instead of $\mathbf{A}(\bm{\theta}_0^{(1)})$, $\mathbf{b}(\bm{\theta}_0^{(1)})$, $\mathbf{N}(\mathbf{x}; \bm{\theta}_0^{(1)})$, and $\bm{\Sigma}\bm{\Sigma}^\top(\mathbf{x}; \bm{\theta}_0^{(2)})$. For a general parameter $\bm{\theta}$, we often simply write $\mathbf{A}$, $\mathbf{b}$, $\mathbf{N}(\mathbf{x})$, and $\bm{\Sigma}\bm{\Sigma}^\top(\mathbf{x})$. We also suppress the parameter in the expectation and write $\mathbb{E}$ instead of $\mathbb{E}_{\bm{\theta}}$ when there is no ambiguity.

\subsection{Examples}

We now present two examples of SDEs that fit the above framework. 
The first is a multivariate Wright--Fisher diffusion with Pearson-type multiplicative noise corresponding to Jacobi diffusions, restricted to a hypercube. 
The associated multivariate Pearson diffusion has an invariant generalized multivariate beta distribution. 
The second example is an extension of the Kramers oscillator, which we refer to as the Student Kramers oscillator (the Duffing oscillator with a skew $t$-distribution type noise), a hypoelliptic model with two quasi-stationary states.

\subsubsection{Multivariate Wright--Fisher diffusion} \label{sec:WF}

\begin{figure}
    \centering
    \includegraphics[width=\textwidth]{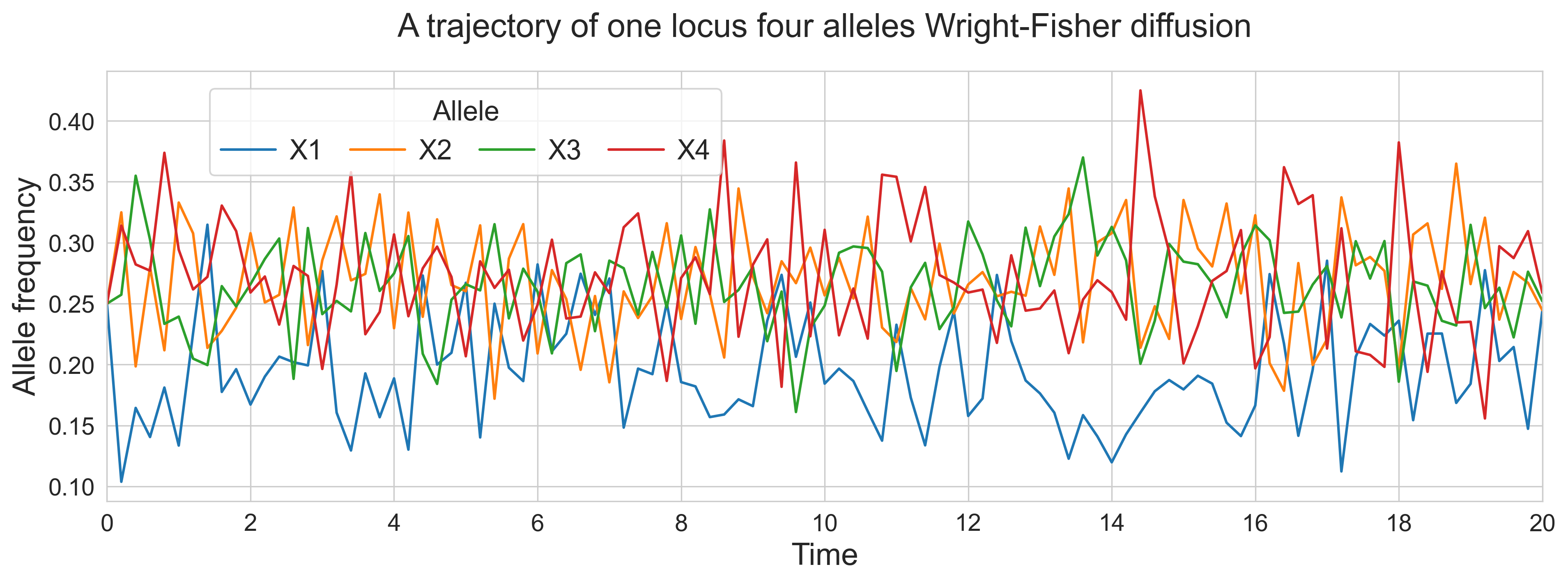}
    \caption{{\bf A trajectory of the Wright-Fisher diffusion.} Simulated from \eqref{eq:WF_final_vec} using the Euler-Maruyama scheme with $h_{\text{sim}} = 0.0001$ and $N_{\text{sim}} = 200000$ and then subsampled such that $h = 0.2$ and $N = 100$. Starting point is $\mathbf{X}_0 = (1/4, 1/4, 1/4, 1/4)$. Parameters are set to $\tau^{0} = 10$, $p_{11}^0 = 0.2$, $p_{12}^0 = 0.3$, $p_{13}^0 = 0.15$, $p_{21}^0 = 0.2$, $p_{22}^0 = 0.05$, $p_{23}^0 = 0.35$, $p_{31}^0 = 0.25$, $p_{32}^0 = 0.6$, $p_{33}^0 = 0.1$, $p_{41}^0 = 0.15$, $p_{42}^0 = 0.1$, $p_{43}^0 = 0.1$, $q_1^0 = 25$, $q_2^0 = 40$, $q_3^0 = 30$ and $q_4^0 = 10$.}
    \label{fig:WF_traj_plot}
\end{figure}

The multivariate Wright--Fisher diffusion describes the evolution of allele frequencies across $L$ loci \citep{Aurell2019}. Let $M_l$ be the number of alleles at locus $l$ and $M = \sum_{l=1}^L M_l$. The state vector $\mathbf{X}_t = (\bm{X}_t^{(1)\top}, \ldots, \bm{X}_t^{(L)\top})^\top \in \mathcal{X} \subset \mathbb{R}^M$ collects the allele frequencies at all loci, where $\mathcal{X}$ is the product of simplices defined by $X_t^{(li)} \geq 0$ and $\sum_{j=1}^{M_l} X_t^{(lj)} = 1$ for each $l$. The dynamics are governed by
\begin{equation}
    \dif \mathbf{X}_t = \bm{\mu}(\mathbf{X}_t) \dif t + \bm{\Sigma}\bm{\Sigma}^\top(\mathbf{X}_t) \nabla V(\mathbf{X}_t) \dif t + \bm{\Sigma}(\mathbf{X}_t) \dif \mathbf{W}_t, 
    \label{eq:SDEWF}
\end{equation}
where $\bm{\mu}(\mathbf{X})$ is the mutation drift, $V(\mathbf{X})$ is a quadratic selection potential so that $\bm{\Sigma}\bm{\Sigma}^\top(\mathbf{X}) \nabla V(\mathbf{X})$ represents selection. The diffusion matrix is block-diagonal
\begin{equation}
    \bm{\Sigma}\bm{\Sigma}^\top_l(\bm{X}^{(l)}) = \diag(\bm{X}^{(l)}) - \bm{X}^{(l)}\bm{X}^{(l)\top},
    \label{eq:SigmaSigmaT_final_WF}
\end{equation}
with one block per locus, so each locus evolves on its simplex with Wright--Fisher-type noise. We verify that the squared diffusion matrix  \eqref{eq:SigmaSigmaT_final_WF} is of the form \eqref{eq:SigmaSigmaT}. The component form of \eqref{eq:SigmaSigmaT_final_WF} for locus $l$ is
\begin{equation}\label{eq:SigmaSigmaTl}
    [\bm{\Sigma}\bm{\Sigma}^\top_l(\bm{X}^{(l)})]_{ij} = \begin{cases} X^{(li)}(1 - X^{(li)}) & \text{if } i = j \\ -X^{(li)}X^{(lj)} & \text{if } i \neq j \end{cases}, \qquad i,j = 1,\ldots,M_l.
\end{equation}
Matching \eqref{eq:SigmaSigmaTl} against \eqref{eq:SigmaSigmaT}, we identify the coefficients $\bm{\alpha}^{(l)ij} \in \mathbb{R}^{M_l \times M_l}$ from
\begin{equation*}
    \sum_{m = 1}^{M_l}\sum_{n=1}^{M_l} \alpha^{(l)ij}_{mn} x^{(lm)} x^{(ln)} = -x^{(li)}x^{(lj)}, \qquad \forall \mathbf{x} \in \mathcal{X}.
\end{equation*}
This gives $\alpha^{(l)ij}_{ij} + \alpha^{(l)ij}_{ji} = -1$ for $i \neq j$ and $\alpha^{(l)ii}_{ii} = -1$ for $i = j$. A natural symmetric choice is
\begin{equation} \label{eq:alpha_l_ij}
    \bm{\alpha}^{(l)ij} = \left[-\frac{1}{2}(\delta_{mi}\delta_{nj} + \delta_{ni}\delta_{mj})\right]_{m,n=1}^{M_l}.
\end{equation}
The linear coefficient $\bm{\beta}^{(l)ij}$ is zero everywhere except at $m = i = j$, where it equals $1$,
\begin{equation}\label{eq:beta_l_ij}
    \bm{\beta}^{(l)ij} = [\delta_{mi}\delta_{mj}]_{m=1}^{M_l} = \delta_{ij}\mathbf{1},
\end{equation}
and $\gamma^{(l)ij} = 0$. 

The drift of \eqref{eq:SDEWF} is a sum of two functions, $\bm{\mu}(\mathbf{X})$ and $\bm{\Sigma}\bm{\Sigma}^\top (\mathbf{X})\nabla V(\mathbf{X})$. The function $\bm{\mu}$ represents the mutation dynamics. It is assumed that mutations occur independently at each locus. Specifically, for the $l$-th locus,
\begin{equation*}
    \bm{\mu}^{(l)}(\bm{X}^{(l)}) 
    = \frac{\tau_l}{2}\bigl(\bm{P}^{(l)\top} - \mathbf{I}\bigr)\bm{X}^{(l)},
\end{equation*}
where $\tau_l \geq 0$ is the mutation rate. $\bm{P}^{(l)}$ is the mutation  probability matrix at locus $l$ given by
\begin{equation}\label{eq:P}
    \bm{P}^{(l)} = [p^{(l)}_{ij}]_{i,j=1}^{M_l}, \qquad p^{(l)}_{ij} \geq 0, \qquad \sum_{j=1}^{M_l} p^{(l)}_{ij} = 1, \qquad i = 1,\ldots,M_l.
\end{equation}

The function $\bm{\Sigma}\bm{\Sigma}^\top (\mathbf{X})\nabla V(\mathbf{X})$ represents selection for the current allele type at the current locus. The fitness potential $V(\mathbf{X}_t)$ is explicitly constructed such that the coupled Wright-Fisher diffusion \eqref{eq:SDEWF} treats the effects of selection and mutation as independent mechanisms, ignoring their cross-effects, implying that $\bm{\Sigma}\bm{\Sigma}^\top (\mathbf{X})\nabla V(\mathbf{X})$ includes at most pairwise interactions between different loci and their allele types. Thus, $V(\mathbf{X})$ is given by
\begin{equation*}
    V(\mathbf{X}) = \mathbf{X}^\top \mathbf{q} + \frac{1}{2} \mathbf{X}^\top \mathbf{S} \mathbf{X}
\end{equation*} 
where $\mathbf{q} \in \mathbb{R}_+^M$ is the locus selection parameter, and $\mathbf{S} \in \mathbb{R}_+^{M \times M}$ is the interaction matrix, a symmetric block matrix with non-negative entries such that the blocks on the diagonal equal to zero. The gradient is then $\nabla V(\mathbf{X}) = \mathbf{q} + \mathbf{S} \mathbf{X}$.

To test our method in Section \ref{sec:WFsimulation}, we chose a simple one-locus model with four alleles, i.e., $L = 1$ and $M = 4$. Since $L = 1$, then $\mathbf{S} = \mathbf{0}$, so the final model is 
\begin{equation} \label{eq:WF_final_vec}
    \begin{aligned}
    \dif \mathbf{X}_t &= \frac{\tau}{2} (\bm{P}^{\top} - \mathbf{I}) \mathbf{X}_t\dif t + (\diag(\mathbf{X}_t) - \mathbf{X}_t\mathbf{X}_t^\top) \mathbf{q}\dif t + \bm{\Sigma}(\mathbf{X}_t) \dif \mathbf{W}_t,
\end{aligned}
\end{equation}
where $\mathbf{q} = (q_1, q_2, q_3, q_4)^\top \in \mathbb{R}_+^4$, $\bm{P}\in \mathbb{R}^{4\times 4}_+$. Thus, the unknown parameter is 
\begin{equation*}
    \bm{\theta} = (\tau, \mathbf{q}, \mathbf{P}).
\end{equation*}

Figure~\ref{fig:WF_traj_plot} shows a trajectory of this Wright--Fisher diffusion. The trajectory is simulated using the Euler--Maruyama scheme with $h_{\text{sim}} = 0.0001$ and $N_{\text{sim}} = 200000$, and then subsampled such that $h = 0.2$ and $N = 100$.

\subsubsection{The Student Kramers oscillator} \label{sec:Kramers}

\begin{figure}
    \centering
    \includegraphics[width=\textwidth]{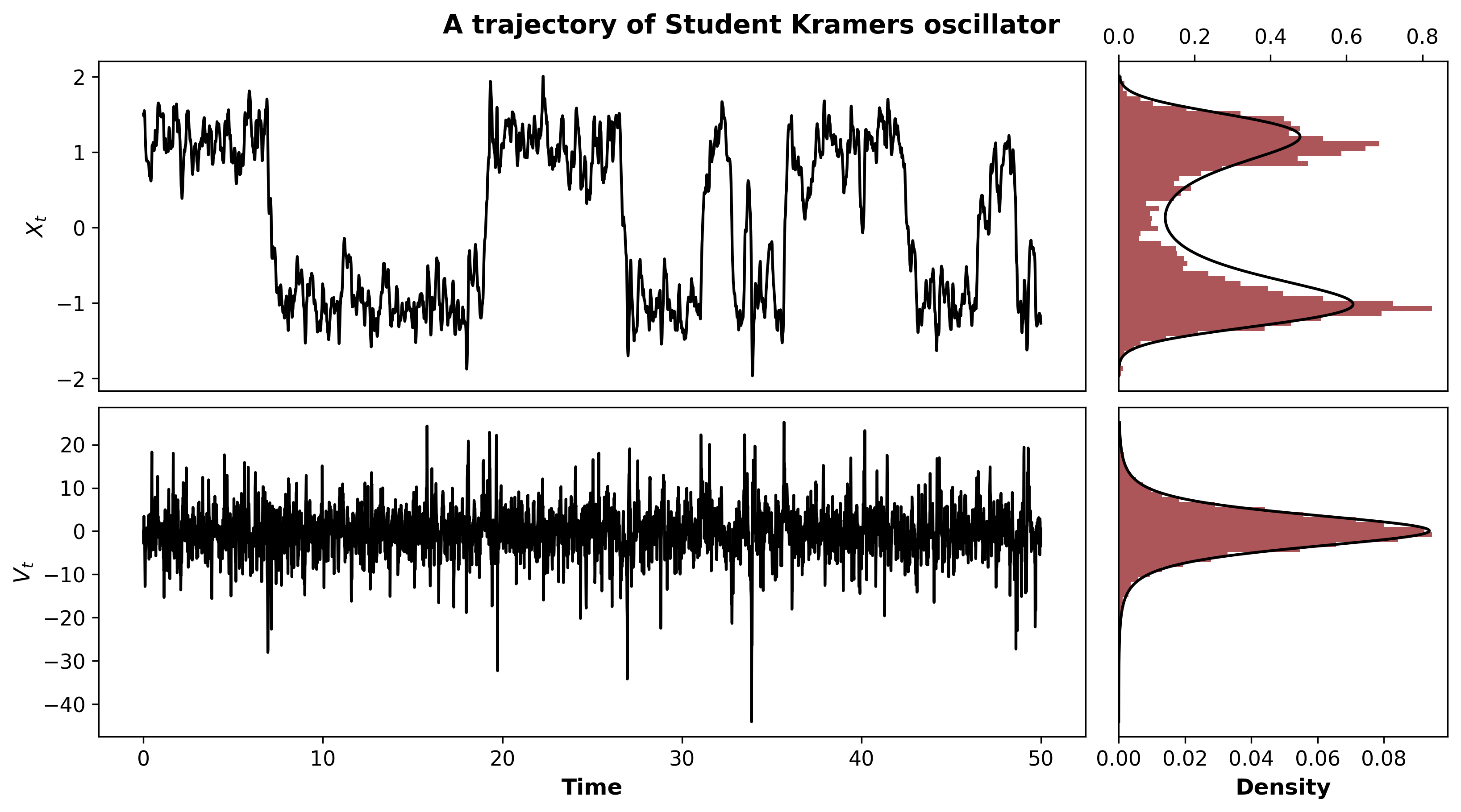}
    \caption{{\bf A trajectory of Student Kramers oscillator.} Simulated from \eqref{eq:StudentKramersSDE} using the Milstein scheme with $h_{\text{sim}} = 0.0001$ and $N_{\text{sim}} = 500000$ and then subsampled such that $h = 0.001$ and $N = 50000$. Parameters are $\eta_0 = 30$, $a_0 = -125$, $b_0 = 40$, $c_0 = 150$, $d_0 = -20$, $\alpha_0 = 20$, $\beta_0 = -8$, and $\gamma_0 = 1280.8$. \textbf{Left:} Trajectories of the individual components $X_t$ and $V_t$. \textbf{Right:} Histograms of empirical densities of $X_t$ and $V_t$ shown in red, overlined by the approximated theoretical invariant densities \eqref{eq:pi_X_fin}-\eqref{eq:pi_V_fin}. }
    \label{fig:traj_plot}
\end{figure}

The Kramers oscillator with additive noise is given by \citep{Kramers1940BrownianTransitionState}
\begin{equation} \label{eq:KramersSDE}
    \begin{aligned}
        \dif X_t &= V_t \dif t, \\
        \dif V_t &= \left(-\eta V_t + a X_t^3 + c X_t \right) \dif t + \sigma \dif W_t,
    \end{aligned}
\end{equation}
where $a<0$ and $c, \sigma > 0$, and $\eta \geq 0$. This is a stochastic damping Hamiltonian system. It characterizes the stochastic movement of a particle within a bistable potential
\begin{equation}
    U(x) = -a \frac{x^4}{4} - c \frac{x^2}{2}. \label{eq:KramersPotential}
\end{equation}
The parameter $\eta$ in \eqref{eq:KramersSDE} indicates the damping level, $c$ regulates the linear stiffness, and $a$ determines the nonlinear component of the restoring force. When $a = 0$, the equation simplifies to a damped harmonic oscillator.

In \cite{pilipovic2024SecondOrder}, we fitted the Kramers oscillator \eqref{eq:KramersSDE} to Greenland Ice Core data \citep{Rasmussen2014} to understand the abrupt temperature changes during the ice ages, known as the Dansgaard–Oeschger (DO) events. We found that this model only partially fits the data. In particular, the velocity variable $V_t$ did not adequately capture the spread in the approximated velocity. This discrepancy is likely due to the data being heavy-tailed, which cannot be captured by additive Gaussian noise. Here, we generalize model \eqref{eq:KramersSDE} to allow for more heavy-tailed intrinsic noise.

We propose a diffusion function that yields the stationary distribution to be a skew $t$-distribution if the drift function were linear. Specifically, the invariant distribution of the Pearson diffusion \eqref{eq:Pearson1D} for $\alpha > 0$, $\beta^2 - 4 \alpha \gamma < 0$, $\alpha < 2 \lambda $ is a skew $t$-distribution (also known as the Pearson Type IV distribution) with density 
\begin{equation}
\label{eq:skew_student}
\begin{aligned}
    f(x) &= C \left(1 + \frac{(x - \mu)^2}{\nu \sigma^2}\right)^{-\frac{\nu+1}{2}} \exp\left( \omega \arctan\left(\frac{x - \mu}{\sqrt{\nu\sigma^2}}\right) \right), \qquad x \in \mathbb{R},
\end{aligned}
\end{equation}
where $C$ is the normalization constant, and parameters are
\begin{equation*}
       \nu = \frac{2 \lambda}{\alpha} + 1, \quad \mu = - \frac{\beta}{2 \alpha}, \quad \nu\sigma^2 = \frac{4\alpha\gamma - \beta^2}{4\alpha^2}, \quad \omega = \frac{2\beta \lambda}{\alpha \sqrt{4\alpha\gamma - \beta^2}}.
\end{equation*}
The skew $t$-distribution is peaked around the location parameter $\mu$. The peak width is determined by the scale parameter $\sigma > 0$. The shape parameter, or degrees of freedom, $\nu$, influences the heaviness of the tails, while $\omega$ controls the skewness of the distribution. For symmetric noise ($\beta = 0$), the skewness parameter vanishes ($\omega = 0$), and the distribution reduces to a standard location-scale Student's $t$-distribution.

We also generalize the potential function $U(x)$ \eqref{eq:KramersPotential} to allow for asymmetries between the two modes of the distribution. The generalized potential function is 
\begin{equation}
    U(x) = -a \frac{x^4}{4} - b \frac{x^3}{3} - c \frac{x^2}{2} - d x. \label{eq:StudentKramersPotential}
\end{equation}
Finally, we define the Student Kramers oscillator (see Figure \ref{fig:traj_plot} for a simulated trajectory) as a solution to 
\begin{equation} \label{eq:StudentKramersSDE}
    \begin{aligned}
        \dif X_t &= V_t \dif t, \\
        \dif V_t &= \left(-\eta V_t +a X_t^3 + b X_t^2 + c X_t + d \right) \dif t + \sqrt{\alpha V_t^2 + \beta V_t + \gamma} \dif W_t,
    \end{aligned}
\end{equation}
where $\eta \geq 0$, $a < 0$, $\beta^2 < 4 \alpha \gamma$, $0<\alpha < 2 \eta$, and $X_t, V_t$ are defined on $\mathbb{R}$. These conditions are necessary to ensure a non-exploding solution. We refer to an SDE as having Student-type noise if its corresponding version with linear drift belongs to the class of Student Pearson diffusions \citep{FormanSorensen2008, Leonenko2012}. The condition $\alpha < 2 \eta$ is sufficient for an invariant density to exist. Thus, the unknown parameters are 
\begin{equation*}
    \bm{\theta}^{(1)} = (\eta, a, b, c, d), \qquad \bm{\theta}^{(2)} = (\alpha, \beta, \gamma).
\end{equation*}
The squared diffusion function of \eqref{eq:StudentKramersSDE} is 
\begin{equation*}
    \bm{\Sigma}\bm{\Sigma}^\top(x,v) = \begin{bmatrix}
        0 & 0 \\
        0 & \alpha v^2 + \beta v + \gamma
    \end{bmatrix},
\end{equation*}
so it aligns with the structure given in \eqref{eq:SigmaSigmaT}, where  $\bm{\alpha}^{11},\bm{\alpha}^{12}, \bm{\alpha}^{21}$ are zero matrices in $\mathbb{R}^{2\times 2}$,  $\bm{\beta}^{11},\bm{\beta}^{12}, \bm{\beta}^{21}$ are zero vectors in $\mathbb{R}^2$, $\gamma^{11} = \gamma^{12} = \gamma^{21} =0$ and 
\begin{equation*}
    \bm{\alpha}^{22} = \begin{bmatrix}
        0 & 0\\
        0 & \alpha
    \end{bmatrix}, \qquad \bm{\beta}^{22} = \begin{bmatrix}
        0\\
        \beta
    \end{bmatrix}, \qquad \gamma^{22} = \gamma.
\end{equation*}
The following theorem is proved in Supplementary Material \ref{sec:Proofs}. 

\begin{theorem} \label{thm:StudentKramers}
A unique, strong solution exists to the Student Kramers SDE \eqref{eq:StudentKramersSDE} for $\eta \geq 0$, $a < 0$, $\alpha > 0$, $\beta^2 - 4 \alpha \gamma < 0$. Furthermore, an invariant probability measure exists for $\alpha < 2 \eta$.
\end{theorem}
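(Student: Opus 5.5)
We treat the two parts separately: a Lyapunov argument on top of local Lipschitz continuity gives global strong existence and uniqueness, and a Krylov--Bogoliubov argument with a Foster--Lyapunov drift condition gives the invariant measure.

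\textbf{Existence and uniqueness.} The drift $(v, -\eta v + a x^3 + b x^2 + c x + d)$ is a polynomial, hence locally Lipschitz. Since $\alpha>0$ and $\beta^2-4\alpha\gamma<0$, we have $\alpha v^2+\beta v+\gamma \ge \gamma-\beta^2/(4\alpha) > 0$ for all $v$. So $\sigma(v)=\sqrt{\alpha v^2+\beta v+\gamma}$ is smooth and bounded away from zero. It is also globally Lipschitz, because $\sigma'(v)=(2\alpha v+\beta)/(2\sigma(v))$ is bounded.

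Standard theory then gives a unique strong solution up to the explosion time $\tau_\infty=\lim_n \tau_n$, where $\tau_n$ is the exit time from the ball of radius $n$. To show $\tau_\infty=\infty$ a.s., we use the Hamiltonian Lyapunov function
\[
H(x,v)=\tfrac12 v^2+U(x)+K,
\]
with $U$ from \eqref{eq:StudentKramersPotential}. Since $a<0$, $U$ grows like $|a|x^4/4$, so we can pick $K$ with $H\ge 1$, and $H$ is then coercive. Applying the generator $\mathcal{L}$ of \eqref{eq:StudentKramersSDE}, the $x$-derivatives cancel against the force term, which is exactly $-U'(x)$. This leaves
\[
\mathcal{L}H=-\eta v^2+\tfrac12(\alpha v^2+\beta v+\gamma)\le C_1+C_2 v^2\le C(1+H).
\]
Dynkin's formula on $[0,t\wedge\tau_n]$ with Gronwall's lemma gives $\mathbb{E}H(X_{t\wedge\tau_n},V_{t\wedge\tau_n})\le H(x_0,v_0)e^{Ct}$. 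Hence $\mathbb{P}(\tau_n\le t)\le H(x_0,v_0)e^{Ct}/\inf_{\|z\|=n}H(z)\to0$, which proves non-explosion and so global strong existence and uniqueness. This part holds for every $\eta\ge0$.

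\textbf{Invariant measure for $\alpha<2\eta$.} Since $\mathcal{L}H\le -(\eta-\alpha/2)v^2+\beta v/2+\gamma/2$, the dissipation acts only in $v$, whereas we need a function $\widetilde H$ with
\[
\mathcal{L}\widetilde H\le -c\,\widetilde H+C \quad\text{or at least}\quad \mathcal{L}\widetilde H\le -\phi(x,v)+C \text{ with } \phi\to\infty.
\]
We will use the standard Langevin trick (Mattingly--Stuart--Higham; Wu's work on damping Hamiltonian systems) and add the cross term
\[
\widetilde H=H+\varepsilon\, x v+\varepsilon'\eta x^2/2 .
\]
The generator gives $\mathcal{L}(xv)=v^2+x(-\eta v+ax^3+bx^2+cx+d)$. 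Its term $\varepsilon a x^4$ is negative and dominates the lower-order $x$ terms. The mixed terms $\varepsilon\eta xv$ are cancelled by choosing $\varepsilon'=\varepsilon$, since $\mathcal{L}(\eta x^2/2)=\eta xv$. For $\varepsilon$ small, $\widetilde H$ is still coercive and comparable to $H$. Then
\[
\mathcal{L}\widetilde H\le -(\eta-\alpha/2-\varepsilon)v^2-\varepsilon|a|x^4/2+C\le -c_0\widetilde H+C_0 ,
\]
where the second inequality uses that $H\lesssim v^2+x^4+1$. This makes it essential that $\alpha<2\eta$, so that the $v^2$ coefficient stays negative.

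From this, $\sup_t\mathbb{E}\widetilde H(X_t,V_t)<\infty$, so the time-averaged laws $\frac1T\int_0^T \mathbb{P}((X_t,V_t)\in\cdot)\,\dif t$ are tight by coercivity of $\widetilde H$. The process is Feller because the coefficients are locally Lipschitz and there is no explosion. Krylov--Bogoliubov then yields an invariant probability measure.

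The main obstacle is this choice of $\varepsilon$ and the verification of the drift inequality. In particular, we must check that the multiplicative noise term $\tfrac12\sigma^2\partial_v^2\widetilde H=\tfrac12(\alpha v^2+\beta v+\gamma)$ is absorbed exactly by the margin $\eta-\alpha/2>0$, and that the lower-order terms from $b,c,d$ are dominated by the quartic term.
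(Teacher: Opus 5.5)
Your argument is correct, but it takes a different route from the paper's.

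The paper tries to verify the global monotonicity conditions (A2)--(A3) directly. It argues that $\sigma$ is Lipschitz and that $\mathbf{F}$ is one-sided Lipschitz because the entries of $D\mathbf{F}$ are bounded above. For the invariant measure, it checks H\"ormander's condition to get hypoellipticity and the strong Feller property, then invokes Meyn--Tweedie (Theorem 4.5) with the bare Hamiltonian $V=\tfrac12 v^2+U(x)$. You instead use local Lipschitz continuity plus a Lyapunov non-explosion argument, and then Krylov--Bogoliubov with the modified function $\widetilde H$.

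Your route is more elementary and works for every $\eta\ge 0$. It also avoids two steps in the paper that do not hold as written:
\begin{itemize}
\item First, $\mathbf{F}$ is not globally one-sided Lipschitz. For $z=(x,0)$ and $z'=(0,1)$ one gets $(z-z')^\top(\mathbf{F}(z)-\mathbf{F}(z'))=|a|x^3-bx^2-(1+c)x-\eta$, while $\|z-z'\|^2=x^2+1$. So localization plus a Lyapunov function, as you do, is what is actually needed.
\item Second, $\mathbb{L}V=(\alpha/2-\eta)v^2+\beta v/2+\gamma/2$ does not depend on $x$ and equals $\gamma/2>0$ on $\{v=0\}$. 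It therefore cannot be bounded by $-c_1\|\mathbf{x}\|^2$ outside a compact set. Your correction $\varepsilon xv+\varepsilon\eta x^2/2$ is what transfers the dissipation from $v$ to $x$.
\end{itemize}

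What the paper's hypoellipticity/strong Feller machinery offers beyond your argument is a path to uniqueness of the invariant measure (strong Feller plus irreducibility), which matters for (A4). Krylov--Bogoliubov gives only existence, but existence is all the theorem asserts.

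Some small bookkeeping remains on your side:
\begin{itemize}
\item With $H\ge 1$, the bound $\mathbb{L}H\le C(1+H)$ gives $e^{2Ct}$ rather than $e^{Ct}$.
\item The term $\beta v/2$ must be absorbed, via Young's inequality, into part of the margin $\eta-\alpha/2-\varepsilon$.
\item $\widetilde H$ should be shifted by a constant to be nonnegative before the stopping-time and Fatou argument that yields $\sup_t\mathbb{E}\,\widetilde H(X_t,V_t)<\infty$.
\end{itemize}
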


The invariant joint density $\pi(x,v)$ solves the stationary Fokker-Planck equation
\begin{equation} \label{eq:joint_fpe}
    -v \frac{\partial \pi}{\partial x} - \frac{\partial}{\partial v} \Big[ \big(-\eta v - U'(x)\big) \pi \Big] + \frac{1}{2} \frac{\partial^2}{\partial v^2} \Big[ \sigma^2(v) \pi \Big] = 0.
\end{equation}
Because $\sigma^2(v) = \alpha v^2 + \beta v + \gamma$ depends explicitly on $v$, the exact analytical solution for $\pi(x,v)$ is intractable. However, it can be approximated by decoupling the coordinates as $ \pi(x,v)\approx \pi_X(x)\pi_V(v)$, where
\begin{align} 
     \pi_X(x) &\propto  \exp\left( -\frac{2\eta - \alpha}{\gamma} U(x) \right), \label{eq:pi_X_fin}\\
    \pi_V(v) &\propto (\alpha v^2 + \beta v + \gamma)^{-\left(\frac{\eta}{\alpha} + 1\right)} \exp\left( \frac{2\beta \eta}{\alpha \sqrt{4\alpha\gamma - \beta^2}} \arctan\left( \frac{2\alpha v + \beta}{\sqrt{4\alpha\gamma - \beta^2}} \right) \right).\label{eq:pi_V_fin}
\end{align}
The justification for the above approximation is discussed in Supplementary Material \ref{sec:Proofs}.

\subsection{Assumptions} \label{sec:Assumptions}

We split the assumptions into three groups: model and well-posedness \ref{as:Diffusion}-\ref{as:F_polynomial_growth}, statistical regularity \ref{as:Ergodic}, and identifiability \ref{as:Identifiability}.

\begin{itemize}
    \myitem{(A1)} \label{as:Diffusion}
    For every $\bm{\theta} \in \overline{\Theta}$, $\bm{\Sigma}\bm{\Sigma}^\top(\mathbf{x}; \bm{\theta}^{(2)})$ is positive semidefinite on $\mathcal{X}$. If it is positive definite, the model is elliptic and all subsequent results hold directly. If it is only positive semidefinite, we additionally assume that \eqref{eq:SDE} is hypoelliptic, i.e., the solution admits a smooth density on $\mathcal{X}$.

    \myitem{(A2)} \label{as:monoton_condition}
    For all $\bm{\theta} \in \overline{\Theta}$ and sufficiently large $p \geq 1$, there exists a constant $C_{\bm{\theta}} > 0$ such that
    \begin{equation*}
        (\mathbf{x} - \mathbf{y})^\top \left(\mathbf{F}(\mathbf{x}; \bm{\theta}^{(1)}) 
        - \mathbf{F}(\mathbf{y}; \bm{\theta}^{(1)})\right)
        + \frac{2p-1}{2} \sum_{i=1}^d \| \bm{\Sigma}_{i\cdot}(\mathbf{x}; \bm{\theta}^{(2)}) 
        - \bm{\Sigma}_{i\cdot}(\mathbf{y}; \bm{\theta}^{(2)}) \|^2
        \leq C_{\bm{\theta}} \|\mathbf{x} - \mathbf{y}\|^2
    \end{equation*}
    for all $\mathbf{x}, \mathbf{y} \in \mathcal{X}$.

    \myitem{(A3)} \label{as:F_polynomial_growth}
    Function $\mathbf{F}$ is three times continuously differentiable with respect to $\mathbf{x}$ and $\bm{\theta}^{(1)}$ on $\mathcal{X} \times \overline{\Theta}_{\theta_1}$. All derivatives of $\mathbf{F}$ up to third order are of polynomial growth in $\mathbf{x}$, uniformly in $\bm{\theta}^{(1)} \in \overline{\Theta}_{\theta_1}$. In particular, there exist 
    constants $C_{\bm{\theta}^{(1)}} > 0$ and $p \geq 1$ such that
    \begin{equation*}
        \| \mathbf{F}(\mathbf{x}; \bm{\theta}^{(1)}) - \mathbf{F}(\mathbf{y}; 
        \bm{\theta}^{(1)}) \|^2
        \leq C_{\bm{\theta}^{(1)}} \bigl(1 + \|\mathbf{x}\|^{2p-2} + 
        \|\mathbf{y}\|^{2p-2}\bigr)\|\mathbf{x} - \mathbf{y}\|^2,
        \qquad \forall \mathbf{x}, \mathbf{y} \in \mathcal{X}.
    \end{equation*}
\end{itemize}

Assumptions \ref{as:monoton_condition} and \ref{as:F_polynomial_growth} together are sufficient, but not necessary, for existence and uniqueness of a strong solution to \eqref{eq:SDE} on $\mathcal{X}$; see \cite{TretyakovAndZhang}. Assumption \ref{as:monoton_condition} is stated in terms of the rows $\bm{\Sigma}_{i\cdot}$, whereas our model is parameterized through $\bm{\Sigma}\bm{\Sigma}^\top$. This makes the assumption restrictive, but it is a sufficient condition adopted for theoretical completeness. Polynomial growth of $\bm{\Sigma}\bm{\Sigma}^\top$ and its derivatives follows automatically from \eqref{eq:SigmaSigmaT}.

\begin{itemize}
    \myitem{(A4)} \label{as:Ergodic}
    Under $\bm{\theta}_0$, the solution $\mathbf{X}$ of \eqref{eq:SDE} has a unique invariant probability measure $\nu_0(\dif \mathbf{x})$ and satisfies the ergodic theorem: for any continuous $\mathbf{g}$ with polynomial growth
    \begin{equation*}
        \frac{1}{T} \int_0^T \mathbf{g}(\mathbf{X}_s; \bm{\theta}) \dif s
        \xrightarrow[T \to \infty]{a.s.}
        \int_{\mathcal{X}} \mathbf{g}(\mathbf{x}; \bm{\theta}) \dif \nu_0(\mathbf{x}).
    \end{equation*}
\end{itemize}

\begin{itemize}
    \myitem{(A5)} \label{as:Identifiability}
    If $\mathbf{F}(\mathbf{x}; \bm{\theta}^{(1)}) = \mathbf{F}(\mathbf{x}; \bm{\theta}_\star^{(1)})$ and $\bm{\Sigma}\bm{\Sigma}^\top(\mathbf{x}; \bm{\theta}^{(2)}) = \bm{\Sigma}\bm{\Sigma}^\top(\mathbf{x}; \bm{\theta}_\star^{(2)})$ for $\dif\nu_0( \mathbf{x})$-almost every $\mathbf{x} \in \mathcal{X}$, then $\bm{\theta}^{(1)} = \bm{\theta}_\star^{(1)}$ and $\bm{\theta}^{(2)} = \bm{\theta}_\star^{(2)}$.
\end{itemize}

We observe $(\mathbf{X}_{t_k})_{k=0}^N \equiv \mathbf{X}_{0:t_N}$ at equidistant times $0 = t_0 < t_1 < \cdots < t_N = T$ with step size $h = t_k - t_{k-1}$.

\section{Main results} \label{sec:MainResults}

We first introduce the notation needed for the computation of the first two moments of multivariate Pearson diffusions. Then, we define the Strang splitting scheme for a generalized class of models with nonlinear drift. Finally, we introduce the Strang splitting estimator and obtain asymptotic results.

In general, it is not possible to represent $\bm{\Sigma}\bm{\Sigma}^\top(\mathbf{x})$ in \eqref{eq:SigmaSigmaT} in matrix form without using tensors, so we vectorize $\bm{\Sigma}\bm{\Sigma}^\top(\mathbf{x})$. We use standard properties of the $\vect$ operator and the Kronecker product $\otimes$ (for details see \cite{Magnus2019}). Equation \eqref{eq:SigmaSigmaT} reads 
\begin{align*}
    [\bm{\Sigma}\bm{\Sigma}^\top(\mathbf{x})]_{ij} &= \vect(\bm{\alpha}^{ij})^\top \vect(\mathbf{x}\mathbf{x}^\top) + \bm{\beta}^{ij \top} \mathbf{x} + \gamma_{ij}.
\end{align*}
We express the vectorization of $\bm{\Sigma}\bm{\Sigma}^\top(\mathbf{x})$ as
\begin{align} \label{eq:SigmaSigmaT_vec}
    \vect(\bm{\Sigma}\bm{\Sigma}^\top(\mathbf{x})) &= \check{\bm{\alpha}} \vect(\mathbf{x}\mathbf{x}^\top) + \check{\bm{\beta}} \mathbf{x} + \check{\bm{\gamma}},
\end{align}
where we defined
\begin{equation} \label{eq:check_abc}
    \begin{aligned}
    \check{\bm{\alpha}} &\coloneqq [ \vect(\bm{\alpha}^{11})^\top,
        \dots,
        \vect(\bm{\alpha}^{1d})^\top,
        \dots,
        \vect(\bm{\alpha}^{dd})^\top]^\top \in \mathbb{R}^{d^2 \times d^2},\\
    \check{\bm{\beta}} &\coloneqq [\bm{\beta}^{11\top},
        \dots,
        \bm{\beta}^{1d\top},
        \dots,
        \bm{\beta}^{dd\top}]^\top \in \mathbb{R}^{d^2 \times d},\\
    \check{\bm{\gamma}} &\coloneqq (\gamma^{11\top},
        \dots,
        \gamma^{1d\top},
        \dots,
        \gamma^{dd\top})^\top = \vect(\bm{\gamma})^\top \in \mathbb{R}^{d^2}.
\end{aligned}
\end{equation}

\begin{remark} \label{rmrk:sym}
     Sometimes, the structure of $\check{\bm{\alpha}}$ in \eqref{eq:check_abc} is more complicated than if $\bm{\alpha}^{ij}$ is allowed to be asymmetric. For example, in the Wright-Fisher diffusion given in Section \ref{sec:WF}, we can choose an asymmetric $\bm{\alpha}^{ij}$ that corresponds to a diagonal $\check{\bm{\alpha}}$ (see more details in Supplementary Material \ref{sec:Implementation}).
\end{remark}

\begin{remark} \label{rmrk:DimensionalitySigma}
When symmetric, each matrix $\bm{\alpha}^{ij}$ has $d(d+1)/2$ coefficients. Therefore, $[\bm{\Sigma}\bm{\Sigma}^\top(\mathbf{x}; \bm{\theta}^{(2)})]_{ij}$ has $(d+1)(d+2)/2$ coefficients. Since $\bm{\Sigma}\bm{\Sigma}^\top(\mathbf{x}; \bm{\theta}^{(2)})$ is also symmetric, it has $d(d+1)^2(d+2)/4$ coefficients, which is of order $d^4$. However, many of the coefficients will be structural in dependence on the particular model and therefore known. For example, for Gaussian models, $\bm{\alpha}^{ij}$ and $\bm{\beta}^{ij}$ will be zero. Moreover, in most applications, the quadratic diffusion matrix will be diagonal or nearly diagonal, as will $\bm{\alpha}^{ij}$, where most quadratic interactions between the coefficients of the state $\mathbf{X}_t$ are zero. Consequently, $\bm{\alpha}^{ij}$, $\bm{\beta}^{ij}$, and $\gamma^{ij}$ will generally be sparse with few non-zero elements and with only a few unknown parameters to be estimated.
\end{remark}

\subsection{Moments of multivariate Pearson diffusions}

A multivariate Pearson diffusion is defined by \eqref{eq:SDEsplitted} with $\mathbf{N} = \mathbf{0}$, that is
\begin{equation}\label{eq:MultivariatePearson}
\begin{aligned}
    \mathbf{F}(\mathbf{x}) &= \mathbf{A}(\mathbf{x} - \mathbf{b}), \\
    [\bm{\Sigma}\bm{\Sigma}^\top(\mathbf{x})]_{ij} &= \mathbf{x}^\top \bm{\alpha}^{ij} \mathbf{x} + \mathbf{x}^\top \bm{\beta}^{ij} + \gamma^{ij}, \qquad i,j=1,2,...,d.
\end{aligned}
\end{equation}

The following theorem provides an analytic formula for the mean of a multivariate Pearson diffusion and a computational tool for explicitly deriving the covariance matrix. The proof can be found in Section \ref{sec:TechnicalDetails}.

\begin{theorem}[Mean and covariance matrix of multivariate Pearson diffusions] \label{thm:Covariance_matrix}
Let $\mathbf{X}_t$ be the strong solution of the SDE defined by the drift and diffusion coefficient in \eqref{eq:MultivariatePearson}. The mean vector $\mathbf{m}(t) = \mathbb{E}[\mathbf{X}_t]$ with initial conditions $\mathbf{m}(0)$ is given by
\begin{equation}
    \mathbf{m}(t) = \exp(\mathbf{A} t)(\mathbf{m}(0) - \mathbf{b}) + \mathbf{b}. \label{eq:MeanODESolution}
\end{equation}
The vectorized covariance matrix $\mathbf{C}(t) = \mathbb{E}[(\mathbf{X}_t - \mathbf{m}(t))(\mathbf{X}_t - \mathbf{m}(t))^\top]$, with initial conditions $\mathbf{m}(0), \mathbf{C}(0)$ is given by
\begin{equation}
\label{eq:vecCfinal}
\begin{aligned}
    \vect(\mathbf{C}(t)) &= \exp(( \mathbf{A} \oplus \mathbf{A} + \check{\bm{\alpha}})t) \vect(\mathbf{C}(0))\\
    &+ \mathbf{I}_1(t, \mathbf{A}, \check{\bm{\alpha}}) \vect((\mathbf{m}(0) - \mathbf{b})(\mathbf{m}(0) - \mathbf{b})^\top) +  
    \mathbf{I}_2(t, \mathbf{A}, \check{\bm{\alpha}}) \vect((\mathbf{m}(0) - \mathbf{b})\mathbf{b}^\top)\\
    &+  \mathbf{I}_3(t, \mathbf{A}, \check{\bm{\alpha}})  \vect(\mathbf{b}(\mathbf{m}(0) - \mathbf{b})^\top)
    +  \mathbf{I}_4(t, \mathbf{A}, \check{\bm{\alpha}}, \check{\bm{\beta}})  (\mathbf{m}(0) - \mathbf{b})
    +  \mathbf{I}_5(t, \mathbf{A}, \check{\bm{\alpha}}) \vect(\bm{\Sigma}\bm{\Sigma}^\top(\mathbf{b})),
\end{aligned}
\end{equation}
where
\begin{align}
    \mathbf{I}_1(t, \mathbf{A}, \check{\bm{\alpha}}) &\coloneqq \int_0^t \exp((\mathbf{A} \oplus \mathbf{A} + \check{\bm{\alpha}})(t-s)) \check{\bm{\alpha}} \exp((\mathbf{A} \oplus \mathbf{A}) s ) \dif s \in \mathbb{R}^{d^2 \times d^2},\label{eq:cov_int_1}\\
    \mathbf{I}_2(t, \mathbf{A}, \check{\bm{\alpha}}) &\coloneqq  \int_0^t \exp((\mathbf{A} \oplus \mathbf{A} + \check{\bm{\alpha}})(t-s)) \check{\bm{\alpha}} \exp((\mathbf{I} \oplus \mathbf{A}) s ) \dif s \in \mathbb{R}^{d^2 \times d^2},\label{eq:cov_int_2}\\
    \mathbf{I}_3(t, \mathbf{A}, \check{\bm{\alpha}})&\coloneqq \int_0^t \exp((\mathbf{A} \oplus \mathbf{A} + \check{\bm{\alpha}})(t-s)) \check{\bm{\alpha}} \exp((\mathbf{A} \oplus \mathbf{I}) s ) \dif s \in \mathbb{R}^{d^2 \times d^2},\label{eq:cov_int_3}\\
    \mathbf{I}_4(t, \mathbf{A}, \check{\bm{\alpha}}, \check{\bm{\beta}})&\coloneqq \int_0^t \exp((\mathbf{A} \oplus \mathbf{A} + \check{\bm{\alpha}})(t-s)) \check{\bm{\beta}} \exp(\mathbf{A} s ) \dif s \in \mathbb{R}^{d^2 \times d},\label{eq:cov_int_4}\\
    \mathbf{I}_5(t, \mathbf{A}, \check{\bm{\alpha}}) &\coloneqq \int_0^t \exp((\mathbf{A} \oplus \mathbf{A} + \check{\bm{\alpha}})(t-s)) \dif s \in \mathbb{R}^{d^2 \times d^2}.\label{eq:cov_int_5}
\end{align}
\end{theorem}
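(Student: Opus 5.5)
The plan is to derive closed linear ODEs for the first and second moments from It\^o's formula, solve them by variation of constants, and then reorganise the second-moment solution into the form \eqref{eq:vecCfinal}.

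\textbf{Mean.} Write the SDE in integral form and take expectations. The stochastic integral is a true martingale: the polynomial growth of $\bm{\Sigma}\bm{\Sigma}^\top$ in \eqref{eq:SigmaSigmaT}, together with the finiteness of polynomial moments of the strong solution on compact time intervals (from \ref{as:monoton_condition}), gives the required integrability. This yields $\mathbf{m}'(t)=\mathbf{A}(\mathbf{m}(t)-\mathbf{b})$. Setting $\mathbf{y}(t)=\mathbf{m}(t)-\mathbf{b}$ gives $\mathbf{y}'=\mathbf{A}\mathbf{y}$, hence \eqref{eq:MeanODESolution}.

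\textbf{Second moment equation.} Apply It\^o's formula to $\mathbf{X}_t\mathbf{X}_t^\top$, or more conveniently to $\mathbf{Y}_t\mathbf{Y}_t^\top$ with $\mathbf{Y}_t=\mathbf{X}_t-\mathbf{b}$:
\[
\dif(\mathbf{Y}_t\mathbf{Y}_t^\top)=\bigl(\mathbf{A}\mathbf{Y}_t\mathbf{Y}_t^\top+\mathbf{Y}_t\mathbf{Y}_t^\top\mathbf{A}^\top+\bm{\Sigma}\bm{\Sigma}^\top(\mathbf{X}_t)\bigr)\dif t+\text{martingale}.
\]
Vectorise using $\vect(\mathbf{A}\mathbf{M}+\mathbf{M}\mathbf{A}^\top)=(\mathbf{A}\oplus\mathbf{A})\vect(\mathbf{M})$ and \eqref{eq:SigmaSigmaT_vec}. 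Expand $\mathbf{X}=\mathbf{Y}+\mathbf{b}$ in $\check{\bm{\alpha}}\vect(\mathbf{x}\mathbf{x}^\top)+\check{\bm{\beta}}\mathbf{x}+\check{\bm{\gamma}}$; the pure $\mathbf{b}$ terms recombine into $\vect(\bm{\Sigma}\bm{\Sigma}^\top(\mathbf{b}))$. With $\mathbf{S}(t)=\mathbb{E}[\mathbf{Y}_t\mathbf{Y}_t^\top]$ this gives the linear system
\[
\tfrac{\dif}{\dif t}\vect\mathbf{S}=(\mathbf{A}\oplus\mathbf{A}+\check{\bm{\alpha}})\vect\mathbf{S}+\check{\bm{\alpha}}\vect(\mathbf{y}\mathbf{b}^\top+\mathbf{b}\mathbf{y}^\top)+\check{\bm{\beta}}\mathbf{y}+\vect(\bm{\Sigma}\bm{\Sigma}^\top(\mathbf{b})),
\]
where $\mathbf{y}(t)=e^{\mathbf{A}t}\mathbf{y}(0)$.

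\textbf{Covariance equation.} Since $\mathbf{C}=\mathbf{S}-\mathbf{y}\mathbf{y}^\top$ and $\tfrac{\dif}{\dif t}\vect(\mathbf{y}\mathbf{y}^\top)=(\mathbf{A}\oplus\mathbf{A})\vect(\mathbf{y}\mathbf{y}^\top)$, subtracting gives
\[
\tfrac{\dif}{\dif t}\vect\mathbf{C}=(\mathbf{A}\oplus\mathbf{A}+\check{\bm{\alpha}})\vect\mathbf{C}+\check{\bm{\alpha}}\vect(\mathbf{y}\mathbf{y}^\top)+\check{\bm{\alpha}}\vect(\mathbf{y}\mathbf{b}^\top)+\check{\bm{\alpha}}\vect(\mathbf{b}\mathbf{y}^\top)+\check{\bm{\beta}}\mathbf{y}+\vect(\bm{\Sigma}\bm{\Sigma}^\top(\mathbf{b})).
\]
Solve by variation of constants with propagator $\exp((\mathbf{A}\oplus\mathbf{A}+\check{\bm{\alpha}})(t-s))$, then express each forcing term through the initial data:
\begin{itemize}
\item $\vect(\mathbf{y}(s)\mathbf{y}(s)^\top)=(e^{\mathbf{A}s}\otimes e^{\mathbf{A}s})\vect(\mathbf{y}_0\mathbf{y}_0^\top)=e^{(\mathbf{A}\oplus\mathbf{A})s}\vect(\mathbf{y}_0\mathbf{y}_0^\top)$, giving $\mathbf{I}_1$;
\item $\vect(\mathbf{y}(s)\mathbf{b}^\top)=(\mathbf{I}\otimes e^{\mathbf{A}s})\vect(\mathbf{y}_0\mathbf{b}^\top)$, giving $\mathbf{I}_2$;
\item $\vect(\mathbf{b}\mathbf{y}(s)^\top)=(e^{\mathbf{A}s}\otimes\mathbf{I})\vect(\mathbf{b}\mathbf{y}_0^\top)$, giving $\mathbf{I}_3$;
\item $\check{\bm{\beta}}e^{\mathbf{A}s}\mathbf{y}_0$, giving $\mathbf{I}_4$;
\item the constant forcing, giving $\mathbf{I}_5$.
\end{itemize}

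\textbf{Main obstacle.} The main difficulty is the notational bookkeeping in the second and third forcing terms. One has to identify $\mathbf{I}\otimes e^{\mathbf{A}s}$ and $e^{\mathbf{A}s}\otimes\mathbf{I}$ with the exponentials written as $\exp((\mathbf{I}\oplus\mathbf{A})s)$ and $\exp((\mathbf{A}\oplus\mathbf{I})s)$ in \eqref{eq:cov_int_2}–\eqref{eq:cov_int_3}. This amounts to reading the paper's $\oplus$ convention as $\mathbf{0}\oplus\mathbf{A}=\mathbf{I}\otimes\mathbf{A}$, or else noting that the two forms differ only by a scalar factor $e^{s}$, which I will need to check against the convention used. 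A second point to check is the ordering of the Kronecker factors in $\vect(\mathbf{A}\mathbf{M}\mathbf{B}^\top)=(\mathbf{B}\otimes\mathbf{A})\vect\mathbf{M}$, so that the $\mathbf{b}$-terms are assigned to $\mathbf{I}_2$ and $\mathbf{I}_3$ consistently. The martingale justification is routine.
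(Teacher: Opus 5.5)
Your proposal is correct and is essentially the paper's proof: It\^o's formula gives linear moment ODEs, vectorization produces the propagator $\exp((\mathbf{A}\oplus\mathbf{A}+\check{\bm{\alpha}})(t-s))$, and substituting $\mathbf{m}(s)=e^{\mathbf{A}s}(\mathbf{m}(0)-\mathbf{b})+\mathbf{b}$ into the forcing yields $\mathbf{I}_1$--$\mathbf{I}_5$. The only difference is cosmetic: the paper writes the forcing directly as $\vect(\bm{\Sigma}\bm{\Sigma}^\top(\mathbf{m}(t)))$ and expands it around $\mathbf{b}$, instead of passing through the second moment of $\mathbf{X}_t-\mathbf{b}$.

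On your open point, take the first reading. The paper's $\oplus$ is the standard Kronecker sum, since its proof uses $\mathbf{I}\otimes\mathbf{A}+\mathbf{A}\otimes\mathbf{I}=\mathbf{A}\oplus\mathbf{A}$, so literally $\exp((\mathbf{I}\oplus\mathbf{A})s)=e^{s}(\mathbf{I}\otimes e^{\mathbf{A}s})$. That $e^{s}$ sits inside the integral and cannot be absorbed. Your integrands $\mathbf{I}\otimes e^{\mathbf{A}s}=\exp((\mathbf{0}\oplus\mathbf{A})s)$ and $e^{\mathbf{A}s}\otimes\mathbf{I}=\exp((\mathbf{A}\oplus\mathbf{0})s)$ are the correct ones. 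They also match how $\mathbf{I}_2$ is actually used for $\mathbf{J}_2$ in the Wright--Fisher implementation, so the $\mathbf{I}$ inside the exponentials of $\mathbf{I}_2$ and $\mathbf{I}_3$ should be read as $\mathbf{0}$.
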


Each of the integrals \eqref{eq:cov_int_1}--\eqref{eq:cov_int_5} can be evaluated using Theorem 1 in \citep{VanLoan1978}. For instance, if we define the following block matrix
\begin{equation*}
    \mathbf{M}_1(\mathbf{A}, \check{\bm{\alpha}}) = \begin{bmatrix}
        \mathbf{A} \oplus \mathbf{A} + \check{\bm{\alpha}} & \check{\bm{\alpha}}\\
        \mathbf{0} &  \mathbf{A} \oplus \mathbf{A}
    \end{bmatrix},
\end{equation*}
then
\begin{equation*}
    \exp(\mathbf{M}_1(\mathbf{A}, \check{\bm{\alpha}}) t) = \begin{bmatrix}
        \star  & \mathbf{I}_1(t, \mathbf{A}, \check{\bm{\alpha}})\\
        \mathbf{0} &  \star
    \end{bmatrix},
\end{equation*}
where $\star$ denotes a matrix of no particular interest. Similarly, we can compute the other integrals $\mathbf{I}_2$-$\mathbf{I}_5$.

From $\vect(\mathbf{C}(t))$ we can easily derive $\mathbf{C}(t)$ by reshaping. 

\begin{remark}
    While this algorithm explicitly calculates the covariance matrix of a multivariate Pearson diffusion, it becomes computationally intensive for large dimension $d$ due to the need to compute matrices of size $d^2 \times d^2$. This can be improved to some extent by employing symmetric vectorization $\svec$ (cf. \citep{Klerk2002}) or half-vectorization $\vech$ (cf. \citep{Magnus2019}) instead of the $\vect$ operator.
\end{remark}

\subsection{Strang splitting scheme}

Consider the following splitting of SDE \eqref{eq:SDEsplitted}
\begin{align}
    \dif \mathbf{X}^{[1]}_t &=  \mathbf{A}(\mathbf{X}^{[1]}_t - \mathbf{b}) \dif t + \bm{\Sigma}(\mathbf{X}_t^{[1]}; \bm{\theta}^{(2)}) \dif \mathbf{W}_t, & \mathbf{X}^{[1]}_0 = \mathbf{x}_0, \label{eq:SplittingEq1}\\    
    \dif \mathbf{X}^{[2]}_t &= \mathbf{N}(\mathbf{X}^{[2]}_t; \bm{\theta}^{(1)}) \dif t, & \mathbf{X}^{[2]}_0 = \mathbf{x}_0. \label{eq:SplittingEq2}
\end{align}
The splitting is not unique, and the choice of $\mathbf{A}$ and $\mathbf{b}$ will influence the quality of the approximation. In \citep{pilipovic2024}, we proposed to choose $\mathbf{b}$ as a fixed point of the drift, and the linear part as the linearization around $\mathbf{b}$. Equation \eqref{eq:SplittingEq1} is a multivariate Pearson diffusion whose solution cannot be explicitly obtained in general. However, we can approximate the transition density by a  Gaussian, as suggested by \cite{Kessler1997}, who proposed an approximating Gaussian distribution, matching the mean and variance of the transition density. In general, moments cannot be obtained without knowing the transition density, but they can be approximated using the infinitesimal generator. However, for the multivariate Pearson diffusions, mean and covariance can be explicitly calculated by Theorem \ref{thm:Covariance_matrix}, enabling us to approximate the solution to \eqref{eq:SplittingEq1} as
\begin{equation}
     \mathbf{X}_{t_k}^{[1]} = \Psi_h^{[1]}(\mathbf{X}_{t_{k-1}}^{[1]}) = \bm{\mu}_h(\mathbf{X}_{t_{k-1}}^{[1]}; \bm{\theta}^{(1)}) + \bm{\xi}_{h}(\mathbf{X}_{t_{k-1}}^{[1]}; \bm{\theta}), \label{eq:OU}
\end{equation}
where $\bm{\xi}_{h}(\mathbf{X}_{t_{k-1}}; \bm{\theta}) \mid \mathbf{X}_{t_{k-1}} = \mathbf{x}_{k-1} \sim \mathcal{N}_d (\bm{0}, \bm{\Omega}_h(\mathbf{x}_{k-1}; \bm{\theta}))$ are independent for $k=1, \ldots , N$, i.e., the transition density is Gaussian, 
\begin{equation}
    p^{[1]}(\mathbf{x}_{t_k} \mid \mathbf{x}_{t_{k-1}}) = \mathcal{N}(\mathbf{x}_{t_k}; \bm{\mu}_h(\mathbf{x}_{t_{k-1}};\bm{\theta}^{(1)}), \bm{\Omega}_h(\mathbf{x}_{t_{k-1}}; \bm{\theta})). \label{eq:density_split1}
\end{equation}
Let $\mathbb{E}^{[1]}$ be the expectation with respect to the probability density \eqref{eq:density_split1}. In  Eq.~\eqref{eq:difExp} in Section \ref{sec:TechnicalDetails}, we derive  $\bm{\mu}_h(\mathbf{x}; \bm{\theta}^{(1)})$ and $\bm{\Omega}_h(\mathbf{x}; \bm{\theta})$ conditioning on $\mathbf{X}_{t_{k-1}}^{[1]} = \mathbf{x}$, obtaining
\begin{equation}
\bm{\mu}_h(\mathbf{x}; \bm{\theta}^{(1)}) = \mathbb{E}^{[1]}[\mathbf{X}_{t_k}^{[1]} \mid \mathbf{X}_{t_{k-1}}^{[1]} = \mathbf{x}] = \exp(\mathbf{A} h)(\mathbf{x} - \mathbf{b}) + \mathbf{b}. \label{eq:mu_h}
\end{equation}
Conditioning on $\mathbf{X}_{t_{k-1}}^{[1]} = \mathbf{x}$ in the covariance is equivalent to setting $\mathbf{C}(0) = \mathbf{0}$ in \eqref{eq:Sylvester_sol}. This yields 
\begin{equation} \label{eq:Omegah_int}
    \begin{aligned}
    \bm{\Omega}_h(\mathbf{x}; \bm{\theta}) &= \mathbb{E}^{[1]}[(\mathbf{X}_{t_k}^{[1]}  - \bm{\mu}_h(\mathbf{x}; \bm{\theta}^{(1)}))(\mathbf{X}_{t_k}^{[1]}  - \bm{\mu}_h(\mathbf{x}; \bm{\theta}^{(1)}))^\top \mid \mathbf{X}_{t_{k-1}}^{[1]} = \mathbf{x} ]  \\
    &= \int_{t_{k-1}}^{t_k} \exp(\mathbf{A}(t_k-s))\mathbb{E}^{[1]}[\bm{\Sigma}\bm{\Sigma}^\top(\mathbf{X}_s^{[1]}) \mid \mathbf{X}_{t_{k-1}}^{[1]} = \mathbf{x} ] \exp(\mathbf{A}^\top(t_k-s)) \dif s \\
    &=  \int_0^h \exp(\mathbf{A}(h-u))\mathbb{E}^{[1]}[\bm{\Sigma}\bm{\Sigma}^\top(\mathbf{X}_u^{[1]}) \mid \mathbf{X}_0^{[1]} = \mathbf{x} ] \exp(\mathbf{A}^\top(h-u)) \dif u, 
\end{aligned}
\end{equation}
\begin{remark}\label{remark:Omegah_no_dependence_t}
Due to the linear drift, the conditional law of $\mathbf{X}_s^{[1]} \mid \mathbf{X}_{t_{k-1}}^{[1]} = \mathbf{x}$ depends on $s - t_{k-1}$ and not on $s$ and $t$ individually. Thus, $\bm{\Omega}_h$ is time-homogeneous, as shown via change of variables in \eqref{eq:Omegah_int}.
\end{remark}

Since the integral in \eqref{eq:Omegah_int} cannot be computed in the general case, we again vectorize the covariance matrix and follow the same logic to get 
\begin{align}
    \vect(\bm{\Omega}_h(\mathbf{x}; \bm{\theta})) &= \mathbf{I}_1(h, \mathbf{A}, \check{\bm{\alpha}}) \vect((\mathbf{x} - \mathbf{b})(\mathbf{x} - \mathbf{b})^\top) +  
    \mathbf{I}_2(h, \mathbf{A}, \check{\bm{\alpha}}) \vect((\mathbf{x} - \mathbf{b})\mathbf{b}^\top)\notag\\
    &+  \mathbf{I}_3(h, \mathbf{A}, \check{\bm{\alpha}})  \vect(\mathbf{b}(\mathbf{x} - \mathbf{b})^\top)
    +  \mathbf{I}_4(h, \mathbf{A}, \check{\bm{\alpha}}, \check{\bm{\beta}})  (\mathbf{x} - \mathbf{b})
    +  \mathbf{I}_5(h, \mathbf{A}, \check{\bm{\alpha}}) \vect(\bm{\Sigma}\bm{\Sigma}^\top(\mathbf{b})). \label{eq:Omega_vec}
\end{align}

Assumptions \ref{as:monoton_condition} and \ref{as:F_polynomial_growth} ensure the existence and uniqueness of the solution of \eqref{eq:SplittingEq2} (Theorem 1.2.17 in \citet{Humphries2002}). Thus, there exists a unique function $\bm{f}_h : \mathbb{R}^d \times \overline{\Theta}_{\theta^{(1)}} \to \mathbb{R}^d$, for $h \geq 0$, such that
\begin{equation}
     \mathbf{X}_{t_k}^{[2]} = \Phi_h^{[2]}(\mathbf{X}_{t_{k-1}}^{[2]}) = \bm{f}_h(\mathbf{X}_{t_{k-1}}^{[2]}; \bm{\theta}^{(1)}). \label{eq:fhflow}
\end{equation}
Ideally, we would like the function $\bm{f}_h$ in \eqref{eq:fhflow} to be explicit. However, if this is not the case, we can approximate it using standard numerical tools, such as Runge-Kutta algorithms. Here, we assume $\bm{f}_h$ is readily available, but all the following results also hold if we approximate $\bm{f}_h$ up to order $h^2$ (see Proposition 2.2 in \cite{pilipovic2024}).

For all $\bm{\theta}^{(1)} \in \overline{\Theta}_{\theta^{(1)}}$, the time flow $\bm{f}_h$ fulfills the following semi-group properties
\begin{align}
    \bm{f}_0 (\mathbf{x}; \bm{\theta}^{(1)}) &= \mathbf{x}, \qquad \bm{f}_{t+s}(\mathbf{x}; \bm{\theta}^{(1)}) = \bm{f}_t(\bm{f}_s(\mathbf{x}; \bm{\theta}^{(1)}); \bm{\theta}^{(1)}), \ \ t, s \geq 0. \label{eq:fhassociativity}
\end{align}
Now, we are ready to define the Strang splitting approximation of \eqref{eq:SDEsplitted}.
\begin{definition} \label{def:splitting}
Let Assumptions \ref{as:Diffusion}-\ref{as:F_polynomial_growth} hold. The Strang splitting approximation of the solution of \eqref{eq:SDEsplitted} is given by
\begin{align}
    &{\mathbf{X}}^\mathrm{[SS]}_{t_k} \coloneqq \Phi_h^\mathrm{[SS]}({\mathbf{X}}^\mathrm{[SS]}_{t_{k-1}}) = (\Phi_{h/2}^{[2]} \circ \Psi_h^{[1]} \circ \Phi_{h/2}^{[2]} )({\mathbf{X}}^\mathrm{[SS]}_{t_{k-1}}) = \bm{f}_{h/2} (\bm{\mu}_h(\bm{f}_{h/2}({\mathbf{X}}^\mathrm{[SS]}_{t_{k-1}}) ) +  \bm{\xi}_{h,k}(\bm{f}_{h/2}({\mathbf{X}}^\mathrm{[SS]}_{t_{k-1}}))). \label{eq:StrangSplitting}
\end{align}
\end{definition}

\subsection{Strang splitting estimator}

To define the estimator, we also need the backward flow $\bm{f}_{-h}=\bm{f}_h^{-1}$, which might not be defined for all $h \geq 0, \mathbf{x} \in \mathcal{X}, \bm{\theta}^{(1)} \in \overline{\Theta}_{\theta^{(1)}}$. We, therefore, introduce the following and last assumption.
\begin{itemize}
    \myitem{(A6)} \label{as:fhInv} 
    There exists $h_0 > 0$ such that $\bm{f}_h^{-1}(\cdot;\bm{\theta}^{(1)})$ is well-defined on $\mathcal{X} \times \overline{\Theta}_{\theta^{(1)}}$ for all $h \in [0,h_0)$.
\end{itemize}

The SS splitting \eqref{eq:StrangSplitting} is a nonlinear transformation of the Gaussian random variable
\begin{equation*}
    \bm{\mu}_h(\bm{f}_{h/2}({\mathbf{X}}_{t_{k-1}}; \bm{\theta}^{(1)}); \bm{\theta}^{(1)}) + \bm{\xi}_{h,k}(\bm{f}_{h/2}({\mathbf{X}}_{t_{k-1}}; \bm{\theta}^{(1)}); \bm{\theta}).
\end{equation*}
We define
\begin{align}
    \mathbf{Z}_{t_k}(\bm{\theta}^{(1)}) &\coloneqq \bm{f}_{h/2}^{-1}(\mathbf{X}_{t_k}; \bm{\theta}^{(1)}) - \bm{\mu}_h(\bm{f}_{h/2}(\mathbf{X}_{t_{k-1}}; \bm{\theta}^{(1)}); \bm{\theta}^{(1)}),  \label{eq:Ztk}\\
    \bm{\Omega}_h^\mathrm{[SS]}(\mathbf{X}_{t_{k-1}}; \bm{\theta}) &\coloneqq \bm{\Omega}_h(\bm{f}_{h/2}({\mathbf{X}}_{t_{k-1}}; \bm{\theta}^{(1)});  \bm{\theta}) \label{eq:OmegahSS}
\end{align}
and apply a change of variables to derive the following objective function
\begin{equation} \label{eq:S_obj} 
\begin{aligned}
     \mathcal{L}^\mathrm{[SS]}(\mathbf{X}_{0:t_N}; \bm{\theta}) &= \sum_{k=1}^N \left( \log\det  \bm{\Omega}_h^\mathrm{[SS]}(\mathbf{X}_{t_{k-1}}; \bm{\theta}) + \mathbf{Z}_{t_k}(\bm{\theta}^{(1)})^\top  \bm{\Omega}_h^\mathrm{[SS]}(\mathbf{X}_{t_{k-1}}; \bm{\theta})^{-1} \mathbf{Z}_{t_k}(\bm{\theta}^{(1)}) \right) \\
     &- 2\sum_{k=1}^N \log |\det D_\mathbf{x} \bm{f}_{-h/2}(\mathbf{X}_{t_k}; \bm{\theta}^{(1)})|. 
\end{aligned}
\end{equation}
The SS estimator is then defined as
\begin{equation}
    \widehat{\bm{\theta}}_N^\mathrm{[SS]} \coloneqq \argmin_{\bm{\theta}} \mathcal{L}^\mathrm{[SS]}\left(\mathbf{X}_{0:t_N}; \bm{\theta}\right). \label{eq:S_est}
\end{equation}

\subsection{Asymptotic results}

Here, we state the consistency and asymptotic normality of the Strang splitting estimator. Proofs can be found in Supplementary Material \ref{sec:Proofs}. Let $\mathcal{L}^\mathrm{[SS]}$ be the Strang objective function \eqref{eq:S_obj} and $\widehat{\bm{\theta}}_N^\mathrm{[SS]}$ be the corresponding estimator \eqref{eq:S_est}. 

\begin{theorem}[Consistency of the estimator] \label{thm:Consistency}
Assume \ref{as:Diffusion}-\ref{as:fhInv}, $h \to 0$, and $Nh \to \infty$. Then 
\begin{align*}
    \widehat{\bm{\theta}}_N^\mathrm{[SS]} \xrightarrow[]{\mathbb{P}_{\bm{\theta}_0}} \bm{\theta}_0.
\end{align*}
\end{theorem}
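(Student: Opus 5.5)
The plan is the standard two-scale contrast argument of Kessler (1997), as adapted to splitting estimators in \citet{pilipovic2024,pilipovic2024SecondOrder}. Because the diffusion parameter is identified at a faster rate than the drift parameter, we study two normalized contrasts. The first, $\frac{1}{N}\mathcal{L}^{[SS]}(\bm{\theta})$, will identify $\bm{\theta}^{(2)}$. The second, $\frac{1}{Nh}\bigl(\mathcal{L}^{[SS]}(\bm{\theta}^{(1)},\bm{\theta}^{(2)})-\mathcal{L}^{[SS]}(\bm{\theta}^{(1)}_0,\bm{\theta}^{(2)})\bigr)$, will identify $\bm{\theta}^{(1)}$. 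For each we prove uniform convergence in probability over the compact set $\overline{\Theta}$ to a deterministic limit that is uniquely minimized at the truth. Consistency then follows by the usual argmin argument: first for $\widehat{\bm{\theta}}^{(2)}_N$, then for $\widehat{\bm{\theta}}^{(1)}_N$, plugging in the consistent $\widehat{\bm{\theta}}^{(2)}_N$.

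\textbf{Step 1: small-$h$ expansions of the ingredients.}
Using \eqref{eq:mu_h}, \eqref{eq:Omega_vec} and the semigroup property \eqref{eq:fhassociativity}, we expand each building block in the $\mathbf{R}$-notation:
\begin{align*}
\bm{f}_{h/2}(\mathbf{x}) &= \mathbf{x}+\tfrac h2\mathbf{N}(\mathbf{x})+\mathbf{R}(h^2,\mathbf{x}),\\
\bm{\mu}_h(\mathbf{x}) &= \mathbf{x}+h\mathbf{A}(\mathbf{x}-\mathbf{b})+\mathbf{R}(h^2,\mathbf{x}),\\
\bm{\Omega}_h(\mathbf{x}) &= h\,\bm{\Sigma}\bm{\Sigma}^\top(\mathbf{x})+\mathbf{R}(h^2,\mathbf{x}).
\end{align*}
Combining these gives
\[
\mathbf{Z}_{t_k}(\bm{\theta}^{(1)})=\mathbf{X}_{t_k}-\mathbf{X}_{t_{k-1}}-h\mathbf{F}(\mathbf{X}_{t_{k-1}};\bm{\theta}^{(1)})+\mathbf{R}(h^{3/2},\cdot)
\]
and $\log|\det D\bm{f}_{-h/2}(\mathbf{x})|=-\tfrac h2\tr D\mathbf{N}(\mathbf{x})+R(h^2,\mathbf{x})$.

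We also need moment bounds for the true process, $\mathbb{E}[\|\mathbf{X}_{t_k}-\mathbf{X}_{t_{k-1}}\|^p\mid\mathcal{F}_{t_{k-1}}]=R(h^{p/2},\mathbf{X}_{t_{k-1}})$. These follow from \ref{as:monoton_condition}--\ref{as:F_polynomial_growth}, which give bounded polynomial moments. Finally, we use It\^o--Taylor expansions of the conditional first and second moments of the increments.

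\textbf{Step 2: limit of $\frac1N\mathcal{L}^{[SS]}$.}
Substituting Step 1 gives
\[
\tfrac1N\mathcal{L}^{[SS]}=d\log h+\tfrac1N\sum_k\Bigl[\log\det\bm{\Sigma}\bm{\Sigma}^\top(\mathbf{X}_{t_{k-1}})+\tr\bigl((\bm{\Sigma}\bm{\Sigma}^\top)^{-1}(\mathbf{X}_{t_{k-1}})\,h^{-1}\Delta\mathbf{X}_k\Delta\mathbf{X}_k^\top\bigr)\Bigr]+o_{\mathbb{P}}(1).
\]
The drift enters only at order $h^{1/2}$ and disappears in the limit. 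The following tools then give the limits of these sums:
\begin{itemize}
\item the conditional expectations $\mathbb{E}[\Delta\mathbf{X}_k\Delta\mathbf{X}_k^\top\mid\mathcal{F}_{t_{k-1}}]=h\bm{\Sigma}\bm{\Sigma}_0^\top+\mathbf{R}(h^2,\cdot)$;
\item Lemma 9 of Genon-Catalot--Jacod (a sum converges if its conditional means converge and its conditional second moments are negligible);
\item the ergodic theorem \ref{as:Ergodic}, with Riemann sums replacing integrals.
\end{itemize}
After removing the constant $d\log h$, the limit is
\[
\int\bigl[\log\det\bm{\Sigma}\bm{\Sigma}^\top(\mathbf{x};\bm{\theta}^{(2)})+\tr\bigl((\bm{\Sigma}\bm{\Sigma}^\top)^{-1}(\mathbf{x};\bm{\theta}^{(2)})\,\bm{\Sigma}\bm{\Sigma}_0^\top(\mathbf{x})\bigr)\bigr]\dif\nu_0(\mathbf{x}).
\]
Since $\log\det M+\tr(M^{-1}M_0)\ge\log\det M_0+d$, with equality iff $M=M_0$, this limit is minimized exactly where $\bm{\Sigma}\bm{\Sigma}^\top(\cdot;\bm{\theta}^{(2)})=\bm{\Sigma}\bm{\Sigma}_0^\top$ $\nu_0$-a.e. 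By \ref{as:Identifiability}, that happens only at $\bm{\theta}^{(2)}_0$.

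Uniformity in $\bm{\theta}$ follows from tightness in $C(\overline{\Theta})$. We bound $\sup_{\bm{\theta}}\|\nabla_{\bm{\theta}}\frac1N\mathcal{L}^{[SS]}\|$ in probability using the smoothness in \ref{as:F_polynomial_growth} and the explicit parameter dependence of the matrix exponentials.

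\textbf{Step 3: limit of the normalized difference for $\bm{\theta}^{(1)}$.}
In the difference of contrasts, the $\log\det$ terms cancel except for $O(h)$ corrections. The Jacobian term contributes $\tfrac1N\sum_k\tr\bigl(D\mathbf{N}(\cdot;\bm{\theta}^{(1)})-D\mathbf{N}_0\bigr)$ after dividing by $h$. Expanding the quadratic forms, the cross term is a martingale-type sum that is $o_{\mathbb{P}}(1)$ after normalization by $Nh$, because $Nh\to\infty$. The squared term gives
\[
\int(\mathbf{F}-\mathbf{F}_0)^\top(\bm{\Sigma}\bm{\Sigma}^\top)^{-1}(\mathbf{F}-\mathbf{F}_0)\,\dif\nu_0 .
\]
The Jacobian contributions also have to vanish or cancel in the limit. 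Showing this is the delicate point: the expansions of $\bm{\Omega}_h^{[SS]}$ and of $\log\det\bm{\Omega}_h$ at order $h$ contain $\tr D\mathbf{N}$ terms (from $\bm{f}_{h/2}$ inside $\bm{\Omega}_h$), and these must offset the Jacobian term. I would verify this by carrying the order-$h$ term of $\log\det\bm{\Omega}_h^{[SS]}$ explicitly.

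\textbf{Main obstacle.}
The limiting quadratic form requires $\bm{\Sigma}\bm{\Sigma}^\top$ to be invertible, so the hypoelliptic case under \ref{as:Diffusion} is the hardest step. There $\bm{\Omega}_h$ has entries of different orders ($h$, $h^2$, $h^3$), and we need a scaling matrix $\mathbf{S}_h=\diag(h^{-1/2},h^{-3/2},\dots)$ so that $\mathbf{S}_h\bm{\Omega}_h\mathbf{S}_h$ converges to a positive definite matrix. Both Step 2 and Step 3 must then be redone in rescaled coordinates, as in \citet{pilipovic2024SecondOrder}. I would treat the elliptic case first and then adapt it through this rescaling.
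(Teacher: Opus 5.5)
Your architecture is the paper's. It uses Kessler's two-scale argument, the same limit for $\frac1N\mathcal{L}^{\mathrm{[SS]}}$ (your Step~2 is fine), and the normalized difference $\frac{1}{Nh}\bigl(\mathcal{L}^{\mathrm{[SS]}}(\bm{\theta}^{(1)},\bm{\theta}^{(2)})-\mathcal{L}^{\mathrm{[SS]}}(\bm{\theta}^{(1)}_0,\bm{\theta}^{(2)})\bigr)$ handled with Lemma~9 of Genon-Catalot and Jacod.

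The gap is in Step~3, exactly where you call it delicate. After division by $Nh$, every $\bm{\theta}^{(1)}$-dependent contribution of order $h$ per summand survives. Your Step~1 expansions, $\mathbf{Z}_{t_k}=\Delta\mathbf{X}_k-h\mathbf{F}+\mathbf{R}(h^{3/2},\cdot)$ and $\bm{\Omega}_h=h\bm{\Sigma}\bm{\Sigma}^\top+\mathbf{R}(h^2,\cdot)$, discard two such sources inside the quadratic form:
\begin{itemize}
\item the term $-\frac{h^{3/2}}{2}D\mathbf{N}(\mathbf{X}_{t_{k-1}};\bm{\theta}^{(1)})\bm{\eta}_0(\mathbf{X}_{t_{k-1}})$ of $\mathbf{Z}_{t_k}$ (Proposition~\ref{prop:ZtkandZtkbar});
\item the order-one part $-\frac12(\mathbf{S}^{-1}\mathbf{A}+\mathbf{A}^\top\mathbf{S}^{-1})-\frac12\mathbf{S}^{-1}(\mathbb{L}_{\bm{\theta}}\mathbf{S})\mathbf{S}^{-1}$ of $\bm{\Omega}_h^{\mathrm{[SS]}}(\mathbf{x})^{-1}$ (Proposition~\ref{prop:OmegahProp}), where $\mathbf{S}=\bm{\Sigma}\bm{\Sigma}^\top(\cdot;\bm{\theta}^{(2)})$.
\end{itemize}
Paired with the leading $h^{1/2}\bm{\eta}_0$, these produce the summands $-h\,\bm{\eta}_0^\top\mathbf{S}^{-1}D\mathbf{F}(\cdot;\bm{\theta}^{(1)})\bm{\eta}_0-\frac h2\bm{\eta}_0^\top\mathbf{S}^{-1}(\mathbb{L}_{\bm{\theta}}\mathbf{S})\mathbf{S}^{-1}\bm{\eta}_0$ (cf.\ \eqref{eq:asymptotic_S_obj}). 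To leading order their conditional means are $-h\tr(D\mathbf{F}\,\bm{\Sigma}\bm{\Sigma}_0^\top\mathbf{S}^{-1})$ and $-\frac h2\tr\bigl((\mathbb{L}_{\bm{\theta}}\mathbf{S})\mathbf{S}^{-1}\bm{\Sigma}\bm{\Sigma}_0^\top\mathbf{S}^{-1}\bigr)$. They are neither your ``cross'' nor your ``squared'' term, and they are what does the offsetting.

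Your proposed resolution looks in the wrong place. Composing with $\bm{f}_{h/2}$ inside $\bm{\Omega}_h$ only adds the directional derivative of $\bm{\Sigma}\bm{\Sigma}^\top$ along $\mathbf{N}$. That merges with $\mathbb{L}^{[1]}\bm{\Sigma}\bm{\Sigma}^\top$ into $\mathbb{L}\bm{\Sigma}\bm{\Sigma}^\top$; no $\tr D\mathbf{N}$ appears. The order-$h$ part of $\log\det\bm{\Omega}_h^{\mathrm{[SS]}}$ is $h\tr\mathbf{A}+\frac h2\tr(\mathbf{S}^{-1}\mathbb{L}_{\bm{\theta}}\mathbf{S})$. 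It adds to the Jacobian's $h\tr D\mathbf{N}$ to give $h\tr D\mathbf{F}$ rather than cancelling it.

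Carrying only the $\log\det$ expansion therefore leaves $\int\tr D(\mathbf{F}-\mathbf{F}_0)\dif\nu_0$ plus an $\mathbb{L}$-term in the limit. This is not sign-definite: for \eqref{eq:Pearson1D} the first part is $-(\lambda-\lambda_0)$. It therefore shifts the minimizer away from $\bm{\theta}^{(1)}_0$.

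With the quadratic-form terms included you obtain the paper's limit \eqref{eq:drift_cons}. That is your squared term plus $\int\tr\bigl(D(\mathbf{F}_0-\mathbf{F})(\bm{\Sigma}\bm{\Sigma}_0^\top\mathbf{S}^{-1}-\mathbf{I})\bigr)\dif\nu_0$ and an analogous $\mathbb{L}$-term. These extra terms vanish only at $\bm{\theta}^{(2)}=\bm{\theta}^{(2)}_0$, not for general $\bm{\theta}^{(2)}$. That is why the drift step must use consistency of $\widehat{\bm{\theta}}^{(2)}_N$ together with continuity of the limit. Your argmin ordering allows for this, but Step~3 should state the limit with these terms, not as the squared term alone.

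Separately, your hypoelliptic rescaling goes beyond the paper, whose proof works with $\bm{\Sigma}\bm{\Sigma}^\top(\mathbf{x})^{-1}$ throughout.
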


To state the asymptotic normality of the estimator, we introduce some preliminaries. Let $\rho >0$ and define a closed ball around $\bm{\theta}_0$ by $\mathcal{B}_\rho\left(\bm{\theta}_0\right) = \{\bm{\theta} \in \Theta \mid \|\bm{\theta}-\bm{\theta}_0\| \leq \rho\}$. Since $\bm{\theta}_0 \in \Theta$, for a sufficiently small $\rho >0$, we have $\mathcal{B}_\rho(\bm{\theta}_0) \subset \Theta$. For $\hat{\bm{\theta}}_N^\mathrm{[SS]} \in \mathcal{B}_\rho\left(\bm{\theta}_0\right)$, the mean value theorem yields
\begin{equation}
    \left(\int_0^1 \mathbb{H}_{\mathcal{L}^\mathrm{[SS]}}(\bm{\theta}_0 + t (\hat{\bm{\theta}}_N^\mathrm{[SS]} - \bm{\theta}_0))\dif t\right) (\hat{\bm{\theta}}_N^\mathrm{[SS]} - \bm{\theta}_0) = - \nabla_{\bm{\theta}} \mathcal{L}^\mathrm{[SS]}\left(\bm{\theta}_0\right). \label{eq:AssymptoticNormalityDecomp}
\end{equation}
Define the block-scaled Hessian matrix
\begin{align}
    \mathbf{C}_N^\mathrm{[SS]}(\bm{\theta}) \coloneqq 
    \begin{bmatrix} \vspace{1ex}
    \left[ \frac{1}{Nh}\partial^2_{\theta^{(1)}_{i_1} \theta^{(1)}_{i_2}} \mathcal{L}^\mathrm{[SS]}(\bm{\theta})\right]_{i_1,i_2 =1}^r & \left[\frac{1}{N\sqrt{h}}\partial^2_{\theta^{(1)}_i \theta^{(2)}_j} \mathcal{L}^\mathrm{[SS]}(\bm{\theta})\right]_{i=1,j=1}^{r,s}\\ 
    \left[\frac{1}{N\sqrt{h}}\partial^2_{\theta^{(2)}_j \theta^{(1)}_i} \mathcal{L}^\mathrm{[SS]}(\bm{\theta})\right]_{i=1,j=1}^{r,s} & \left[\frac{1}{N} \partial^2_{\theta^{(2)}_{j_1} \theta^{(2)}_{j_2}} \mathcal{L}^\mathrm{[SS]} (\bm{\theta})\right]_{j_1, j_2 = 1}^s
    \end{bmatrix}, \label{eq:C_mat}
\end{align}
along with the scaled error and score vectors
\begin{align}
    \mathbf{s}_N^\mathrm{[SS]} \coloneqq \begin{bmatrix}
    \sqrt{Nh} (\hat{\bm{\theta}}_N^\mathrm{(1)[SS]} - \bm{\theta}_0^{(1)}) \vspace{1ex}\\
    \sqrt{N} (\hat{\bm{\theta}}_N^\mathrm{(2)[SS]} - \bm{\theta}_0^{(2)})
    \end{bmatrix},  \qquad
    \bm{\lambda}_N^\mathrm{[SS]} \coloneqq \begin{bmatrix}
    -\dfrac{1}{\sqrt{Nh}} \nabla_{\bm{\theta}^{(1)}} \mathcal{L}^\mathrm{[SS]}(\bm{\theta}_0)\vspace{1ex}\\
    -\dfrac{1}{\sqrt{N}}  \nabla_{\bm{\theta}^{(2)}} \mathcal{L}^\mathrm{[SS]}(\bm{\theta}_0)
    \end{bmatrix}, \label{eq:lambda}
\end{align}
and $\mathbf{D}_N^\mathrm{[SS]} \coloneqq \int_0^1 \mathbf{C}_N^\mathrm{[SS]}(\bm{\theta}_0 + t (\hat{\bm{\theta}}_N^\mathrm{[SS]} - \bm{\theta}_0)) \dif t$. Consequently, \eqref{eq:AssymptoticNormalityDecomp} can be rewritten equivalently as $\mathbf{D}_N^\mathrm{[SS]} \mathbf{s}_N^\mathrm{[SS]} = \bm{\lambda}_N^\mathrm{[SS]}$. 

Let
\begin{align*}
    &[\mathbf{C}_{\bm{\theta}^{(1)}}(\bm{\theta}_0)]_{i_1,i_2} \coloneqq \int (\partial_{\theta^{(1)}_{i_1}} \mathbf{F}_0(\mathbf{x}))^\top \bm{\Sigma}\bm{\Sigma}_0^\top(\mathbf{x})^{-1}(\partial_{\theta^{(1)}_{i_2}} \mathbf{F}_0(\mathbf{x}))\dif \nu_0(\mathbf{x}), \quad 1\leq i_1,i_2 \leq r, \\
    &[\mathbf{C}_{\bm{\theta}^{(2)}}(\bm{\theta}_0)]_{j_1,j_2} \coloneqq \frac{1}{2}\int \tr\Big((\partial_{\theta^{(2)}_{j_1}} \bm{\Sigma}\bm{\Sigma}_0^\top(\mathbf{x}))\bm{\Sigma}\bm{\Sigma}_0^\top(\mathbf{x})^{-1}(\partial_{\theta^{(2)}_{j_2}} \bm{\Sigma}\bm{\Sigma}_0^\top(\mathbf{x}))\bm{\Sigma}\bm{\Sigma}_0^\top(\mathbf{x})^{-1}\Big)\dif \nu_0(\mathbf{x}), \quad 1\leq j_1,j_2 \leq s, 
\end{align*}
and define the block-diagonal asymptotic information matrix as
\begin{align*}
        &\mathbf{C}^\mathrm{[SS]}(\bm{\theta}_0) \coloneqq \begin{bmatrix}
        \mathbf{C}_{\bm{\theta}^{(1)}}(\bm{\theta}_0) & \bm{0}_{r\times s}\\
        \bm{0}_{s\times r} & \mathbf{C}_{\bm{\theta}^{(2)}}(\bm{\theta}_0)
        \end{bmatrix}.
\end{align*}

\begin{theorem}[Asymptotic normality of the estimator]\label{thm:AsymtoticNormality}
    Assume \ref{as:Diffusion}-\ref{as:fhInv}. If $h \to 0$, $Nh \to \infty$, and $Nh^2 \to 0$, then 
    \begin{align*}
    \begin{bmatrix}
        \sqrt{Nh} (\hat{\bm{\theta}}_N^\mathrm{(1)[SS]} - \bm{\theta}_0^{(1)})\vspace{1ex} \\
        \sqrt{N} (\hat{\bm{\theta}}_N^\mathrm{(2)[SS]} - \bm{\theta}_0^{(2)})
    \end{bmatrix} & \xrightarrow[\substack{h \to 0 \\ Nh \to \infty\\ Nh^2 \to 0}]{d} \mathcal{N}(\bm{0}, \mathbf{C}^\mathrm{[SS]}(\bm{\theta}_0)^{-1}), 
\end{align*}
under $\mathbb{P}_{\bm{\theta}_0}$.
\end{theorem}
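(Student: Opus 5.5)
Starting from the identity $\mathbf{D}_N^\mathrm{[SS]} \mathbf{s}_N^\mathrm{[SS]} = \bm{\lambda}_N^\mathrm{[SS]}$, the argument follows the standard route of \citet{Kessler1997} and \citet{pilipovic2024}. Two facts are needed: (i) $\bm{\lambda}_N^\mathrm{[SS]} \xrightarrow{d} \mathcal{N}(\bm{0}, 4\mathbf{C}^\mathrm{[SS]}(\bm{\theta}_0))$, where the factor $4$ comes from the normalisation of $\mathcal{L}^\mathrm{[SS]}$ as $-2\times$ the log-likelihood; and (ii) $\mathbf{D}_N^\mathrm{[SS]} \xrightarrow{\mathbb{P}} 2\mathbf{C}^\mathrm{[SS]}(\bm{\theta}_0)$. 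Slutsky's lemma then gives the limit $\mathcal{N}(\bm{0}, \mathbf{C}^\mathrm{[SS]}(\bm{\theta}_0)^{-1})$. By Theorem \ref{thm:Consistency}, $\hat{\bm{\theta}}_N^\mathrm{[SS]} \in \mathcal{B}_\rho(\bm{\theta}_0)$ with probability tending to one, so the mean value expansion \eqref{eq:AssymptoticNormalityDecomp} is legitimate on that event. Invertibility of $\mathbf{C}^\mathrm{[SS]}(\bm{\theta}_0)$ follows from \ref{as:Identifiability} together with positive definiteness of $\bm{\Sigma}\bm{\Sigma}_0^\top$. In the hypoelliptic case the inverse is read on the noisy block, as in \citet{pilipovic2024}.

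First I would expand the building blocks in $h$ uniformly in $\bm{\theta}\in\overline{\Theta}$. For the flow, $\bm{f}_{\pm h/2}(\mathbf{x}) = \mathbf{x} \pm \tfrac h2 \mathbf{N}(\mathbf{x}) + \mathbf{R}(h^2,\mathbf{x})$. For the Pearson mean, $\bm{\mu}_h(\mathbf{x}) = \mathbf{x} + h\mathbf{A}(\mathbf{x}-\mathbf{b}) + \mathbf{R}(h^2,\mathbf{x})$ by \eqref{eq:mu_h}. For the covariance, \eqref{eq:Omegah_int} gives $\bm{\Omega}_h(\mathbf{x}) = h\bm{\Sigma}\bm{\Sigma}^\top(\mathbf{x}) + \mathbf{R}(h^2,\mathbf{x})$. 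Combining these yields
\[
\mathbf{Z}_{t_k}(\bm{\theta}^{(1)}) = \mathbf{X}_{t_k} - \mathbf{X}_{t_{k-1}} - h\mathbf{F}(\mathbf{X}_{t_{k-1}};\bm{\theta}^{(1)}) + \mathbf{R}(h^2,\cdot),
\]
and $\bm{\Omega}_h^\mathrm{[SS]} = h\bm{\Sigma}\bm{\Sigma}^\top(\mathbf{X}_{t_{k-1}}) + \mathbf{R}(h^2,\cdot)$. The Jacobian term in \eqref{eq:S_obj} is $-2\log|\det D\bm{f}_{-h/2}| = h\,\tr D\mathbf{N} + R(h^2,\cdot)$. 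Its contribution to the drift score is therefore $O(\sqrt{h}\cdot\sqrt{N h})$ after the $1/\sqrt{Nh}$ normalisation, which is only $O(\sqrt{Nh})\cdot\sqrt h$ before centering. This forces me to check that it cancels, to first order, against the Itô correction in $\mathbb{E}[\mathbf{Z}_{t_k}\mid\mathcal{F}_{t_{k-1}}]$. I will use the Itô–Taylor moment expansions $\mathbb{E}[\mathbf{X}_{t_k}-\mathbf{X}_{t_{k-1}}\mid\mathcal{F}_{t_{k-1}}] = h\mathbf{F}_0 + \tfrac{h^2}{2}L\mathbf{F}_0 + \mathbf{R}(h^3,\cdot)$ and the corresponding second and fourth conditional moments. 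These follow from \ref{as:F_polynomial_growth} and the moment bounds implied by \ref{as:monoton_condition}. Together with \ref{as:Ergodic}, they give the discrete ergodic limits $\frac1N\sum g(\mathbf{X}_{t_{k-1}}) \to \int g\,\dif\nu_0$, with polynomial growth preserved by the remainders.

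For (i), I would write each score component as $\sum_k \zeta_{k}(\bm{\theta}_0)$ and split it as $\sum_k(\zeta_k - \mathbb{E}[\zeta_k\mid\mathcal{F}_{t_{k-1}}]) + \sum_k \mathbb{E}[\zeta_k\mid\mathcal{F}_{t_{k-1}}]$. The first sum is a martingale triangular array, handled by the CLT for martingale difference arrays: convergence of conditional variances via ergodicity plus a Lyapunov condition using fourth moments. The drift score is $-2\sum_k (\partial_{\theta^{(1)}}\mathbf{Z}_{t_k})^\top(\bm{\Omega}_h^\mathrm{[SS]})^{-1}\mathbf{Z}_{t_k}$, and its conditional variance behaves like $4Nh\,\mathbf{C}_{\bm{\theta}^{(1)}}$. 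The diffusion score behaves like $\sum_k \tr\bigl((\bm{\Sigma}\bm{\Sigma}^\top)^{-1}\partial\bm{\Sigma}\bm{\Sigma}^\top(\mathbf{I} - h^{-1}\mathbf{Z}\mathbf{Z}^\top(\bm{\Sigma}\bm{\Sigma}^\top)^{-1})\bigr)$, with conditional variance $4N\mathbf{C}_{\bm{\theta}^{(2)}}$. The cross conditional covariances involve odd Gaussian-like moments of order $h^{3/2}$, so after the mixed scaling $1/(N h)$ they vanish and the limit is block diagonal. The bias sums are where the conditions enter. For $\bm{\theta}^{(1)}$, the compensator is $O(Nh^2)/\sqrt{Nh} = O(\sqrt{Nh^3})\to0$, provided the order-$h$ terms cancel exactly, which is the point of the Strang symmetric composition. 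For $\bm{\theta}^{(2)}$, it is $O(Nh)/\sqrt N = O(\sqrt{Nh^2}) \to 0$ by $Nh^2\to0$.

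For (ii), I would show convergence of $\mathbf{C}_N^\mathrm{[SS]}(\bm{\theta})$ uniformly on $\mathcal{B}_\rho(\bm{\theta}_0)$, using the same expansions plus a tightness/equicontinuity bound on third derivatives (\ref{as:F_polynomial_growth}). Combined with consistency, this gives $\mathbf{D}_N^\mathrm{[SS]}\to 2\mathbf{C}^\mathrm{[SS]}(\bm{\theta}_0)$. The off-diagonal blocks scaled by $1/(N\sqrt h)$ are centered martingale sums of order $\sqrt{N h}$, so they vanish. The main obstacle I expect is the drift bias, namely verifying that the $O(h)$ conditional-mean contributions of $\mathbf{Z}_{t_k}$, of the $\log\det\bm{\Omega}_h^\mathrm{[SS]}$ derivative in $\bm{\theta}^{(1)}$ (note $\bm{\Omega}_h$ depends on $\mathbf{A},\mathbf{b}$ through its $h^2$ term and through $\bm{f}_{h/2}$), and of the Jacobian term cancel to order $h^2$ per step. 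I would handle this by a careful second-order expansion in the multiplicative-noise setting, replacing the constant-$\bm{\Sigma}$ computations of \citet{pilipovic2024} with the $\mathbf{x}$-dependent $\bm{\Sigma}\bm{\Sigma}^\top$, and by tracking the extra term $\tfrac h2 D(\bm{\Sigma}\bm{\Sigma}^\top)\mathbf{F}$ in $\bm{\Omega}_h$.
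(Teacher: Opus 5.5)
Your architecture matches the paper's. You reduce via $\mathbf{D}_N^{\mathrm{[SS]}}\mathbf{s}_N^{\mathrm{[SS]}}=\bm{\lambda}_N^{\mathrm{[SS]}}$ to Lemmas~\ref{lemma:AsymptoticNormality1} and~\ref{lemma:LnConvergence}, apply a martingale CLT for triangular arrays, and absorb remainders using $Nh^2\to0$. The gap is in the step you yourself call the crux. You correctly sense that a first-order cancellation is needed, but both the expansion and the mechanism you propose for it are wrong.

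First, your expansion $\mathbf{Z}_{t_k}(\bm{\theta}^{(1)})=\mathbf{X}_{t_k}-\mathbf{X}_{t_{k-1}}-h\mathbf{F}(\mathbf{X}_{t_{k-1}};\bm{\theta}^{(1)})+\mathbf{R}(h^2,\cdot)$ is false. Since $\bm{f}_{h/2}^{-1}$ is applied at $\mathbf{X}_{t_k}$ and $\bm{f}_{h/2}$ at $\mathbf{X}_{t_{k-1}}$, the term $\tfrac{h}{2}\bigl(\mathbf{N}(\mathbf{X}_{t_{k-1}})-\mathbf{N}(\mathbf{X}_{t_k})\bigr)=-\tfrac{h^{3/2}}{2}D\mathbf{N}(\mathbf{X}_{t_{k-1}})\bm{\eta}_0+\mathbf{R}(h^2,\cdot)$ remains. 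This is a random $O_P(h^{3/2})$ term, and Proposition~\ref{prop:ZtkandZtkbar} keeps it. Second, the Jacobian term contributes $\sqrt{h/N}\sum_k\partial_{\theta^{(1)}}\tr D\mathbf{N}(\mathbf{X}_{t_k})$ to $\bm{\lambda}_N^{\mathrm{[SS]}}$. This is generically of order $\sqrt{Nh}\to\infty$, not $\sqrt{h}\sqrt{Nh}$, so an exact first-order cancellation is mandatory. That cancellation cannot come from an Itô correction in $\mathbb{E}[\mathbf{Z}_{t_k}\mid\mathcal{F}_{t_{k-1}}]$. By Lemma~\ref{lemma:fh} this conditional mean is $h(\mathbf{F}_0-\mathbf{F})(\mathbf{X}_{t_{k-1}})+\mathbf{R}(h^2,\cdot)$, so it has no $O(h)$ part at $\bm{\theta}_0$; the Itô term $\tfrac{h^2}{2}\mathbb{L}\mathbf{F}_0$ is second order.

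The cancellation actually lives in the quadratic form, i.e. in second moments. Write $\mathbf{G}=\bm{\Sigma}\bm{\Sigma}^\top(\mathbf{X}_{t_{k-1}})$.
\begin{itemize}
\item The cross term of $h^{1/2}\bm{\eta}_0$ with $-\tfrac{h^{3/2}}{2}D\mathbf{N}\bm{\eta}_0$ under $h^{-1}\mathbf{G}^{-1}$ is $-h\,\bm{\eta}_0^\top\mathbf{G}^{-1}D\mathbf{N}\bm{\eta}_0$. Its conditional mean $-h\tr D\mathbf{N}+O(h^2)$ kills the Jacobian's $h\tr D\mathbf{N}$.
\item The $O(1)$ correction $-\tfrac12(\mathbf{G}^{-1}\mathbf{A}+\mathbf{A}^\top\mathbf{G}^{-1})$ of $(\bm{\Omega}_h^{\mathrm{[SS]}})^{-1}$ from Proposition~\ref{prop:OmegahProp} yields $-h\,\bm{\eta}_0^\top\mathbf{G}^{-1}\mathbf{A}\bm{\eta}_0$. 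This kills the $h\tr\mathbf{A}$ in $\log\det\bm{\Omega}_h^{\mathrm{[SS]}}$.
\item The $\mathbb{L}\bm{\Sigma}\bm{\Sigma}^\top$ terms cancel in the same way.
\end{itemize}
Since $\mathbf{A}+D\mathbf{N}=D\mathbf{F}$, these pieces assemble into the term $-h\bigl(\bm{\eta}_0^\top\mathbf{G}^{-1}D\mathbf{F}\bm{\eta}_0-\tr D\mathbf{F}\bigr)$ of \eqref{eq:asymptotic_S_obj}. By \eqref{eq:etaeta}, its $\bm{\theta}^{(1)}$-derivative has conditional mean $O(h^2)$ at $\bm{\theta}_0$.

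If you follow your expansion literally, this term disappears. The drift score then retains a divergent bias of size roughly $\sqrt{Nh}\int\tr\partial_{\theta^{(1)}}D\mathbf{N}\,\dif\nu_0$. What you need is the second-order expansion of the random part of $\mathbf{Z}_{t_k}$, together with those of $(\bm{\Omega}_h^{\mathrm{[SS]}})^{-1}$ and $\log\det\bm{\Omega}_h^{\mathrm{[SS]}}$, not an analysis of conditional means.

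A minor point: invertibility of $\mathbf{C}^{\mathrm{[SS]}}(\bm{\theta}_0)$ does not follow from \ref{as:Identifiability}. That assumption is global injectivity and still allows degenerate derivatives, e.g.\ $\mathbf{F}(\mathbf{x};\theta)=\theta^3\mathbf{x}$ at $\theta_0=0$. Invertibility is an implicit hypothesis of the statement.
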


The proof of Theorem \ref{thm:AsymtoticNormality} follows the same structure as Theorem 1 in \citep{Kessler1997}, and it suffices to prove Lemmas \ref{lemma:AsymptoticNormality1} and \ref{lemma:LnConvergence} below. 

\begin{lemma} \label{lemma:AsymptoticNormality1}
Let $\mathbf{C}_N^\mathrm{[SS]}(\bm{\theta}_0)$ be defined as in \eqref{eq:C_mat}. As $h \to 0$ and $Nh \to \infty$, it holds that
\begin{align*}
        &\mathbf{C}_N^\mathrm{[SS]}(\bm{\theta}_0) \xrightarrow[\substack{h \to 0 \\ Nh \to \infty}]{\mathbb{P}_{\bm{\theta}_0}} 2\mathbf{C}^\mathrm{[SS]}(\bm{\theta}_0).
\end{align*}
Moreover, let $\rho_N$ be a sequence such that $\rho_N \to 0$. Then,
\begin{align*}
    \sup_{\|\bm{\theta}\| \leq \rho_N} \|\mathbf{C}_N^\mathrm{[SS]}(\bm{\theta}_0 + \bm{\theta}) -\mathbf{C}_N^\mathrm{[SS]}(\bm{\theta}_0) &\|\xrightarrow[\substack{h \to 0 \\ Nh \to \infty}]{\mathbb{P}_{\bm{\theta}_0}} 0.
\end{align*}
\end{lemma}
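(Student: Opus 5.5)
We compute the three blocks of $\mathbf{C}_N^\mathrm{[SS]}(\bm{\theta}_0)$ by differentiating \eqref{eq:S_obj} term by term. Each term is then expanded in powers of $h$, using the explicit structure of $\bm{\mu}_h$, $\bm{\Omega}_h$ and $\bm{f}_{h/2}$. Throughout we use the notation $\mathbf{R}(h,\mathbf{x})$ for remainders of polynomial growth.

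\textbf{Step 1: expansions in $h$.} From \eqref{eq:mu_h}, \eqref{eq:Omega_vec} and the Van Loan representation of $\mathbf{I}_1$--$\mathbf{I}_5$, we have
\begin{equation*}
\bm{\Omega}_h^\mathrm{[SS]}(\mathbf{x};\bm{\theta}) = h\,\bm{\Sigma}\bm{\Sigma}^\top(\mathbf{x};\bm{\theta}^{(2)}) + \mathbf{R}(h^2,\mathbf{x}).
\end{equation*}
This holds because $\bm{f}_{h/2}(\mathbf{x}) = \mathbf{x} + \tfrac{h}{2}\mathbf{N}(\mathbf{x}) + \mathbf{R}(h^2,\mathbf{x})$. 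Similarly,
\begin{equation*}
\mathbf{Z}_{t_k}(\bm{\theta}^{(1)}) = \mathbf{X}_{t_k}-\mathbf{X}_{t_{k-1}} - h\mathbf{F}(\mathbf{X}_{t_{k-1}};\bm{\theta}^{(1)}) + \mathbf{R}(h^2,\mathbf{X}_{t_{k-1}}) + \text{terms of order } h\,\|\mathbf{X}_{t_k}-\mathbf{X}_{t_{k-1}}\|.
\end{equation*}
Moreover, $\log|\det D_\mathbf{x}\bm{f}_{-h/2}| = -\tfrac{h}{2}\tr D\mathbf{N}(\mathbf{x}) + R(h^2,\mathbf{x})$. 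Derivatives in $\bm{\theta}$ admit analogous expansions, by (A3) and the polynomial dependence of $\bm{\Sigma}\bm{\Sigma}^\top$ on $\bm{\theta}^{(2)}$. Under $\bm{\theta}_0$ we also use the standard moment bounds
\begin{equation*}
\mathbb{E}[\mathbf{Z}_{t_k}(\bm{\theta}_0^{(1)})\mid\mathcal{F}_{t_{k-1}}] = \mathbf{R}(h^2,\cdot), \qquad \mathbb{E}[\mathbf{Z}\mathbf{Z}^\top\mid\mathcal{F}_{t_{k-1}}] = h\bm{\Sigma}\bm{\Sigma}_0^\top + \mathbf{R}(h^2,\cdot),
\end{equation*}
together with higher-moment bounds $\mathbf{R}(h^{p/2},\cdot)$. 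These follow from It\^o--Taylor expansion and (A2)--(A3), which give bounded moments of all orders.

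\textbf{Step 2: the blocks.} For the $\bm{\theta}^{(1)}\bm{\theta}^{(1)}$ block, the dominant term in $\partial^2\mathcal{L}^\mathrm{[SS]}$ is
\begin{equation*}
2\sum_k \partial_{i_1}\mathbf{Z}^\top(\bm{\Omega}_h^\mathrm{[SS]})^{-1}\partial_{i_2}\mathbf{Z} \approx 2h\sum_k \partial_{i_1}\mathbf{F}_0^\top(\bm{\Sigma}\bm{\Sigma}_0^\top)^{-1}\partial_{i_2}\mathbf{F}_0(\mathbf{X}_{t_{k-1}}).
\end{equation*}
After division by $Nh$, this converges to $2\mathbf{C}_{\bm{\theta}^{(1)}}$ by the ergodic theorem (A4), using a Riemann-sum replacement of $\frac1{Nh}\sum_k h\,g(\mathbf{X}_{t_{k-1}})$ by $\frac1T\int_0^T g$. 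The terms $\partial^2\mathbf{Z}^\top\bm{\Omega}^{-1}\mathbf{Z}$ have conditional mean $O(h)$ times $h^{-1}$. Their martingale parts are negligible by Lemma 9 of Genon-Catalot and Jacod: the conditional means sum to $0$ in probability, and the conditional second moments sum to $o(1)$. The Jacobian term contributes $O(Nh)\cdot h/(Nh)\to0$.

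For the $\bm{\theta}^{(2)}\bm{\theta}^{(2)}$ block, we differentiate $\log\det\bm{\Omega}+\mathbf{Z}^\top\bm{\Omega}^{-1}\mathbf{Z}$ twice in $\bm{\theta}^{(2)}$. The conditional expectation of $\mathbf{Z}\mathbf{Z}^\top$ is $h\bm{\Sigma}\bm{\Sigma}^\top_0$, and the $h$ factors cancel in $\bm{\Omega}^{-1}\partial\bm{\Omega}$. Hence
\begin{equation*}
\frac1N\sum_k\Bigl[-\tr(\bm{\Sigma}\bm{\Sigma}_0^{\top-1}\partial_{j_1}\bm{\Sigma}\bm{\Sigma}^\top\bm{\Sigma}\bm{\Sigma}_0^{\top-1}\partial_{j_2}\bm{\Sigma}\bm{\Sigma}^\top)+2\tr(\cdots)\Bigr] \to 2\mathbf{C}_{\bm{\theta}^{(2)}}.
\end{equation*}
The mixed block carries the factor $\frac1{N\sqrt h}$, and its summands have the form $\partial_i\mathbf{Z}^\top\partial_j(\bm{\Omega}^{-1})\mathbf{Z}$, which is of size $h\cdot h^{-1}\cdot\|\mathbf{Z}\|$. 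Their conditional mean is $O(h)$, giving a total of $N h/(N\sqrt h)=\sqrt h\to0$. Their conditional variance is $O(h)$, giving $Nh/(N^2h)\to0$.

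\textbf{Step 3: uniform continuity.} For the second claim, we bound third derivatives of $\mathcal{L}^\mathrm{[SS]}$ in $\bm{\theta}$, scaled block-wise, uniformly over $\bm{\theta}\in\overline\Theta$ by $\frac1N\sum_k C(1+\|\mathbf{X}_{t_{k-1}}\|+\|\mathbf{X}_{t_k}\|)^C$ times $O(1)$ in probability. The mean value theorem then gives a bound $\rho_N\cdot O_\mathbb{P}(1)\to0$.

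The main obstacle is the mixed and $\bm{\theta}^{(1)}$-type terms away from $\bm{\theta}_0$. There, $\mathbf{Z}(\bm{\theta}^{(1)})$ contains the drift difference $h(\mathbf{F}_0-\mathbf{F})$, so a naive bound on the mixed block grows like $h^{-1/2}$. We would first try to show that the drift-difference term contributes $O(\rho_N\sqrt{h}\cdot h^{-1/2}) = O(\rho_N)$, exploiting the $\sqrt{h}$ normalization. If needed, we would fall back on a tightness argument à la Kessler (1997, Lemma 10).
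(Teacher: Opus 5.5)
Your overall route matches the paper's: block-wise differentiation, conditional moments of $\mathbf{Z}_{t_k}$, the ergodic theorem with Lemma 9 of Genon-Catalot--Jacod, and continuity for the second claim. Your $\bm{\theta}^{(2)}\bm{\theta}^{(2)}$ and mixed blocks are fine. The gap is in the $\bm{\theta}^{(1)}\bm{\theta}^{(1)}$ block. There you discard the non-leading terms one at a time, but several are of order one after the $1/(Nh)$ scaling and vanish only by cancelling in pairs.

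(i) $\partial^2_{\bm{\theta}^{(1)}}\mathbf{Z}_{t_k}$ contains $\partial^2_{\bm{\theta}^{(1)}}\bm{f}_{h/2}^{-1}(\mathbf{X}_{t_k}) = -\frac{h}{2}\partial^2_{\bm{\theta}^{(1)}}\mathbf{N}(\mathbf{X}_{t_k}) + \mathbf{R}(h^2,\mathbf{X}_{t_k})$, which is not $\mathcal{F}_{t_{k-1}}$-measurable. Its correlation with $\mathbf{Z}_{t_k}$ gives, at $\bm{\theta}_0$, $\mathbb{E}[2\,\partial^2\mathbf{Z}^\top\bm{\Omega}^{-1}\mathbf{Z}\mid\mathcal{F}_{t_{k-1}}] = -h\tr D\partial^2_{\bm{\theta}^{(1)}}\mathbf{N}(\mathbf{X}_{t_{k-1}})$ plus higher order. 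After division by $Nh$ these conditional means therefore converge to $-\int\tr D\partial^2_{\bm{\theta}^{(1)}}\mathbf{N}\dif\nu_0$, not to $0$.

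(ii) The Jacobian term is not negligible. Since $-2\log|\det D\bm{f}_{-h/2}| = h\tr D\mathbf{N} + R(h^2,\cdot)$, it contributes $\frac{1}{N}\sum_k\tr D\partial^2_{\bm{\theta}^{(1)}}\mathbf{N}(\mathbf{X}_{t_k})\to\int\tr D\partial^2_{\bm{\theta}^{(1)}}\mathbf{N}\dif\nu_0$. Your count $O(Nh)\cdot h/(Nh)$ is off by a factor $1/h$. Only the sum of (i) and (ii) vanishes.

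(iii) You drop the $\bm{\theta}^{(1)}$-dependence of $\bm{\Omega}_h^{\mathrm{[SS]}}$, which enters at order $h^2$ through $\mathbf{A}$, $\mathbf{b}$, $\bm{f}_{h/2}$ and $\mathbb{L}\bm{\Sigma}\bm{\Sigma}^\top$. Since $\bm{\Omega}^{-1}\sim h^{-1}$, this makes $\partial^2_{\bm{\theta}^{(1)}}(\bm{\Omega}^{-1}) = O(1)$. Hence $\partial^2_{\bm{\theta}^{(1)}}\log\det\bm{\Omega}$ and $\mathbf{Z}^\top\partial^2_{\bm{\theta}^{(1)}}(\bm{\Omega}^{-1})\mathbf{Z}$ are each $O(h)$ per step, i.e. $O(1)$ after scaling. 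They cancel only in conditional mean at $\bm{\theta}_0$, through $\mathbb{E}[\mathbf{Z}\mathbf{Z}^\top\mid\mathcal{F}_{t_{k-1}}] = h\bm{\Sigma}\bm{\Sigma}_0^\top + \mathbf{R}(h^2,\cdot)$. Your Step 1 expansion $\bm{\Omega}_h^{\mathrm{[SS]}} = h\bm{\Sigma}\bm{\Sigma}^\top + \mathbf{R}(h^2,\cdot)$ hides these terms entirely.

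As written, the drift block reaches the right limit only because two wrong estimates happen to offset, and it misses the other pair. The missing idea is what the paper's approximate contrast $\mathcal{L}_N^{\mathrm{[SS]}}$ makes explicit: the order-$h$ corrections appear as centred pairs. One is $-h\bigl(\bm{\eta}_0^\top D\mathbf{F}^\top(\bm{\Sigma}\bm{\Sigma}^\top)^{-1}\bm{\eta}_0 - \tr D\mathbf{F}\bigr)$, which merges $\tr\mathbf{A}$ from $\log\det\bm{\Omega}$ and $\tr D\mathbf{N}$ from the Jacobian with the $\mathbf{A}$- and $\mathbf{N}$-parts of the quadratic form. The other is the analogous pair built from $\mathbb{L}_{\bm{\theta}}\bm{\Sigma}\bm{\Sigma}^\top$. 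Their $\bm{\theta}^{(1)}$-derivatives have conditional mean $R(h^2,\cdot)$ at $\bm{\theta}_0$.

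Within your own framework, the fix has two parts. First, use the identity $\partial^2\log\det\bm{\Omega} + \tr\bigl(\partial^2(\bm{\Omega}^{-1})\,\mathbb{E}[\mathbf{Z}\mathbf{Z}^\top\mid\mathcal{F}_{t_{k-1}}]\bigr) = O(h^2)$ for $\bm{\theta}^{(1)}$-derivatives at $\bm{\theta}_0$. Second, combine (i) and (ii) as a change-of-variables compensation. Only then can Lemma 9 be applied.

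For the second claim, your absolute third-derivative bound also fails. The drift block contains $\frac{1}{N\sqrt{h}}\sum_k\bm{\eta}_0^\top(\cdots)$, which is $O(h^{-1/2})$ in absolute value and small only as a martingale sum. So you do need your fallback: uniform-in-$\bm{\theta}$ convergence of $\mathbf{C}_N^{\mathrm{[SS]}}(\bm{\theta})$ to a continuous limit, which is how the paper concludes.
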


\begin{lemma} \label{lemma:LnConvergence}
Let $\bm{\lambda}_N^\mathrm{[SS]}$ be defined as in \eqref{eq:lambda}. As $h \to 0$, $Nh \to \infty$ and $Nh^2 \to 0$, it holds that
\begin{align*}
        &\bm{\lambda}_N^\mathrm{[SS]}  \xrightarrow[\substack{h \to 0 \\ Nh \to \infty\\ Nh^2 \to 0}]{d} \mathcal{N}(\bm{0}, 4\mathbf{C}^\mathrm{[SS]}(\bm{\theta}_0))
\end{align*}
under $\mathbb{P}_{\bm{\theta}_0}$.
\end{lemma}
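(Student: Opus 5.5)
We would prove Lemma~\ref{lemma:LnConvergence} by writing the scaled score as a sum of martingale-difference triangular arrays plus negligible remainders, and then applying the central limit theorem for triangular arrays of martingale differences (e.g.\ Theorems 3.2 and 3.4 in Hall and Heyde, as used by \citet{Kessler1997}). The first step is to expand the ingredients of $\mathcal{L}^\mathrm{[SS]}$ in $h$. Using \eqref{eq:mu_h}, the semigroup property \eqref{eq:fhassociativity} and a Taylor expansion of $\bm{f}_{\pm h/2}$, we get $\mathbf{Z}_{t_k}(\bm{\theta}^{(1)}) = \mathbf{X}_{t_k}-\mathbf{X}_{t_{k-1}} - h\mathbf{F}(\mathbf{X}_{t_{k-1}};\bm{\theta}^{(1)}) + \mathbf{R}(h^{2},\cdot)+(\text{terms of order } h\,\|\mathbf{X}_{t_k}-\mathbf{X}_{t_{k-1}}\|)$. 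From \eqref{eq:Omega_vec} we get $\bm{\Omega}_h^\mathrm{[SS]}(\mathbf{x};\bm{\theta}) = h\,\bm{\Sigma}\bm{\Sigma}^\top(\mathbf{x};\bm{\theta}^{(2)}) + \mathbf{R}(h^2,\mathbf{x})$, and $\log|\det D\bm{f}_{-h/2}| = -\tfrac h2 \tr D\mathbf{N} + R(h^2,\cdot)$. We then differentiate term by term.

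\emph{Drift block.} The dominant term of $-\partial_{\theta^{(1)}_i}\mathcal{L}^\mathrm{[SS]}(\bm{\theta}_0)/\sqrt{Nh}$ is
\[
\frac{2}{\sqrt{Nh}}\sum_{k=1}^N h\,(\partial_{\theta^{(1)}_i}\mathbf{F}_0(\mathbf{X}_{t_{k-1}}))^\top (h\bm{\Sigma}\bm{\Sigma}_0^\top)^{-1}\bm{\Sigma}(\mathbf{X}_{t_{k-1}})\Delta\mathbf{W}_k .
\]
The log-determinant and Jacobian derivatives contribute $O(\sqrt{Nh}\,h)$ deterministic pieces, which must cancel to leading order with the Itô correction in $\mathbb{E}[\mathbf{Z}_{t_k}\mid\mathcal{F}_{t_{k-1}}]$. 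Strang splitting makes this conditional mean match up to $\mathbf{R}(h^2)$, so the remaining bias is $O(\sqrt{N h^{3}})\to 0$, and in fact $O(\sqrt{Nh^2})$ at worst.

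\emph{Diffusion block.} The dominant term is
\[
\frac{1}{\sqrt N}\sum_k \tr\!\Big(\bm{\Omega}^{-1}\partial_{\theta^{(2)}_j}\bm{\Omega}\,(\bm{\Omega}^{-1}\mathbf{Z}\mathbf{Z}^\top - \mathbf{I})\Big),
\]
which is a martingale difference up to $\mathbf{R}(h)$ corrections, giving a bias of order $\sqrt N h\to0$ precisely because $Nh^2\to0$.

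For each block we verify the Hall--Heyde conditions. First, $\sum_k \mathbb{E}[\eta_k\mid\mathcal{F}_{t_{k-1}}]\to0$. Second, the conditional covariances converge. For this we use Gaussian-like conditional moments of increments (computed from the Itô--Taylor/generator expansion under \ref{as:F_polynomial_growth}, with moment bounds from \ref{as:monoton_condition}), which give $\mathbb{E}[\Delta\mathbf{X}\Delta\mathbf{X}^\top\mid\mathcal{F}] = h\bm{\Sigma}\bm{\Sigma}_0^\top + \mathbf{R}(h^2)$ and fourth moments $3$-type Isserlis structure $+\,\mathbf{R}(h^3)$. The ergodic theorem \ref{as:Ergodic}, via Riemann-sum approximation $\frac1N\sum g(\mathbf{X}_{t_{k-1}})\to\int g\,\dif\nu_0$, then yields the limits $4\mathbf{C}_{\bm{\theta}^{(1)}}$ and $4\mathbf{C}_{\bm{\theta}^{(2)}}$. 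The cross covariances are $\sqrt h$-small because they involve odd conditional moments of $\Delta\mathbf{W}$, which yields the block-diagonal form. Third, we verify a Lyapunov condition $\sum_k\mathbb{E}[\|\eta_k\|^4\mid\mathcal{F}]\to0$.

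The main obstacle is the exact bookkeeping of the $O(h)$ bias in the drift score: showing that the deterministic contributions from $\partial\log\det\bm{\Omega}^\mathrm{[SS]}$ and from $\partial\log|\det D\bm{f}_{-h/2}|$ cancel against $\mathbb{E}[\mathbf{Z}_{t_k}^\top\bm{\Omega}^{-1}\partial\mathbf{Z}_{t_k}]$ up to $\mathbf{R}(h^2)$, uniformly with polynomial weights. In the hypoelliptic case this is harder still, because $\bm{\Omega}^{-1}$ has entries of different orders. We would first attempt the symmetric Strang expansion of $\mathbf{Z}_{t_k}$, reusing Proposition-type one-step moment expansions as in \citep{pilipovic2024}, and treat the state-dependence of $\bm{\Omega}$ by a second-order expansion in $\mathbf{X}_{t_k}-\mathbf{X}_{t_{k-1}}$.
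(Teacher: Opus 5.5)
Your plan is essentially the paper's. The paper also expands the contrast in $h$ (the surrogate \eqref{eq:asymptotic_S_obj}, built from Proposition~\ref{prop:ZtkandZtkbar}, Proposition~\ref{prop:OmegahProp} and Lemma~\ref{lemma:fh}) and writes $\bm{\lambda}_N^{\mathrm{[SS]}}$ in \eqref{eq:LN_rewritten} as the conditionally centred arrays \eqref{eq:phiC}--\eqref{eq:rhoC} plus an $O(\sqrt{Nh^2})$ remainder. It closes with a triangular-array martingale CLT (Proposition~3.1 in \citep{CRIMALDI2005571} instead of Hall--Heyde), and the $O(h)$ cancellation you flag as the main obstacle is built into that expansion: the Jacobian term, the $\mathbf{A}$-parts of $\log\det\bm{\Omega}_h^{\mathrm{[SS]}}$ and $\bm{\Omega}_h^{\mathrm{[SS]}}(\mathbf{x})^{-1}$, and the $-\tfrac{h^{3/2}}{2}D\mathbf{N}\,\bm{\eta}$ term of $\mathbf{Z}_{t_k}$ merge into $-h\bigl(\bm{\eta}_0^\top D\mathbf{F}^\top\bm{\Sigma}\bm{\Sigma}^\top(\mathbf{X}_{t_{k-1}})^{-1}\bm{\eta}_0-\tr D\mathbf{F}\bigr)$, whose $\bm{\theta}^{(1)}$-derivative has conditional mean $O(h^2)$ at $\bm{\theta}_0$, with the $\mathbb{L}\bm{\Sigma}\bm{\Sigma}^\top$ terms pairing up the same way. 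One correction to your bookkeeping: before this cancellation those deterministic pieces are of total order $\sqrt{Nh}$, not $\sqrt{Nh}\,h$, and they are cancelled by second-moment terms — $\mathbb{E}[\mathbf{Z}_{t_k}^\top\partial(\bm{\Omega}^{-1})\mathbf{Z}_{t_k}\mid\mathcal{F}_{t_{k-1}}]$ and the correlation of $\partial\bm{f}_{h/2}^{-1}(\mathbf{X}_{t_k})$ with $\mathbf{Z}_{t_k}$ — rather than by the It\^o correction in $\mathbb{E}[\mathbf{Z}_{t_k}\mid\mathcal{F}_{t_{k-1}}]$, which is already $O(h^2)$ and only yields the $O(\sqrt{Nh^3})$ bias.
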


\section{Simulation study} \label{sec:Simulations}

In this section, we conduct a simulation study of the multivariate Wright-Fisher diffusion \eqref{eq:SDEWF} and the Student Kramers oscillator \eqref{eq:StudentKramersSDE}, demonstrating the theoretical and computational performance of our proposed estimator and comparing it to the EM, GA, and Ozaki's LL estimators. The comparison with these three methods is motivated by their widespread use and performance: the EM estimator is commonly applied in practice, our proposed method generalizes the GA estimator, and the LL estimator is recognized as one of the state-of-the-art frequentist based methods, as discussed in \cite{pilipovic2024, pilipovic2024SecondOrder}. In Section \ref{sec:TechnicalDetails}, we briefly explain the EM, GA, and LL estimators and provide references. 

The LL estimator is only defined for SDEs with constant diffusion coefficients. For reducible diffusions \citep{AitSahalia2008}, the Lamperti transform is used to obtain an SDE with constant diffusion coefficients. The Student Kramers oscillator is reducible, and the LL estimator can thus also be applied. For non-reducible SDEs such as the Wright-Fisher diffusion, we apply LL by assuming that the diffusion matrix $\bm{\Sigma}(\mathbf{X}_t)$ is constant locally on the intervals $[t_k, t_{k+1})$, i.e., $\bm{\Sigma}(\mathbf{X}_t) = \bm{\Sigma}(\mathbf{X}_{t_k})$, for $t\in [t_k, t_{k+1})$. This idea is also used by \cite{melnykova2020parametric} for hypoelliptic SDEs.

For evaluating and optimizing the objective functions across all simulation studies, we implement the estimators in \texttt{Python} using \texttt{JAX} \citep{jax2018github}  with 64-bit precision. This framework provides exact analytical gradients via automatic differentiation (\texttt{jax.value\_and\_grad}) and highly efficient execution through just-in-time (JIT) compilation. Source code is available at \citep{pilipovic2026code}.

\subsection{Multivariate Wright-Fisher diffusion} \label{sec:WFsimulation}

We use a one-locus model with four alleles, $L = 1$ and $M = 4$, so $\mathbf{S} = \mathbf{0}$ and the model reduces to \eqref{eq:WF_final_vec}, with $\mathbf{q} = (q_1, q_2, q_3, q_4)^\top$, $\bm{P} \in \mathbb{R}^{4 \times 4}_+$ as in \eqref{eq:P}, diffusion matrix $\bm{\Sigma}\bm{\Sigma}^\top(\mathbf{X}_t) = \diag(\mathbf{X}_t) - \mathbf{X}_t\mathbf{X}_t^\top$, and simplex constraint $\sum_{i=1}^4 X_t^{(i)} = 1$. Since one component is determined by the others, we reduce \eqref{eq:WF_final_vec} to a three-dimensional system
\begin{equation} \label{eq:WF_final_reduced_vec}
    \dif \mathbf{X}_t = (\bm{\kappa} + \bm{\kappa}  \mathbf{X}_t - \mathbf{X}_t\mathbf{X}_t^\top \bm{\lambda})\dif t + \bm{\Sigma}(\mathbf{X}_t) \dif \mathbf{W}_t,
\end{equation}
where $\mathbf{X}_t = (X_t^{(1)}, X_t^{(2)}, X_t^{(3)})$ and
\begin{align*}
    \bm{\kappa} \coloneqq \frac{\tau}{2} 
    \begin{bmatrix} 
    p_{41} \\ 
    p_{42} \\ 
    p_{43} 
    \end{bmatrix}, \ \bm{\lambda} \coloneqq 
    \begin{bmatrix} 
    q_1 - q_4 \\ 
    q_2 - q_4 \\ 
    q_3 - q_4 
    \end{bmatrix}, \ \mathbf{K} \coloneqq \frac{\tau}{2} 
    \begin{bmatrix} 
    p_{11} - p_{41} + \frac{2}{\tau} \lambda_1 -1 & p_{21} - p_{41} & p_{31} - p_{41} \\ 
    p_{12} - p_{42} & p_{22} - p_{42} + \frac{2}{\tau} \lambda_2 -1 & p_{23} - p_{42} \\ 
    p_{13} - p_{43} & p_{32} - p_{43} & p_{33} - p_{43} + \frac{2}{\tau} \lambda_3 - 1 \end{bmatrix}.
\end{align*}
Note that only $\lambda_i = q_i - q_4$ and $\tau p_{ij}$ are identifiable, not $q_i$, $q_4$, $\tau$ and $p_{ij}$ separately. We estimate the 15 parameters 
\begin{equation}
    \bm{\theta} = (\kappa_1, \kappa_2, \kappa_3, K_{11}, K_{12}, K_{13}, K_{21}, K_{22}, K_{23}, K_{31}, K_{32}, K_{33}, \lambda_1, \lambda_2, \lambda_3), \label{eq:thetaWF}
\end{equation}
and transform them back to $\tilde{\bm{\theta}} = (q_1, q_2, q_3, p_{11}, p_{12}, p_{21}, \ldots, p_{43})$ using the true parameters $\tau_0$ and $q_4^0$.

We set the true parameters to 
\begin{equation}
\mathbf{q}_0 = 
\begin{bmatrix}
25 \\
40 \\
30 \\
10
\end{bmatrix}, \qquad
\mathbf{P}_0 =
\begin{bmatrix}
0.2 & 0.3 & 0.15 & 0.35 \\
0.2 & 0.05 & 0.35 & 0.40 \\
0.25 & 0.6 & 0.1 & 0.05 \\
0.15 & 0.1 & 0.1 & 0.65
\end{bmatrix}
\end{equation}
Sample paths of length $T = 20$ were simulated using the EM discretization with step size $h^{\mathrm{sim}} = 0.0002$ to ensure high accuracy. We sub-sampled to reduce discretization bias to time steps $h = 0.2$ and sample size $N = 100$. We repeated the simulations to obtain 1000 data sets.

The drift of \eqref{eq:WF_final_reduced_vec} is
\begin{align}
    \mathbf{F}(\mathbf{x}) = \bm{\kappa} + \bm{K}\mathbf{x} - \mathbf{x}\mathbf{x}^\top \bm{\lambda} \label{eq:F_WF}
\end{align}
with derivative
\begin{align}
    &D \mathbf{F}(\mathbf{x}) = \bm{K} - \mathbf{x}^\top \bm{\lambda}\mathbf{I} - \mathbf{x} \bm{\lambda}^\top.
    \label{eq:DF_WF}
\end{align}
We use $\mathbf{F}(\mathbf{x})$ and $D\mathbf{F}(\mathbf{x})$ in \eqref{eq:F_WF} -- \eqref{eq:DF_WF}  and $\bm{\Sigma} \bm{\Sigma}^\top(\mathbf{x})$ in \eqref{eq:SigmaSigmaT_final_WF} to implement the EM and LL estimators. For the GA estimator, we used \eqref{eq:GA} and \texttt{Wolfram Mathematica} to calculate $\bm{\Omega}_h^\mathrm{[GA]}$ up to order $h^3$ (computations are not provided).

The inverse of $\bm{\Sigma}\bm{\Sigma}^\top(\mathbf{x}) = \diag(\mathbf{x}) - \mathbf{x}\mathbf{x}^\top$ is $\bm{\Sigma}\bm{\Sigma}^\top(\mathbf{x})^{-1} = \diag(\mathbf{x})^{-1} + \frac{\mathbf{1}\mathbf{1}^\top}{1 - \mathbf{1}^\top \mathbf{x}}$. Then, we can calculate the information matrix $\mathbf{C}({\bm{\theta}_0})$ from Theorem \ref{thm:AsymtoticNormality} for the parameter vector $\bm{\theta}$ in \eqref{eq:thetaWF} as
\begin{align*}
    \mathbf{C}(\bm{\theta}_0) = \mathbf{C}=
    \mathbb{E}_{\nu_0}
    \begin{bmatrix}
        \diag(\mathbf{X})^{-1} + \dfrac{\mathbf{1}\mathbf{1}^\top}{X^{(4)}}
        & \left(\diag(\mathbf{X})^{-1} + \dfrac{\mathbf{1}\mathbf{1}^\top}{X^{(4)}}\right)\otimes \mathbf{X}^\top
        & -\dfrac{\mathbf{1}\,\mathbf{X}^\top}{(X^{(4)})^2}
        \\[14pt]
        \star
        & \left(\diag(\mathbf{X})^{-1} + \dfrac{\mathbf{1}\mathbf{1}^\top}{X^{(4)}}\right)\otimes \mathbf{X}\mathbf{X}^\top
        & -\dfrac{\mathbf{1} \otimes \mathbf{X}\mathbf{X}^\top}{(X^{(4)})^2}
        \\[14pt]
        \star & \star
        & \dfrac{\mathbf{1}^\top\mathbf{X}}{X^{(4)}}\,\mathbf{X}\mathbf{X}^\top
    \end{bmatrix},    
\end{align*}
where $\mathbb{E}_{\nu_0}$ denotes expectation under the stationary distribution $\nu_0$ of \eqref{eq:WF_final_reduced_vec} and $\star$ denotes the symmetric counterpart. All expectations are approximated by ergodic averages over the simulated trajectory.

Then, Theorem~\ref{thm:AsymtoticNormality} yields
\begin{align*}
    \sqrt{Nh}\,\bigl(\hat{\bm{\theta}}_N - \bm{\theta}_0\bigr) \xrightarrow{d} \mathcal{N}\!\left(\mathbf{0},\, \mathbf{C}^{-1}\right) \qquad \text{as } N \to \infty,\ h \to 0,\ Nh \to \infty,\ Nh^2 \to 0.
\end{align*}
Since $\tilde{\bm{\theta}} =\bm{\phi}(\bm{\theta}) = \mathbf{J}\bm{\theta} + \mathbf{c}$ is affine in $\bm{\theta}$ (given the known constants $\tau_0$ and $q_4^0$), with $\mathbf{J} \in \mathbb{R}^{15 \times 15}$ straightforwardly computed from the back-transformation, the asymptotic distribution of $\tilde{\bm{\theta}}$ follows exactly
\begin{align*}
    \sqrt{Nh}\,\bigl(\tilde{\bm{\theta}}_N - \tilde{\bm{\theta}}_0\bigr)
    = \mathbf{J}\,\sqrt{Nh}\,\bigl(\hat{\bm{\theta}}_N - \bm{\theta}_0\bigr)
    \xrightarrow{d} \mathcal{N}\!\left(\mathbf{0},\,
    \mathbf{J}\,\mathbf{C}^{-1}\mathbf{J}^\top\right).
\end{align*}
The asymptotic standard deviations displayed in Figure~\ref{fig:WF_final_plot} are the square roots of the diagonal entries of $\mathbf{J}\,\widehat{\mathbf{C}}^{-1}\mathbf{J}^\top / (Nh)$, where $\widehat{\mathbf{C}}$ is estimated by ergodic averages over the simulated trajectory.

\subsubsection{Strang splitting estimator} \label{sec:WF_strang}

Usually, the drift is split such that the linear part is the linearization around a fixed point. However, analytic roots of $\mathbf{F}$ in \eqref{eq:F_WF} are not available, so we propose a splitting that solves an equation involving the Jacobian of $\mathbf{F}$. For polynomial drifts, this reduces the order of the equations by one, implying here that we solve a linear system since $\mathbf{F}$ is quadratic.

In the first step, we choose $\mathbf{A}$ as $\mathbb{E}[D \mathbf{F}(\mathbf{X_t})]$, such that $\mathbb{E}[D \mathbf{N}(\mathbf{X_t})] = \mathbb{E}[D \mathbf{F}(\mathbf{X_t})] - \mathbf{A} = \mathbf{0}$. Then, we find $\mathbf{b}$ by
\begin{equation}
    \mathbf{A} = \mathbb{E}[D \mathbf{F}(\mathbf{X}_t)] = D \mathbf{F}(\mathbf{b}). \label{eq:findingb}
\end{equation}
In practice, we use empirical means instead of expectations. 

Since $D \mathbf{F}(\mathbf{x})$ is linear, \eqref{eq:findingb} leads to $\mathbf{b} = \mathbb{E} [\mathbf{X}_t]$, and $\mathbf{A} = D \mathbf{F}(\mathbb{E} [\mathbf{X_t}]) =D \mathbf{F}(\mathbf{b})$, that is,
\begin{equation*}
    \mathbf{A} = D \mathbf{F}(\mathbf{b}) = \bm{K} - \mathbf{b}^{\top}\bm{\lambda}\mathbf{I} - \mathbf{b} \bm{\lambda}^\top.
\end{equation*}
Then, $\mathbf{N}(\mathbf{x})$ is 
\begin{align*}
    \mathbf{N}(\mathbf{x}) &= \mathbf{F}(\mathbf{x}) - \mathbf{A}(\mathbf{x} - \mathbf{b})\\
    &=\bm{\kappa} + \bm{K}\mathbf{x} - \mathbf{x}\mathbf{x}^\top \bm{\lambda}- (\bm{K} - \mathbf{b}^\top \bm{\lambda}\mathbf{I} - \mathbf{b} \bm{\lambda}^\top)(\mathbf{x} - \mathbf{b})\\
    &= \bm{\kappa} + \bm{K}\mathbf{b} - \mathbf{b}^\top \bm{\lambda} \mathbf{b} - \mathbf{b} \bm{\lambda}^\top \mathbf{b} + (\mathbf{b}^\top \bm{\lambda} \mathbf{I} + \mathbf{b}\bm{\lambda}^\top) \mathbf{x} - \mathbf{x}\mathbf{x}^\top \bm{\lambda}\\
    &=\mathbf{F}(\mathbf{b}) - (\mathbf{x} - \mathbf{b})(\mathbf{x} - \mathbf{b})^\top \bm{\lambda}. 
\end{align*}
Thus,
\begin{align*}
    D\mathbf{N}(\mathbf{x}) = -(\mathbf{x} - \mathbf{b})^\top \bm{\lambda} \mathbf{I}  -  (\mathbf{x}-\mathbf{b})\bm{\lambda}^\top,
\end{align*}
which indeed has zero expectation.
To solve the ODE in \eqref{eq:SplittingEq2}, we use a Runge-Kutta fourth order approximation
\begin{align*}
    &\mathbf{k}_1(\mathbf{x}) = \mathbf{N}(\mathbf{x}), &&D\mathbf{k}_1(\mathbf{x}) = D\mathbf{N}(\mathbf{x})\\
    &\mathbf{k}_2(\mathbf{x}, h) = \mathbf{N}(\mathbf{x} + \frac{h}{2} \mathbf{k}_1(\mathbf{x})), && D\mathbf{k}_2(\mathbf{x}, h) = D\mathbf{N}(\mathbf{x} + \frac{h}{2} \mathbf{k}_1(\mathbf{x}))(\mathbf{I} + \frac{h}{2}D\mathbf{k}_1(\mathbf{x})), \\  
    &\mathbf{k}_3(\mathbf{x},h) = \mathbf{N}(\mathbf{x} + \frac{h}{2} \mathbf{k}_2(\mathbf{x}, h)), &&D\mathbf{k}_3(\mathbf{x}, h) = D\mathbf{N}(\mathbf{x} + \frac{h}{2} \mathbf{k}_2(\mathbf{x}, h))(\mathbf{I} + \frac{h}{2}D\mathbf{k}_2(\mathbf{x},h)),\\  
    &\mathbf{k}_4(\mathbf{x}, h) = \mathbf{N}(\mathbf{x} + h \mathbf{k}_3(\mathbf{x}, h)), &&D\mathbf{k}_4(\mathbf{x}, h) = D\mathbf{N}(\mathbf{x} + h\mathbf{k}_3(\mathbf{x}, h))(\mathbf{I} + h D\mathbf{k}_3(\mathbf{x}, h)).
\end{align*}
Then, we approximate $\bm{f}_h(\mathbf{x})$ as
\begin{equation*}
    \bm{f}_h(\mathbf{x}) \approx \bm{f}_h^{[\mathrm{RK4}]}(\mathbf{x}) = \mathbf{x} + \frac{h}{6}(\mathbf{k}_1(\mathbf{x}) + 2 \mathbf{k}_2(\mathbf{x}, h) + 2\mathbf{k}_3(\mathbf{x}, h)+ \mathbf{k}_4(\mathbf{x}, h)).
\end{equation*}
To implement the Strang splitting estimator, we still need to compute $\log|\det (D\bm {f}_{-h/2}(\mathbf{x}))|$. We know that 
\begin{equation}
    D\bm{f}_{h}(\mathbf{x}) \approx D\bm{f}_{h}^{[\mathrm{RK4}]}(\mathbf{x}) = \mathbf{I} + \frac{h}{6}(D\mathbf{k}_1(\mathbf{x}) + 2 D\mathbf{k}_2(\mathbf{x},h) + 2D\mathbf{k}_3(\mathbf{x}, h)+ D\mathbf{k}_4(\mathbf{x}, h)), \label{eq:Dfh}
\end{equation}
then, we find $D\bm {f}_{-h/2}(\mathbf{x}))$ by plugging $-h/2$ in \eqref{eq:Dfh}.
\begin{remark}
    While we use \eqref{eq:Dfh} in the implementation when \eqref{eq:SplittingEq2} does not have a closed-form solution, we also know that $-2\log|\det (D\bm {f}_{-h/2}(\mathbf{x}))| = h\tr (D\mathbf{N}(\mathbf{x})) + \mathbf{R}(h^2, \mathbf{x})$. Since $\mathbb{E}[D\mathbf{N}(\mathbf{x})] = 0$ in our splitting, we expect that the term $-2\log|\det (D\bm {f}_{-h/2}(\mathbf{x}))| $ does not contribute much to the overall value of the objective function \eqref{eq:S_obj}.
\end{remark}

Now, we have all components needed to calculate $\mathbf{Z}_{t_k}(\bm{\theta^{(1)}})$ from \eqref{eq:Ztk}. For details on calculating $\bm{\Omega}_h(\mathbf{x})$ in \eqref{eq:Omegah_int}, see Supplementary Material~\ref{sec:Implementation}.

\begin{figure}
    \centering
    \includegraphics[width=\textwidth]{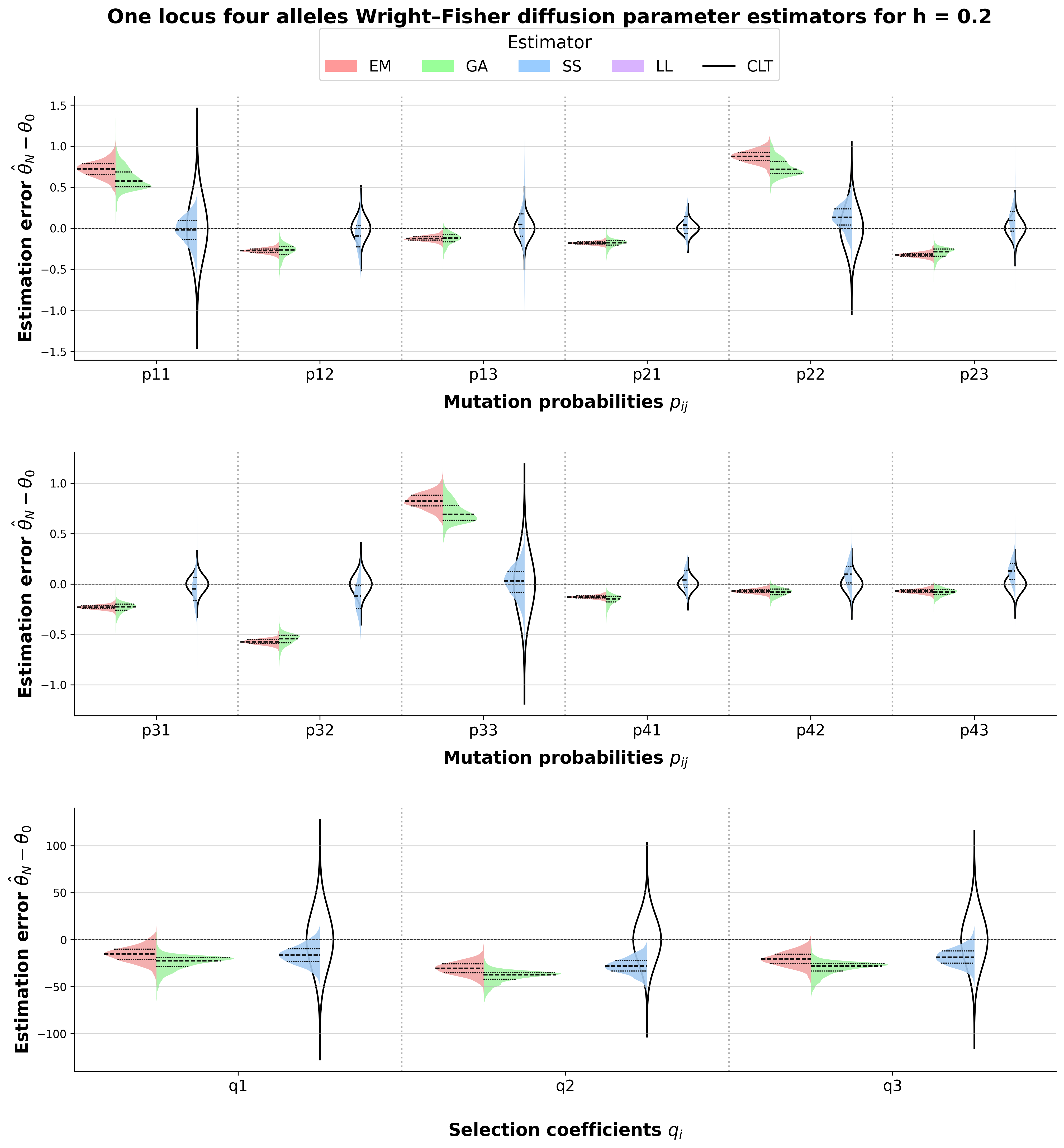}
    \caption{Wright-Fisher diffusion. Violin plots of parameter estimation errors $\hat{\bm{\theta}}_N - \bm{\theta}_0$ based on 1000 simulated datasets ($T = 20$, $N = 100$, $h = 0.2$). Colors indicate estimator. Inside each violin plot, black dashed lines show medians and the 25th and 75th percentiles. Black violin plots represent the asymptotic distributions obtained with the central limit theorem (CLT). The LL estimator returned NaN for each dataset. For the SS and GA estimator, we removed approximatly 5\% outliers.}
    \label{fig:WF_final_plot}
\end{figure}

\subsubsection{Optimization}

The optimization proceeds in two phases. In the first phase, the Adam optimizer \citep{kingma2015adam}, implemented through the \texttt{optax} library \citep{deepmind2020jax}, is run for up to \num{1000} iterations with a learning rate of \texttt{lr = 0.01}. Initial parameters are set to
\begin{equation*}
\mathbf{q}_\text{init} = 
\begin{bmatrix}
11 \\
11 \\
11 \\
10
\end{bmatrix}, \qquad
\mathbf{P}_\text{init} =
\begin{bmatrix}
0.53 & 0.05 & 0.05 & 0.37 \\
0.05 & 0.53 & 0.05 & 0.37 \\
0.05 & 0.05 & 0.53 & 0.37 \\
0.05 & 0.05 & 0.05 & 0.85
\end{bmatrix}.
\end{equation*}
Early stopping is applied with a tolerance of $10^{-6}$ when no sufficient decrease in the objective is observed. In the second phase, the Adam solution is refined using the BFGS algorithm \citep{nocedal2006numerical}, available as \texttt{jax.scipy.optimize.minimize} with \texttt{method = "BFGS"}.

\subsubsection{Results}

Figure~\ref{fig:WF_final_plot} presents the simulation results for the Wright--Fisher model, showing the distributions of the parameter estimation errors $\hat{\bm{\theta}}_N - \bm{\theta}_0$ for the three estimators, EM, GA and SS, based on 1000 simulated datasets with $T = 20$, $N = 100$ observations, and step size $h = 0.2$. The LL estimator did not converge for this step size. In Supplementary Material \ref{sec:ResultsWF}, we show results for $h = 0.02$, where SS and LL perform comparable, while still outperforming EM and GA. Moreover, the asymptotic regime is obtained. 

The distribution of the estimates of the mutation probabilities are shown in the first two rows of Figure~\ref{fig:WF_final_plot}. Each panel additionally displays the outline of the theoretical asymptotic normal distribution of the SS estimator (shown in black), derived via the delta method from the Fisher information matrix. For large $h = 0.2$ and small $N = 100$, none of the estimators have yet reached this asymptotic regime, as evidenced by the discrepancy between the empirical distributions and the theoretical envelope. Estimator SS has smaller bias than both EM and GA across all parameters except $p_{42}$, $p_{43}$, and remains closest to the asymptotic distribution. EM and GA strongly overestimate the probabilities of remaining in the same allelic state, $p_{11}$, $p_{22}$, and $p_{33}$. By the row-sum constraints of the transition matrix, this induces underestimation of all off-diagonal entries. The bias of SS is negligible in comparison. EM and GA have lower variance, reflected in a narrow interquartile range (IQR, the interval between the 25th and 75th percentiles, shown as dashed lines), yet the IQR does not cover the true parameter values, indicating that these estimators are overconfident and biased.

The estimated distributions of the selection coefficients are depicted in the last row of Figure~\ref{fig:WF_final_plot}. All estimators show similar performance. These parameters are harder to identify, since changes in their values produce trajectories that are nearly indistinguishable under the model likelihood, leading to higher estimation uncertainty for all methods.

Outliers were removed based on an interquartile range rule applied to the estimation errors. For each parameter, the central range containing the middle half of the errors was identified across all datasets. Any dataset for which at least one parameter error was more than three interquartile ranges below or above the central range was flagged as an outlier for that estimator. Using this rule removed about 4.5\% of the datasets for the SS estimator and 7\% for the GA estimator.

\subsection{The Student Kramers oscillator} 

Parameters for the simulation were set to $\eta_0 = 30$, $a_0 = -125$, $b_0 = 40$, $c_0 = 150$, $d_0 = -20$, $\alpha_0 = 20$, $\beta_0 = -8$, and $\gamma_0 = 1280.8$. The non-zero squared diffusion coefficient is thus $[\bm{\Sigma}\bm{\Sigma}^\top (x, v)]_{22} = 20(v-0.2)^2+1280$ and strictly positive for all $v \in \mathbb{R}$.

We simulate sample paths using the Milstein discretization scheme \citep{KloedenPlaten} with a step size of $h^{\mathrm{sim}} = 0.0001$ to ensure high accuracy. To reduce discretization errors, we sub-sample to obtain time steps $h = 0.01$, and $h = 0.02$. We fix the total time length to $T = 50$ and adjust the sample sizes for each $h$ accordingly. We repeat the simulations to obtain 1000 data sets.

To compute the asymptotic distribution of the SS estimator from Theorem~\ref{thm:AsymtoticNormality},
we evaluate $\mathbf{C}_{\bm{\theta}^{(1)}}(\bm{\theta}_0)$ and $\mathbf{C}_{\bm{\theta}^{(2)}}(\bm{\theta}_0)$
for the model \eqref{eq:StudentKramersSDE}. Since only the second component of $\mathbf{F}_0$ is stochastic,
straightforward calculation gives
\begin{align*}
    \mathbf{C}_{\bm{\theta}^{(1)}}(\bm{\theta}_0) =
    \mathbb{E}_{\nu_0}\!\left[
    \frac{1}{\alpha_0 V^2 + \beta_0 V + \gamma_0}
    \begin{bmatrix}
        V^2      & -VX^3  & -VX^2  & -VX   & -V    \\
        \star    & X^6    & X^5    & X^4   & X^3   \\
        \star    & \star  & X^4    & X^3   & X^2   \\
        \star    & \star  & \star  & X^2   & X     \\
        \star    & \star  & \star  & \star & 1
    \end{bmatrix}
    \right],
\end{align*}
and
\begin{align*}
    \mathbf{C}_{\bm{\theta}^{(2)}}(\bm{\theta}_0) =
    \frac{1}{2}\,\mathbb{E}_{\nu_0}\!\left[
    \frac{1}{(\alpha_0 V^2 + \beta_0 V + \gamma_0)^2}
    \begin{bmatrix}
        V^4   & V^3   & V^2 \\
        \star & V^2   & V   \\
        \star & \star & 1
    \end{bmatrix}
    \right],
\end{align*}
where $\star$ denotes the symmetric counterpart and all expectations are approximated by ergodic averages over the simulated trajectory. Theorem~\ref{thm:AsymtoticNormality} then yields
\begin{align*}
    \sqrt{Nh}\,\bigl(\hat{\bm{\theta}}^{(1)}_N - \bm{\theta}^{(1)}_0\bigr)
    \xrightarrow{d} \mathcal{N}\!\left(\mathbf{0},\, \mathbf{C}_{\bm{\theta}^{(1)}}(\bm{\theta}_0)^{-1}\right),
    \qquad
    \sqrt{N}\,\bigl(\hat{\bm{\theta}}^{(2)}_N - \bm{\theta}^{(2)}_0\bigr)
    \xrightarrow{d} \mathcal{N}\!\left(\mathbf{0},\, \mathbf{C}_{\bm{\theta}^{(2)}}(\bm{\theta}_0)^{-1}\right),
\end{align*}
as $N \to \infty$, $h \to 0$, $Nh \to \infty$ and $Nh^2 \to 0$. The asymptotic standard deviations displayed in Figure~\ref{fig:final_plot_SK} are the square roots of the diagonal entries of $\mathbf{C}_{\bm{\theta}^{(1)}}(\bm{\theta}_0)^{-1}/(Nh)$ and $\mathbf{C}_{\bm{\theta}^{(2)}}(\bm{\theta}_0)^{-1}/N$ for $\bm{\theta}^{(1)}$ and $\bm{\theta}^{(2)}$ respectively, where both matrices are estimated by ergodic averages.

We first outline the estimators tailored for the Student Kramers oscillator. We then detail the simulation procedure and describe the optimization process. Finally, we present and interpret the results.

\begin{figure}
    \centering
    \includegraphics[width=\textwidth]{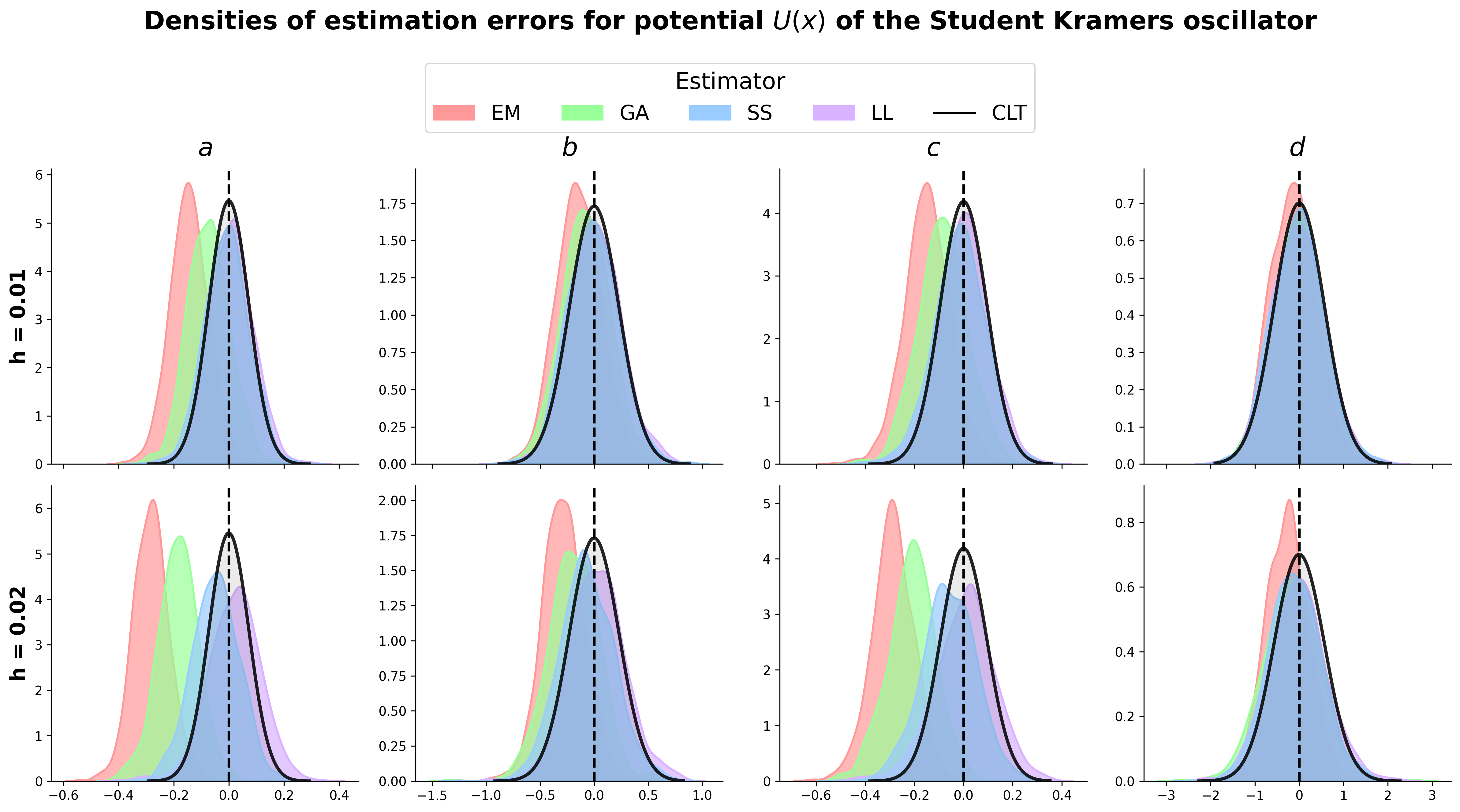}
    \includegraphics[width=\textwidth]{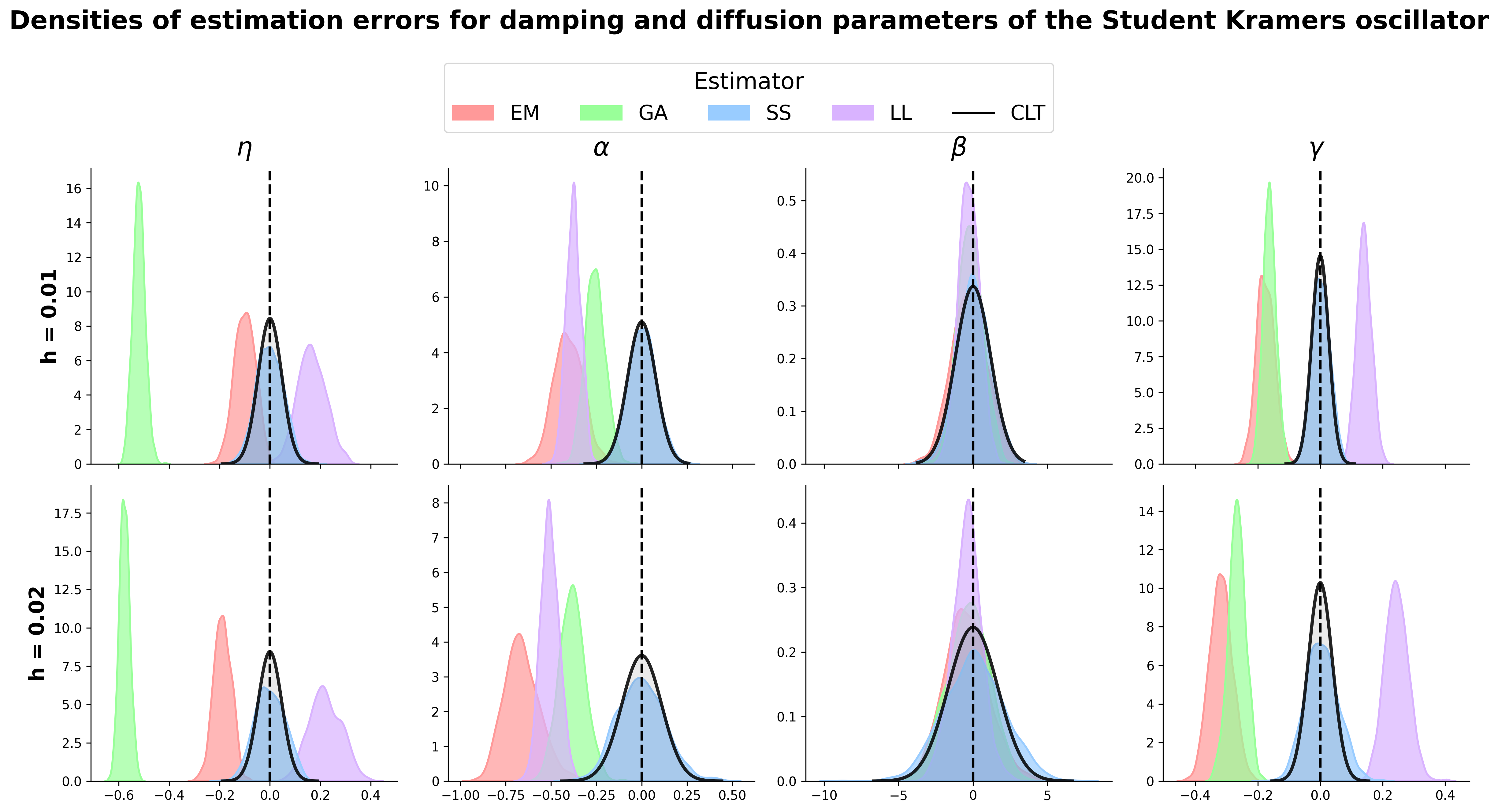}
    \caption{Student Kramers oscillator. Normalized distributions of parameter estimation errors $(\hat{\bm{\theta}}_N - \bm{\theta}_0) \oslash \bm{\theta}_0$ based on 1000 simulated datasets  with a fixed time interval of length $T = 50$. Different colors indicate the type of estimator. Each column corresponds to a different parameter, and each row corresponds to a different value of $h$, and consequently $N$. Black density lines represent the asymptotic distributions. \textbf{Upper panel:} Distributions of the errors for the potential $U(x)$ parameters. \textbf{Lower panel:} Distributions of the errors of the damping parameter $\eta$ and diffusion parameters.}
    \label{fig:final_plot_SK}
\end{figure} 

\begin{figure}
    \centering
    \includegraphics[width=.8\textwidth]{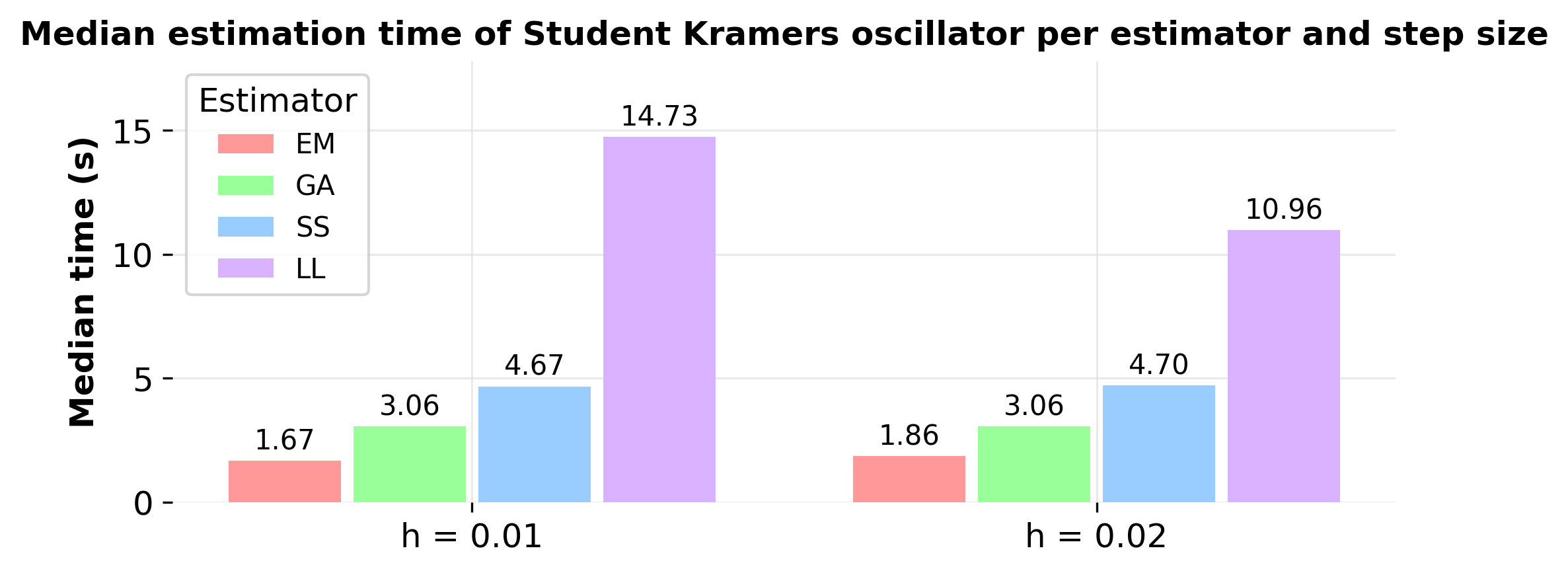}
    \caption{Median wall-clock estimation time (in seconds) to estimate parameters in the Student Kramers oscillator for step sizes $h \in \{0.01, 0.02\}$, based on 1000 simulated datasets ($T = 50$, $N \in \{5000, 2500\}$). Outliers and failed runs are excluded. Colors indicate estimator.}
    \label{fig:SK_estimation_times}
\end{figure} 

\subsubsection{Strang splitting estimator}

To define the SS estimator, we rewrite  \eqref{eq:StudentKramersSDE} as 
\begin{align} \label{eq:StudentKramerSDE2}
    \dif \begin{bmatrix}
        X_{t}\\
        V_{t} \\
        \end{bmatrix} = \underbrace{\begin{bmatrix}
        V_t\\
        -\eta V_t + a X_t^3 + b X_t^2 + cX_t + d\\
    \end{bmatrix}}_{\mathbf{F}(X_t, V_t)}\dif t + \begin{bmatrix}
        0 & 0\\
        0 & \sqrt{\alpha V_t^2 + \beta V_t + \gamma}\\
    \end{bmatrix}\dif W_t. 
\end{align}
As before, we choose $\mathbf{A}=\mathbb{E}[D \mathbf{F}(\mathbf{x})]$. Starting with
\begin{equation*}
    D \mathbf{F}(x, v) = \begin{bmatrix}
        0 & 1\\
        3 a x^2 +2  b x + c & - \eta \\
    \end{bmatrix},
\end{equation*}
we get
\begin{equation*}
    \mathbf{A} = \begin{bmatrix}
        0 & 1\\
        3 a \mathbb{E}[X_t^2] +2  b \mathbb{E}[X_t] + c & - \eta \\
    \end{bmatrix}.
\end{equation*}
For model \eqref{eq:StudentKramerSDE2}, Eq.~\eqref{eq:findingb} is equivalent to solving 
\begin{equation*}
    3 a b_x^2 +2  b b_x - 3 a \mathbb{E}[X_t^2] - 2  b \mathbb{E}[X_t] = 0.
\end{equation*}
The solution is 
\begin{equation*}
    b_x^{\pm} = \frac{-b \pm \sqrt{b^2 + 3 a (3 a \mathbb{E}[X_t^2] +2  b \mathbb{E}[X_t])}}{3 a} = -\frac{b}{3a} \pm \sqrt{\text{var}[X_t] + \left(\mathbb{E}[X_t] + \frac{b}{3a}\right)^2 },
\end{equation*}
which is well defined for all parameter values. This did not provide a solution for $b_v$, but the expectation of $V_t$ is zero, so we set $b_v = 0$. Thus, \eqref{eq:StudentKramerSDE2} can be written as 

\begin{equation} \label{eq:StudentKramerSDE3}
\begin{aligned} 
    \dif \begin{bmatrix}
        X_{t}\\
        V_{t} \\
        \end{bmatrix} &= \begin{bmatrix}
        0 & 1\\
        3 a \mathbb{E}[X_t^2] +2  b \mathbb{E}[X_t] + c & - \eta \\
    \end{bmatrix} \left(\begin{bmatrix}
        X_{t}\\
        V_{t} \\
        \end{bmatrix} - \begin{bmatrix}
         b_x^{\pm}\\
        0 \\
        \end{bmatrix}\right) \dif t + \begin{bmatrix}
        0 & 0\\
        0 & \sqrt{\alpha V_t^2 + \beta V_t + \gamma}\\
    \end{bmatrix}\dif W_t  \\
        &+ \underbrace{\begin{bmatrix}
        0\\
        a X_t^3 + b X_t^2 + cX_t + d - (3 a \mathbb{E}[X_t^2] +2  b \mathbb{E}[X_t] + c)(X_t -  b_x^{\pm})\\
    \end{bmatrix}}_{\mathbf{N}(X_t, V_t)}\dif t. 
\end{aligned}
\end{equation}

The nonlinear ODE driven by $\mathbf{N}(x,v)$ has a trivial solution where $x$ is a constant and $v$ is linear in time. We incorporate these components into the objective function \eqref{eq:S_obj} to obtain the SS estimator.

For the GA and LL estimators for the Student Kramers oscillator, see Supplementary Material \ref{sec:Implementation}.

\subsubsection{Optimization}

For this model, we use the Limited-memory BFGS (L-BFGS) algorithm provided by the \texttt{jaxopt} library \citep{jaxopt_implicit_diff}. To ensure robust convergence and strictly decreasing steps, we use a strong Wolfe line search condition. The optimization is restricted to a maximum of \num{1000} iterations, and optimization is stopped when the Euclidean norm of the parameter difference between consecutive iterations falls below $10^{-5}$. The initial parameter vector is set to $\boldsymbol{\theta}_{\text{init}} = (50, -200, 10, 100, 10, 30, -5, 1000)$

\subsubsection{Results}

Figure~\ref{fig:final_plot_SK} presents the normalized estimation errors $(\hat{\bm{\theta}}_N - \bm{\theta}_0) \oslash \bm{\theta}_0$ of the simulations, where $\oslash$ denotes element-wise division. The upper panel shows the estimators of the parameters of the potential function $U(x)$, the lower panel displays the damping and diffusion parameters.

For the potential parameters, all estimators perform reasonably well for both step sizes. However, EM and GA exhibit visibly larger bias, especially for the coarser step size $h=0.02$, while SS and LL display similar bias and variability across both values of $h$. For the damping and diffusion parameters, all methods except SS show substantial bias for $\eta$, $\alpha$, and $\gamma$ at both step sizes.

The SS estimator comes closest to the theoretical asymptotic regime. For the finer step size $h=0.01$, the empirical distribution of SS aligns well with the theoretical Gaussian density, with the variance only slightly exceeding the asymptotic prediction. For the coarser step size $h=0.02$, the variance of SS is more overestimated relative to the theoretical counterpart, indicating that the asymptotic regime has not yet been fully reached. Nevertheless, SS remains the least biased estimator at both step sizes, and its variance is consistently closer to the theoretical one than that of EM, GA, and LL.

Outliers were removed using a two-pass interquartile range (IQR) rule applied to the relative estimation errors. For each parameter, we identified the central region containing the middle half of the errors and flagged any dataset where at least one estimate fell more than 1.5 times the IQR outside this region. This removed less than 1\% of the datasets for the EM, SS, and LL estimators, compared to approximately 1\% at $h = 0.01$ and 45\% at $h = 0.02$ for the GA estimator. 

Figure~\ref{fig:SK_estimation_times} shows the median computation times of the Student Kramers oscillator per estimator for the two step sizes ($h = 0.02$ and $h = 0.2$), computed after removing NA and outlier runs. EM, GA SS converge considerably faster than LL across both settings.

\section{Application to Greenland Ice Core Data} \label{sec:Greenland}

The Greenland ice core data \citep{RASMUSSEN201414} consist of calcium ion concentration ($\text{Ca}^{2+}$) measurements from the GRIP ice core, uniformly spaced at 20-year intervals via binning and averaging, covering the last glacial period from approximately 115,000 to 12,000 years before the present. The $\text{Ca}^{2+}$ series serves as a proxy for paleotemperature and captures the abrupt Dansgaard–Oeschger (DO) warming events characteristic of this period. Since the data represent averages over time intervals, a second-order SDE is a natural modeling framework \citep{ditlevsen2002fast, DitlevsenSorensen2004}.

In \citep{pilipovic2024SecondOrder}, the Kramers oscillator with additive noise was fitted to this data. The model fit, assessed through the empirical invariant distribution against the fitted theoretical invariant distribution, was poor, suggesting that the model lacks sufficient flexibility to capture the distributional features of the data. In particular, the velocity variable was poorly captured and showed a heavier tail than what the additive noise model could produce. This motivates fitting the Student Kramers oscillator \eqref{eq:StudentKramersSDE}, which extends the model by allowing state-dependent and heavy-tailed noise in the velocity component.

We use the same data pre-processing as in \citep{pilipovic2024SecondOrder}, and refer the reader there for details. Since the proposed estimator assumes complete observations of both $X_t$ and $V_t$, but only $X_t$ is directly observed, we approximate $V_t$ by forward differences. To account for the errors induced by this finite difference approximation, we add correction terms to the pseudo-likelihood (see \cite{pilipovic2024SecondOrder} for a thorough discussion on the imputation of $V_t$).

We consider three nested models of increasing complexity, summarised in Table~\ref{tab:model_comparison}. $\mathrm{M}_1$, with $b = d = \alpha = \beta = 0$, corresponds to a symmetric double-well potential with constant diffusion. The negative log-likelihood for $\mathrm{M}_1$ shows that the estimates in \citep{pilipovic2024SecondOrder} ($\widehat{\eta} = 62.5$, $\widehat{a} = -219$, $\widehat{d} = 297$, $\widehat{\gamma} = 9125$, and $\text{NLL} = 8708.05$) are not optimal. The intermediate model $\mathrm{M}_2$ retains Student-type noise by freeing $\alpha$ and $\beta$, but keeps $b = d = 0$, so the potential $U(x)$ is symmetric. The full model $\mathrm{M}_3$ frees all eight parameters.

\begin{figure}
    \centering
    \includegraphics[width = \textwidth]{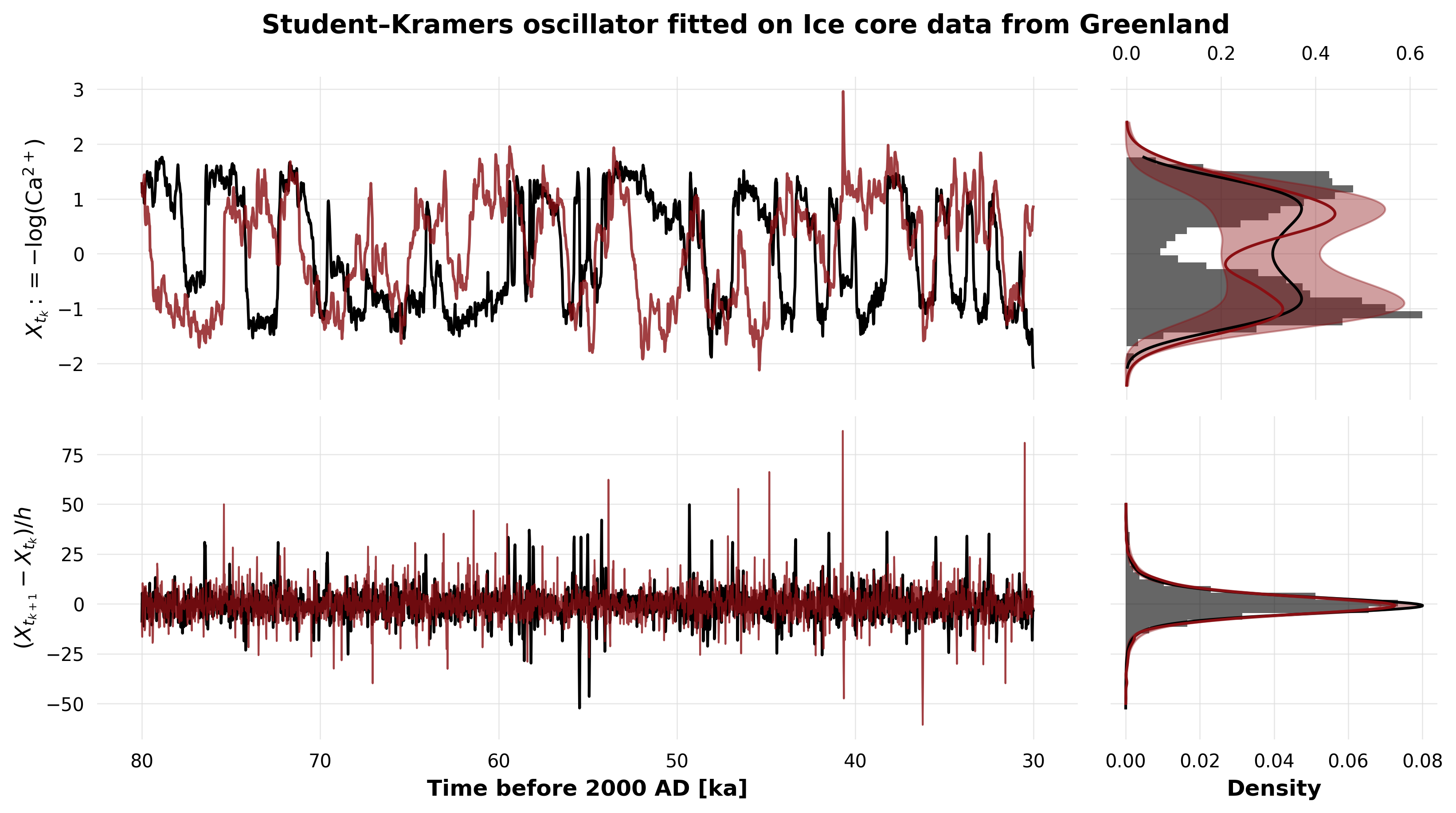}
    \caption{\textbf{Ice core data from Greenland.} \textbf{Left:} Trajectories (black) over time (in 1000 years) of the centered negative logarithm of the $\text{Ca}^{2+}$ measurements (top) and forward difference approximations of its rate of change (bottom). Simulated trajectory from the estimated model (dark red). \textbf{Right:} Histograms of marginal distributions (gray) and estimated approximated invariant densities (black line) with prediction intervals (dark red ribbons), and the empirical density of a simulated sample from the estimated model (dark red lines).}
    \label{fig:ice_core}
\end{figure}

\begin{table}[h]
\centering
\caption{Parameter estimates and negative log-likelihood (NLL) for the three fitted models.}
\label{tab:model_comparison}
\begin{tabular}{lrrrrrrrrr}
\toprule
Model & $\hat{\eta}$ & $\hat{a}$ & $\hat{b}$ & $\hat{c}$ & $\hat{d}$ & $\hat{\alpha}$ & $\hat{\beta}$ & $\hat{\gamma}$ & NLL \\
\midrule
$\mathrm{M}_1$ $(b = d = \alpha = \beta = 0)$  & 57.7 & $-$130.0 & -- & 115.3 & -- & -- & -- & 8322.5 & 8673.11 \\
$\mathrm{M}_2$ $(b = d = 0)$  & 67.9 & $-$96.6  & -- & 65.4  & -- & 64.3 & 233.2 & 4387.8 & 8524.35 \\
$\mathrm{M}_3$ & 67.9 & $-$97.8  & 8.4 & 66.2 & $-$9.1 & 64.4 & 228.0 & 4371.0 & 8524.06 \\
\bottomrule
\end{tabular}
\end{table}

To select among the models, we first performed a bootstrap likelihood ratio test comparing models $\mathrm{M}_1$ and $\mathrm{M}_2$, obtaining a $p$-value of effectively zero, which leads us to reject the hypothesis of constant diffusion in favor of Student-type noise. The negative log-likelihood difference between models $\mathrm{M}_2$ and $\mathrm{M}_3$ is negligible ($8524.35$ vs.\ $8524.06$). This is confirmed by the bootstrap confidence intervals for $b$ and $d$ that both contains zero under $\mathrm{M}_3$, providing no evidence that the asymmetric potential terms improve the fit. We therefore adopt the  $\mathrm{M}_2$ with $b = d = 0$ as our final model.

To interpret the noise structure, we approximate the invariant distribution of $V_t$ by the skew $t$-distribution \eqref{eq:skew_student} with parameters estimated from $\mathrm{M}_2$
\begin{equation*}
    \hat{\nu} = \frac{2\hat{\eta}}{\hat{\alpha}} + 1 \approx 3.11, \quad \hat{\mu} = -\frac{\hat{\beta}}{2\hat{\alpha}} \approx -1.81, \quad \hat{\nu}\hat{\sigma}^2 = \frac{4\hat{\alpha}\hat{\gamma} - \hat{\beta}^2}{4\hat{\alpha}^2} \approx 65, \quad \hat{\omega} = \frac{2\hat{\beta}\hat{\eta}}{\hat{\alpha}\sqrt{4\hat{\alpha}\hat{\gamma} - \hat{\beta}^2}} \approx 0.475.
\end{equation*}
The estimated degrees of freedom $\hat{\nu} \approx 3.11$ indicate heavy tails, consistent with the motivation for this model: the additive Gaussian noise of the standard Kramers oscillator failed to capture the spread in the observed velocity, and the Student-type noise provides a substantially better account of this heavy-tailed behaviour. The location parameter $\hat{\mu} \approx -1.81$ indicates that the velocity distribution is centred slightly below zero, while the scale $\hat{\nu}\hat{\sigma}^2 \approx 65$ reflects the large spread of the velocity process. The skewness parameter $\hat{\omega} \approx 0.475$ indicates a moderate positive skew, meaning the right tail of the approximating velocity distribution is heavier than the left. This implies that the invariant distribution becomes non-symmetric, even if the potential is symmetric, justifying that parameters $b$ and $d$ are set to zero even if the empirical distribution is skewed.

The model fit is assessed in the right panels of Figure~\ref{fig:ice_core}. We present the empirical invariant distributions of both coordinates alongside the fitted theoretical invariant distribution. A $95\%$ prediction interval is constructed by simulating 1000 datasets from the fitted model matching the size of the empirical data, and taking pointwise 2.5th and 97.5th percentiles of the resulting approximated densities. A single example trajectory is shown in red. Note that the theoretical invariant densities are approximated under the assumption that $X_t$ and $V_t$ are decoupled.

Compared to the standard Kramers oscillator $\mathrm{M}_1$, the Student Kramers oscillator $\mathrm{M}_2$ improves the fit for the heavy-tailed velocity variable $V_t$. However, the fit for $X_t$ becomes worse since the estimated stable equilibria are closer together than the observations, leading to the model placing substantial probability mass between the two states. Thus, the implied transitions are smoother and more gradual than the nearly discontinuous jumps seen in the data.

The inability to reproduce abrupt transitions, suggests that the model is still too simple to fully capture the dynamics, and that more complex models may be necessary to account for the abrupt nature of the Dansgaard--Oeschger transitions.

\section{Technical details} \label{sec:TechnicalDetails}

In this Section we prove Theorem~\ref{thm:Covariance_matrix} and recall other estimators used in the simulation study.

\begin{proof}[Proof of Theorem \ref{thm:Covariance_matrix}]

For $\bm{\phi}(t, \mathbf{x}) \in C^{1,2}(\mathbb{R}_+ \times \mathbb{R}^d)$, It\^o's formula yields 
\begin{equation} \label{eq:difExp}
    \frac{\dif }{\dif t} \mathbb{E}[\bm{\phi}(t, \mathbf{X}_t)] = \mathbb{E}\left[\frac{\partial \bm{\phi}(t, \mathbf{X}_t)}{\partial t}\right] + \sum_{i=1}^{d} \mathbb{E}\left[\frac{\partial \bm{\phi}(t, \mathbf{X}_t)}{\partial x^{(i)}} F^{(i)}(\mathbf{X}_t)\right] + \frac{1}{2} \sum_{i, j=1}^d \mathbb{E}\left[\frac{\partial^2 \bm{\phi}(t, \mathbf{X}_t)}{\partial x^{(i)} \partial x^{(j)}} [\bm{\Sigma}\bm{\Sigma}^\top(\mathbf{X}_t)]_{ij}\right].
\end{equation}
To find the mean vector $\mathbf{m}(t) = \mathbb{E}[\mathbf{X}_t]$, we set $\bm{\phi}(t, \mathbf{x}) = \mathbf{x}$ and obtain
\begin{equation}
    \frac{\dif \mathbf{m}(t)}{\dif t} = \mathbb{E}[\mathbf{F}(\mathbf{X}_t)] = \mathbf{A}(\mathbf{m}(t) - \mathbf{b}). \label{eq:MeanODE}
\end{equation}
The solution of the linear ODE \eqref{eq:MeanODE} is \eqref{eq:MeanODESolution}. 

For the covariance matrix $\mathbf{C}(t) = \mathbb{E}[(\mathbf{X}_t - \mathbf{m}(t))(\mathbf{X}_t - \mathbf{m}(t))^\top]$, setting $\bm{\phi}(t, \mathbf{x}) = \mathbf{x}\mathbf{x}^\top - \mathbf{m}(t)\mathbf{m}(t)^\top$, we derive
\begin{equation} \label{eq:CovODE}
\begin{aligned}
    \frac{\dif \mathbf{C}(t)}{\dif t} &= \mathbb{E}[\mathbf{F}(\mathbf{X}_t) (\mathbf{X}_t - \mathbf{m}(t))^\top] + \mathbb{E}[(\mathbf{X}_t - \mathbf{m}(t)) \mathbf{F}(\mathbf{X}_t)^\top] + \mathbb{E}[\bm{\Sigma}\bm{\Sigma}^\top(\mathbf{X}_t)] \\
    &= \mathbf{A} \mathbf{C}(t) + \mathbf{C}(t) \mathbf{A}^{\top} + \mathbb{E}[\bm{\Sigma}\bm{\Sigma}^\top(\mathbf{X}_t)]. 
\end{aligned}
\end{equation}
Equation \eqref{eq:CovODE} is called the differential Sylvester equation with a unique solution given initial condition $\mathbf{C}(0)$ (Theorem 2 in \citep{behr2019solution}),
\begin{equation} \label{eq:Sylvester_sol}
    \mathbf{C}(t) = \exp(\mathbf{A} t) \mathbf{C}(0) \exp(\mathbf{A}^\top t) + \int_0^t \exp(\mathbf{A}(t-s))\mathbb{E}[\bm{\Sigma}\bm{\Sigma}^\top(\mathbf{X}_s)] \exp(\mathbf{A}^\top(t-s)) \dif s.
\end{equation}
For general $\bm{\Sigma}\bm{\Sigma}^\top(\mathbf{x})$ as defined in \eqref{eq:SigmaSigmaT}, the integral in \eqref{eq:Sylvester_sol} does not have a closed form solution. However, we provide an explicit computation of $\vect(\mathbf{C}(t))$.

Vectorizing \eqref{eq:CovODE} and using \eqref{eq:SigmaSigmaT_vec}, we obtain
\begin{align*}
    \frac{\dif \vect(\mathbf{C}(t))}{\dif t} &= \vect(\mathbf{A} \mathbf{C}(t)) + \vect(\mathbf{C}(t) \mathbf{A}^{\top}) + \mathbb{E}[\vect(\bm{\Sigma}\bm{\Sigma}^\top(\mathbf{X}_t))]\\
    &= (\mathbf{I}\otimes \mathbf{A}) \vect(\mathbf{C}(t)) + (\mathbf{A} \otimes \mathbf{I})\vect(\mathbf{C}(t)) + \check{\bm{\alpha}} \vect(\mathbb{E}[\mathbf{X}_t\mathbf{X}_t^\top]) + \check{\bm{\beta}} \vect(\mathbb{E}[\mathbf{X}_t]) + \check{\bm{\gamma}}\\
    &=(\mathbf{A} \oplus \mathbf{A}  + \check{\bm{\alpha}})\vect(\mathbf{C}(t)) + \check{\bm{\alpha}} \vect(\mathbf{m}(t)\mathbf{m}(t)^\top) + \check{\bm{\beta}} \mathbf{m}(t) + \check{\bm{\gamma}}\\
    &=(\mathbf{A} \oplus \mathbf{A}  + \check{\bm{\alpha}})\vect(\mathbf{C}(t)) + \vect(\bm{\Sigma}\bm{\Sigma}^\top(\mathbf{m}(t))).
\end{align*}
The solution to this linear ODE, given initial condition $\mathbf{C}(0)$, is
\begin{align}
    \vect(\mathbf{C}(t)) &= \exp(( \mathbf{A} \oplus \mathbf{A} + \check{\bm{\alpha}})t) \vect(\mathbf{C}(0))+\int_0^t \exp((\mathbf{A} \oplus \mathbf{A} + \check{\bm{\alpha}})(t-s))  \vect(\bm{\Sigma}\bm{\Sigma}^\top(\mathbf{m}(s))) \dif s. \label{eq:CovODESolution}
\end{align}
We could expand $\vect(\bm{\Sigma}\bm{\Sigma}^\top(\mathbf{m}(s)))$ in a Taylor series around $s\to 0$ and apply Theorem 1 from \cite{Carbonell2008}. However, if we substitute $\mathbf{m}(s)$ from \eqref{eq:MeanODESolution} into \eqref{eq:CovODESolution} we could directly use Theorem 1 from \cite{VanLoan1978} without the need for a Taylor approximation,
\begin{align*}
    &\int_0^t \exp((\mathbf{A} \oplus \mathbf{A} + \check{\bm{\alpha}})(t-s)) \vect(\bm{\Sigma}\bm{\Sigma}^\top(\mathbf{m}(s))) \dif s\notag\\
    &=\int_0^t \exp((\mathbf{A} \oplus \mathbf{A} + \check{\bm{\alpha}})(t-s)) (\check{\bm{\alpha}} \vect(\mathbf{m}(s)\mathbf{m}(s)^\top) + \check{\bm{\beta}} \mathbf{m}(s) + \check{\bm{\gamma}}) \dif s\notag\\
    &= \mathbf{I}_1(t, \mathbf{A}, \check{\bm{\alpha}}) \vect((\mathbf{m}(0) - \mathbf{b})(\mathbf{m}(0) - \mathbf{b})^\top) +  
    \mathbf{I}_2(t, \mathbf{A}, \check{\bm{\alpha}}) \vect((\mathbf{m}(0) - \mathbf{b})\mathbf{b}^\top)\notag\\
    &+  \mathbf{I}_3(t, \mathbf{A}, \check{\bm{\alpha}})  \vect(\mathbf{b}(\mathbf{m}(0) - \mathbf{b})^\top)
    +  \mathbf{I}_4(t, \mathbf{A}, \check{\bm{\alpha}}, \check{\bm{\beta}})  (\mathbf{m}(0) - \mathbf{b})
    +  \mathbf{I}_5(t, \mathbf{A}, \check{\bm{\alpha}}) \vect(\bm{\Sigma}\bm{\Sigma}^\top(\mathbf{b})). 
\end{align*}
\end{proof}

For convenience, we briefly explain the other estimators used in the simulation study.

\subsection{Euler-Maruyama estimator}\label{sec:EM}

The EM method uses a first-order Taylor expansion of the SDE \eqref{eq:SDE}
\begin{align*}
    \mathbf{X}_{t_k}^\mathrm{[EM]} \coloneqq \mathbf{X}_{t_{k-1}}^\mathrm{[EM]} + h \mathbf{F}(\mathbf{X}_{t_{k-1}}^\mathrm{[EM]}; \bm{\theta}^{(1)}) + \bm{\xi}_{h,k}^\mathrm{[EM]}(\mathbf{X}_{t_{k-1}}^\mathrm{[EM]}; \bm{\theta}^{(2)}), 
\end{align*}
where $\bm{\xi}_{h,k}^\mathrm{[EM]}(\mathbf{X}_{t_{k-1}}^\mathrm{[EM]}; \bm{\theta}^{(2)})\mid \mathbf{X}_{t_{k-1}}^\mathrm{[EM]} = \mathbf{x}_{k-1} \sim \mathcal{N}_d(\bm{0}, h\bm{\Sigma}\bm{\Sigma}^\top(\mathbf{x}_{k-1}; \bm{\theta}^{(2)}))$  \citep{KloedenPlaten}. The transition density $p^\mathrm{[EM]}(\mathbf{X}_{t_k} \mid \mathbf{X}_{t_{k-1}}; \bm{\theta})$ is Gaussian, so the pseudo-likelihood follows trivially.

\subsection{Gaussian approximation estimator}\label{sec:GA}

The GA estimator assumes Gaussian transition densities $p^\mathrm{[GA]}(\mathbf{X}_{t_k} \mid \mathbf{X}_{t_{k-1}}; \bm{\theta})$ with the true mean and covariance of the solution $\mathbf{X}_t$ \citep{Kessler1997}, that is, the density of SDE \eqref{eq:SDE} is approximated by 
\begin{equation*}
    p^{[\mathrm{GA}]}(\mathbf{x}_{t_k} \mid \mathbf{x}_{t_{k-1}}) = \mathcal{N}(\mathbf{x}_{t_k}; \bm{\mu}_h^{[\mathrm{GA}]}(\mathbf{x}_{t_{k-1}};\bm{\theta}^{(1)}), \bm{\Omega}_h^{[\mathrm{GA}]}(\mathbf{x}_{t_{k-1}}; \bm{\theta})).
\end{equation*}
When the moments are unknown, they are approximated using the infinitesimal generator $\mathbb{L}$ \eqref{eq:L} and formula \eqref{eq:expendingL}. Then, the GA second-order approximation becomes
\begin{equation}
\label{eq:GA}
    \begin{aligned}
    \mathbf{X}_{t_k}^\mathrm{[GA]} &\coloneqq \bm{\mu}_h^\mathrm{[GA]}(\mathbf{X}_{t_{k-1}}^\mathrm{[GA]}; \bm{\theta}) + \bm{\xi}_{h}^\mathrm{[GA]}(\mathbf{X}_{t_{k-1}}^\mathrm{[GA]}; \bm{\theta}),
\end{aligned}
\end{equation}
where $\bm{\xi}_{h}^\mathrm{[GA]}(\mathbf{X}_{t_{k-1}}^\mathrm{[GA]}; \bm{\theta}) \mid \mathbf{X}_{t_{k-1}}^\mathrm{[GA]} = \mathbf{x}_{k-1} \sim \mathcal{N}_d(\bm{0}, \bm{\Omega}_{h}^\mathrm{[GA]}(\mathbf{x}_{{k-1}}; \bm{\theta}))$, and 
\begin{align}
    \bm{\mu}_h^\mathrm{[GA]}(\mathbf{x}; \bm{\theta}) &= \mathbf{x} + h \mathbf{F}(\mathbf{x}; \bm{\theta}^{(1)}) + \frac{h^2}{2} \mathbb{L}\mathbf{F}(\mathbf{x}; \bm{\theta}^{(1)}), \notag\\
    \bm{\Omega}_{h}^\mathrm{[GA]}(\mathbf{x}; \bm{\theta}) &= h \bm{\Sigma} \bm{\Sigma}^\top(\mathbf{x}; \bm{\theta}^{(2)})\notag\\
    &+ \frac{h^2}{2} \left(D\mathbf{F}(\mathbf{x}; \bm{\theta}^{(1)}) \bm{\Sigma} \bm{\Sigma}^\top(\mathbf{x}; \bm{\theta}^{(2)}) + \bm{\Sigma} \bm{\Sigma}^\top(\mathbf{x}; \bm{\theta}^{(2)}) D^\top\mathbf{F}(\mathbf{x}; \bm{\theta}^{(1)}) + \mathbb{L}\bm{\Sigma}\bm{\Sigma}^\top(\mathbf{x}; \bm{\theta}^{(2)})\right). \label{eq:OmegaGa}
\end{align}

While this method is straightforward, a few problems exist in practice.

First, obtaining closed-form formulas for $\bm{\mu}_h^\mathrm{[GA]}$ and $\bm{\Omega}_{h}^\mathrm{[GA]}$ becomes more complex for higher-order approximations. Thus, the second-order approximation is the most commonly used in practice. However, for hypoelliptic systems, we need the third-order approximation of $\bm{\Omega}_{h}^\mathrm{[GA]}$. We do not provide a general formula for the third-order approximation. Still, we calculate $\bm{\Omega}_{h}^\mathrm{[GA]}$ up to order $h^3$ using \texttt{Wolfram Mathematica} for the Wright-Fisher diffusion and the Student Kramers oscillator.

Second, $\bm{\Omega}_{h}^\mathrm{[GA]}$ does not need to be positive definite. To avoid this problem, \cite{Kessler1997} suggested Taylor expanding $\log\det \bm{\Omega}_{h}^\mathrm{[GA]}$ and $(\bm{\Omega}_{h}^\mathrm{[GA]})^{-1}$ in $h$.

\subsection{Ozaki's local linearization estimator}\label{sec:LL}

Ozaki's LL method assumes an SDE with additive noise
\begin{equation}
    \dif \mathbf{X}_t = \mathbf{F}(\mathbf{X}_t; \bm{\theta}^{(1)}) \dif t + \bm{\Sigma} \dif \mathbf{W}_t.\label{eq:SDEreduced}
\end{equation}
If the starting SDE \eqref{eq:SDE} is reducible, that is, if there is a bijective transformation between \eqref{eq:SDE} and \eqref{eq:SDEreduced}, then we can use the LL estimator. Otherwise, we apply LL on \eqref{eq:SDE}, such that locally on the interval $[t_k, t_{k+1})$, the diffusion matrix $\bm{\Sigma}(\mathbf{X}_t)$ is constant, i.e., $\bm{\Sigma}(\mathbf{X}_t) = \bm{\Sigma}(\mathbf{X}_{t_k})$, for $t\in [t_k, t_{k+1})$. 

Here we describe the latter approach. To approximate \eqref{eq:SDE}, we first approximate the drift of \eqref{eq:SDE} between consecutive observations by a linear function, and then we find the closed-form solution of the resulting linear SDE (see \citep{JimenezLL99}). The approximation becomes
\begin{align}
    \mathbf{X}_{t_k}^\mathrm{[LL]} &\coloneqq \bm{\mu}_h^\mathrm{[LL]}(\mathbf{X}_{t_{k-1}}^\mathrm{[LL]}; \bm{\theta}) + \bm{\xi}_{h}^\mathrm{[LL]}(\mathbf{X}_{t_{k-1}}^\mathrm{[LL]}; \bm{\theta}), \label{eq:LL}
\end{align}
where $\bm{\xi}_{h}^\mathrm{[LL]}(\mathbf{X}_{t_{k-1}}^\mathrm{[LL]}; \bm{\theta}) \mid \mathbf{X}_{t_{k-1}} = \mathbf{x}_{k-1} \sim \mathcal{N}_d(\bm{0}, \bm{\Omega}_{h}^\mathrm{[LL]}(\mathbf{x}_{k-1}; \bm{\theta}))$, and
\begin{align}
    \bm{\mu}_h^\mathrm{[LL]}(\mathbf{x}; \bm{\theta}) &\coloneqq \mathbf{x} + \mathbf{R}_{h,0}( D\mathbf{F}(\mathbf{x}; \bm{\theta}^{(1)}))\mathbf{F}(\mathbf{x}; \bm{\theta}^{(1)}) + (h \mathbf{R}_{h, 0}( D\mathbf{F}(\mathbf{x}; \bm{\theta}^{(1)})) - \mathbf{R}_{h, 1}( D\mathbf{F}(\mathbf{x}; \bm{\theta}^{(1)}))) \mathbf{M}(\mathbf{x}; \bm{\theta}), \notag\\
    \bm{\Omega}_{h}^\mathrm{[LL]}(\mathbf{x}; \bm{\theta}) &\coloneqq \int_0^h e^{D\mathbf{F}(\mathbf{x}; \bm{\theta}^{(1)})(h-u)} \bm{\Sigma} \bm{\Sigma}^\top(\mathbf{x}) e^{ D\mathbf{F}(\mathbf{x}; \bm{\theta}^{(1)})^\top (h-u)} \dif u. \label{eq:OmegaLL}
\end{align}
Previously we defined
\begin{align*}
    \mathbf{R}_{h, r}(D\mathbf{F}(\mathbf{x}; \bm{\theta}^{(1)})) \coloneqq \int_0^h \exp(D\mathbf{F}(\mathbf{x}; \bm{\theta}^{(1)}) u) u^r \dif u, \qquad
    \mathbf{M}(\mathbf{x}; \bm{\theta}) \coloneqq \frac{1}{2}\sum_{i,j = 1}^d \partial^2_{x^{(i)}x^{(j)}} \mathbf{F}(\mathbf{x}) [\bm{\Sigma}\bm{\Sigma}^\top(\mathbf{x})]_{ij},
\end{align*}
for $r = 0,1$. As before, we compute integrals of matrix exponentials $\mathbf{R}_{h, r}$ and $\bm{\Omega}_{h, k}^\mathrm{[LL]}(\bm{\theta})$ using formulas from \citep{VanLoan1978}. Thus, $p^\mathrm{[LL]}(\mathbf{x}_{t_k} \mid \mathbf{x}_{t_{k-1}}; \bm{\theta})$ is Gaussian and standard likelihood inference applies. 

While this method usually performs among the best in the additive case due to having strong order 1.5 convergence, it is the slowest due to the calculation of $e^{D\mathbf{F}(\mathbf{x}; \bm{\theta}^{(1)})h}$ in $\bm{\Omega}_{h}^\mathrm{[LL]}$. However, when the diffusion matrix is not constant, the LL approximation loses half an order of convergence due to poor approximation of $\bm{\Omega}_{h}^\mathrm{[LL]}(\mathbf{x}; \bm{\theta})$. Namely,
\begin{align*}
    \bm{\Omega}_{h}^\mathrm{[LL]}(\mathbf{x}; \bm{\theta}) &= h \bm{\Sigma} \bm{\Sigma}^\top(\mathbf{x}; \bm{\theta}^{(2)})+ \frac{h^2}{2} \left(D\mathbf{F}(\mathbf{x}; \bm{\theta}^{(1)}) \bm{\Sigma} \bm{\Sigma}^\top(\mathbf{x}; \bm{\theta}^{(2)}) + \bm{\Sigma} \bm{\Sigma}^\top(\mathbf{x}; \bm{\theta}^{(2)}) D^\top\mathbf{F}(\mathbf{x}; \bm{\theta}^{(1)}) \right) + \mathbf{R}(h^3, \mathbf{x}),
\end{align*}
which differs from the $\bm{\Omega}_{h}^\mathrm{[GA]}$ \eqref{eq:OmegaGa} by $\mathbb{L}\bm{\Sigma} \bm{\Sigma}^\top(\mathbf{x})$.

\section{Conclusion} \label{sec:Conclusion}

This paper extends the Pearson diffusion framework to multivariate models with quadratic diffusion coefficients and nonlinear drift. We formalized a class of multivariate Pearson diffusions and derived closed-form expressions for their first two moments. As concrete examples, we treated a multivariate Wright–Fisher diffusion and introduced the Student Kramers oscillator, showing that the latter is a hypoelliptic two-dimensional model with nonlinear drift and Student-type noise, and proving existence, uniqueness, and the existence of an invariant measure for its solution.

We proposed a Strang splitting estimator for nonlinear multivariate Pearson diffusions, based on splitting the dynamics into a multivariate Pearson diffusion and a nonlinear ODE and approximating the Pearson component by a Gaussian transition density with exact first and second moments. In comprehensive simulation studies for both the multivariate Wright–Fisher diffusion and the Student Kramers oscillator, the Strang estimator outperformed the EM, GA, and LL methods in terms of estimation error.

Finally, we illustrated the applicability of both the estimator and the Student Kramers oscillator by fitting them to high-resolution Greenland ice-core data, thereby demonstrating how the multivariate Pearson framework can be used for real climate proxy series while also revealing where more complex models may be needed. By broadening the scope of Pearson diffusions to multivariate and nonlinear-drift settings and providing an efficient estimation procedure, this work offers a flexible tool for modeling and inference in complex stochastic dynamical systems.

\section*{Code availability}
All simulation studies, real-data application results, and figures in this paper are fully reproducible. The supplementary code is publicly available on GitHub at \url{https://github.com/PredragPilipovic/nonlinear_multivariate_pearson_diffusions} \citep{pilipovic2026code}.

\section*{Acknowledgement}

This work has received funding from the European Union's Horizon 2020 research and innovation program under the Marie Skłodowska-Curie grant agreement No 956107, "Economic Policy in Complex Environments (EPOC)", and from Novo Nordisk Foundation NNF20OC0062958.

\bibliographystyle{abbrvnat}
\bibliography{references}

\clearpage

\setcounter{section}{0}
\setcounter{equation}{0}

\renewcommand{\thesection}{S\arabic{section}}
\renewcommand{\theequation}{S\arabic{equation}}

\section{Supplementary Material: One-dimensional Pearson diffusions} 
\label{sec:1DPearson}

The ergodic one-dimensional Pearson diffusions \eqref{eq:Pearson1D} can be classified into six cases based on the form of the squared diffusion coefficient $\sigma^2(x) = \alpha x^2 + \beta x + \gamma$. Each case presents specific conditions for the existence and uniqueness of ergodic solutions and corresponding invariant distributions. This classification is up to equivalence classes because the Pearson diffusions are closed under translations and scale transformations. Up to translation and scale transformations, the six classes are as follows \citep{FormanSorensen2008}.
\begin{itemize}
    \item For $\sigma^2(x) = \gamma$, it is an OU process defined on the entire real line. The unique ergodic solution exists for all $m\in \mathbb{R}$, with the invariant distribution being normal with mean $m$ and variance $\gamma/(2\lambda)$.
    \item For $\sigma^2(x) = \beta x$, it is a SQ process, defined on the positive half-line $(0, \infty)$. A unique ergodic solution exists for $2\lambda m \geq \beta$. The invariant distribution is the gamma distribution with scale parameter $\beta/(2\lambda)$ and shape parameter $2\lambda m/\beta$. For $0 < 2\lambda m < \beta$, the boundary at zero can be reached with positive probability, but the process remains stationary with the same gamma invariant distribution. 
    \item For $\sigma^2(x) = \alpha x^2$, it is a geometric Brownian motion type process, also called a GARCH diffusion. It is defined on the positive half-line $(0, \infty)$. A unique ergodic solution exists for all $\alpha > 0$ and $m> 0$. The invariant distribution is an inverse gamma distribution with shape parameter $1 - 2\lambda /\alpha$ and scale parameter $\alpha/(2\lambda m)$. The variance only exists for $\alpha < 2\lambda$.
    \item For $\sigma^2(x) = \alpha (x^2 + 1)$, the diffusion is defined on the entire real line. A unique ergodic solution exists for all $\alpha > 0$ and $m\in \mathbb{R}$. If $m= 0$, the invariant distribution is a scaled Student's $t$-distribution with $1 - 2 a/\alpha$ degrees of freedom. For $m\neq 0$, the invariant distribution is a skew $t$-distribution. 
    \item For $\sigma^2(x) = \alpha x (x + 1)$, the process is defined on the positive half-line $(0, \infty)$. A unique ergodic solution exists for all $\alpha > 0$ and $m\geq \alpha/(2\lambda)$. The invariant distribution is a scaled F-distribution with $-4 a m/\alpha$ and $2(1 - 2\lambda /\alpha)$ degrees of freedom. If $0 < m< \alpha/(2\lambda)$, the boundary at zero can be reached, but with an instantaneous reflecting boundary, the process remains stationary with the same F-distribution.
    \item For $\sigma^2(x) = \alpha x (x - 1)$, the process is a Jacobi diffusion, defined on the interval $(0, 1)$. A unique ergodic solution exists for $\alpha < 0$ and $m$ such that $\min(m, 1 - m) \geq \alpha/(2\lambda )$. The invariant distribution is a Beta distribution with shape parameters $2\lambda m/\alpha$ and $2\lambda (1 - m)/\alpha$. If $0 < m< \alpha/(2\lambda )$, the boundary at zero can be reached, and similarly for the boundary at one for $0 < 1 - m< \alpha/(2\lambda )$.
\end{itemize}
This classification illustrates the diverse applications of Pearson diffusions, allowing for explicit computation of invariant distributions and parameter conditions across different state spaces and diffusion forms. An important feature of Pearson diffusions is the ability to find explicit expressions for the marginal and conditional moments. The $k$-th absolute moment, $\mathbb{E}|X_t|^k$, is finite if and only if $\alpha/(2\lambda) < 1/(k - 1)$. This implies that all moments exist if $\alpha \leq 0$. However, for $\alpha > 0$, only moments satisfying $k < 2\lambda/\alpha + 1$ exist \citep{FormanSorensen2008}.

\section{Supplementary Material: Simulating Multivariate Wright-Fisher diffusion} \label{sec:SimulationWF}

To simulate Wright-Fisher diffusion, we follow \cite{He2020} and avoid Cholesky decomposition of $\bm{\Sigma}$ by using six independent Brownian motions $W_t^{(1)}, \ldots, W_t^{(6)}$.

The resulting three-dimensional SDE is
\begin{equation} \label{eq:WF_2_fin}
    \begin{aligned}
        \dif X_t^{(1)} &= \bigl(\kappa_1 + K_{11}X_t^{(1)} + K_{12}X_t^{(2)} + K_{13}X_t^{(3)} - X_t^{(1)}(\lambda_1 X_t^{(1)} + \lambda_2 X_t^{(2)} + \lambda_3 X_t^{(3)})\bigr)\dif t \\
        &\quad + \sqrt{X_t^{(1)}X_t^{(2)}}\dif W_t^{(1)} + \sqrt{X_t^{(1)}X_t^{(3)}}\dif W_t^{(2)} + \sqrt{X_t^{(1)}(1-X_t^{(1)}-X_t^{(2)}-X_t^{(3)})}\dif W_t^{(3)}, \\
        \dif X_t^{(2)} &= \bigl(\kappa_2 + K_{21}X_t^{(1)} + K_{22}X_t^{(2)} + K_{23}X_t^{(3)} - X_t^{(2)}(\lambda_1 X_t^{(1)} + \lambda_2 X_t^{(2)} + \lambda_3 X_t^{(3)})\bigr)\dif t \\
        &\quad - \sqrt{X_t^{(1)}X_t^{(2)}}\dif W_t^{(1)} + \sqrt{X_t^{(2)}X_t^{(3)}}\dif W_t^{(4)} + \sqrt{X_t^{(2)}(1-X_t^{(1)}-X_t^{(2)}-X_t^{(3)})}\dif W_t^{(5)}, \\
        \dif X_t^{(3)} &= \bigl(\kappa_3 + K_{31}X_t^{(1)} + K_{32}X_t^{(2)} + K_{33}X_t^{(3)} - X_t^{(3)}(\lambda_1 X_t^{(1)} + \lambda_2 X_t^{(2)} + \lambda_3 X_t^{(3)})\bigr)\dif t \\
        &\quad - \sqrt{X_t^{(1)}X_t^{(3)}}\dif W_t^{(2)} - \sqrt{X_t^{(2)}X_t^{(3)}}\dif W_t^{(4)} + \sqrt{X_t^{(3)}(1-X_t^{(1)}-X_t^{(2)}-X_t^{(3)})}\dif W_t^{(6)},
    \end{aligned}
\end{equation}
where $0 < X_t^{(i)} < 1$, $\kappa_i > 0$, for $i = 1,2,3$.

\begin{figure}
    \centering
    \includegraphics[width=\textwidth]{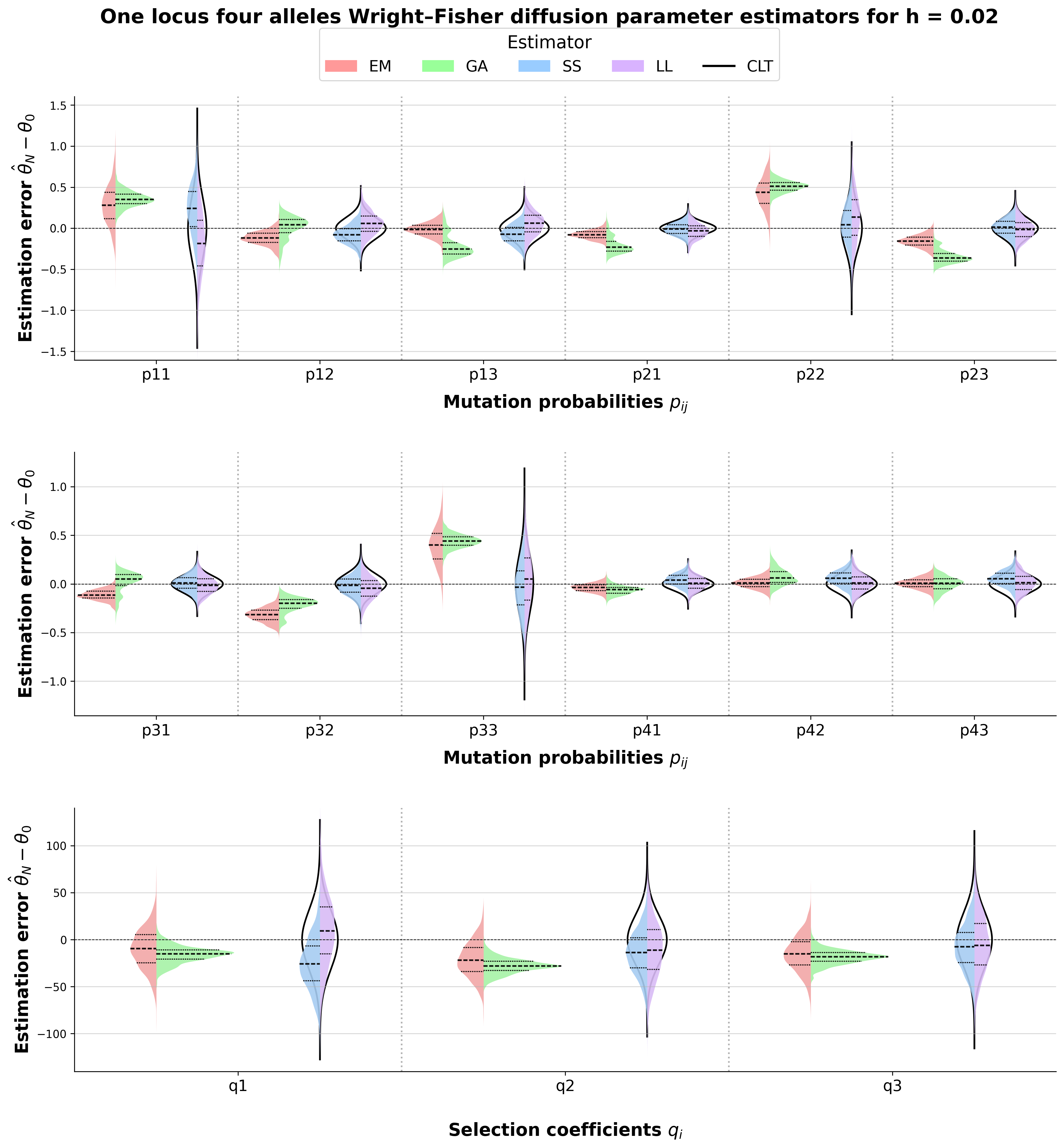}
    \caption{Violin plots of parameter estimation errors $\hat{\bm{\theta}}_N - \bm{\theta}_0$ based on 1000 simulated datasets ($T = 20$, $N = 1000$, $h = 0.02$). Colors indicate estimator. Inside each violin plot, black dashed lines show medians and the 25th and 75th percentiles. Black violin plots represent the asymptotic distributions.}
    \label{fig:WF_final_plot2}
\end{figure}

\begin{figure}
    \centering
    \includegraphics[width=.8\textwidth]{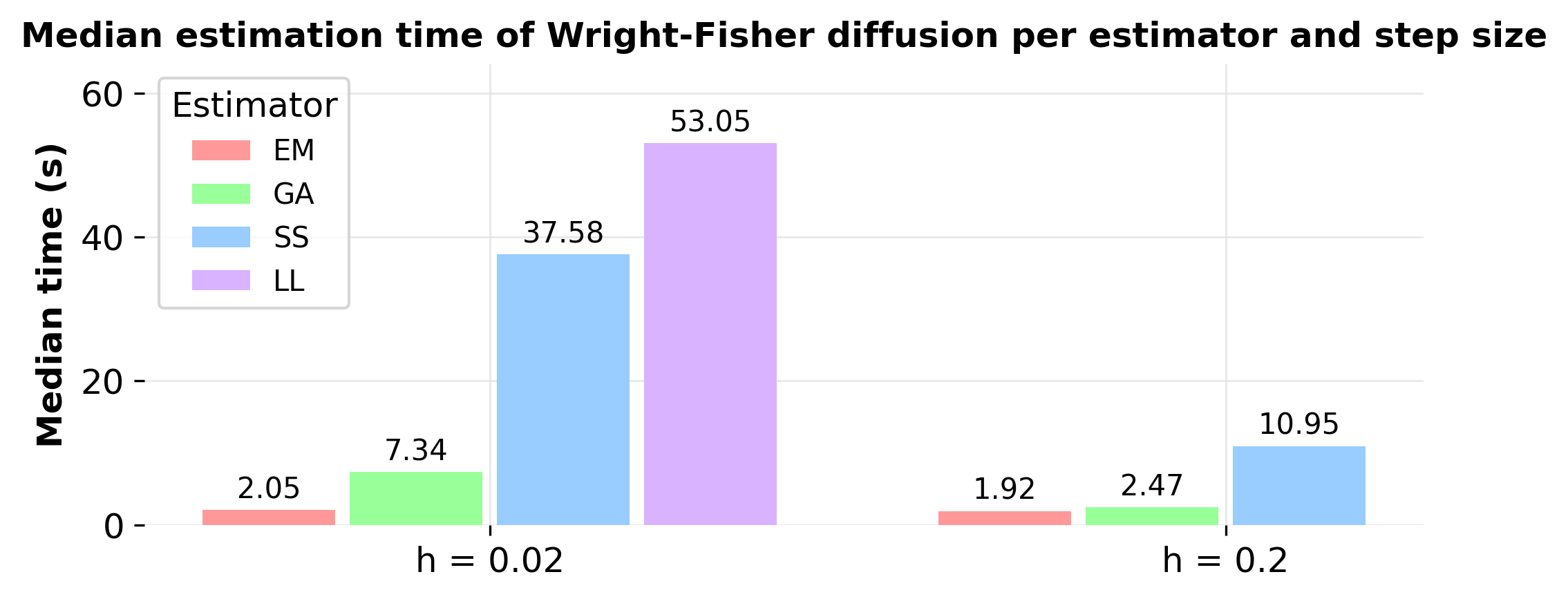}
    \caption{Median wall-clock estimation time (in seconds) across step sizes $h \in \{0.02, 0.2\}$, based on 1000 simulated datasets ($T = 20$, $N \in \{1000, 100\}$). Outliers and failed runs are excluded. Colors indicate estimator.}
    \label{fig:WF_times}
\end{figure}

 \section{Supplementary Material: Results from the Multivariate Wright-Fisher diffusion} \label{sec:ResultsWF}

For the finer time step $h=0.02$ (Figure~\ref{fig:WF_final_plot2}), all four estimators perform well and their differences become much less pronounced, as expected when the discretization error is small. In this regime, LL is less numerically unstable (20\% trajectories returned NaNs) and achieves accuracy comparable to SS, while EM and GA still tend to show larger biases for several transition probabilities and selection coefficients. This highlights that the main strength of the SS estimator is not at very small $h$, where essentially all methods work, but at coarser time steps where LL frequently fails and EM and GA underperform, whereas SS continues to be reasonably accurate. Note also that, although the GA estimator is in principle theoretically preferable to EM, GA approximation of its covariance matrix does not enforce positive definiteness, which makes this estimator numerically less stable in practice even when its nominal errors are smaller.

Figure~\ref{fig:WF_times} shows the median computation time per estimator across two step sizes ($h = 0.02$ and $h = 0.2$), computed after removing NA and outlier runs. EM and GA converge considerably faster across both settings, with LL being the most computationally expensive.

\section{Supplementary Material: Implementation} \label{sec:Implementation}

\subsection{Implementing covariance matrix for SS estimator for the Wright-Fisher diffusion}

In this Section we illustrate how to implement $\bm{\Omega}_h(\mathbf{x})$ for the Wright--Fisher diffusion \eqref{eq:WF_final_vec}. 

We start by expanding $\bm{\alpha}$ from \eqref{eq:alpha_l_ij} and $\bm{\beta}$ from Equation \eqref{eq:beta_l_ij} to the form Equation \eqref{eq:check_abc}
\begin{equation}
    \check{\bm{\alpha}} = \begin{bmatrix}
-1 & 0 & 0 & 0 & 0 & 0 & 0 & 0 & 0 \\
0 & -\frac{1}{2} & 0 & -\frac{1}{2} & 0 & 0 & 0 & 0 & 0 \\
0 & 0 & -\frac{1}{2} & 0 & 0 & 0 & -\frac{1}{2} & 0 & 0 \\
0 & -\frac{1}{2} & 0 & -\frac{1}{2} & 0 & 0 & 0 & 0 & 0 \\
0 & 0 & 0 & 0 & -1 & 0 & 0 & 0 & 0 \\
0 & 0 & 0 & 0 & 0 & -\frac{1}{2} & 0 & -\frac{1}{2} & 0 \\
0 & 0 & -\frac{1}{2} & 0 & 0 & 0 & -\frac{1}{2} & 0 & 0 \\
0 & 0 & 0 & 0 & 0 & -\frac{1}{2} & 0 & -\frac{1}{2} & 0 \\
0 & 0 & 0 & 0 & 0 & 0 & 0 & 0 & -1
\end{bmatrix}, \qquad\check{\bm{\beta}} = \begin{bmatrix}
1 & 0 & 0 \\
0 & 0 & 0 \\
0 & 0 & 0 \\
0 & 0 & 0 \\
0 & 1 & 0 \\
0 & 0 & 0 \\
0 & 0 & 0 \\
0 & 0 & 0 \\
0 & 0 & 1
\end{bmatrix}. \label{eq:alpha_beta}
\end{equation}
The choice of $\bm{\alpha}^{ij}$, or equivalently $\check{\bm{\alpha}}$, does not influence the theoretical results. However, some choices lead to faster and more numerically stable implementations. It is not computationally optimal to choose $\check{\bm{\alpha}}$ in \eqref{eq:check_abc}, as mentioned in Remark \ref{rmrk:sym}. We see that the choice of $\bm{\alpha}^{(l)ij}$ as in \eqref{eq:alpha_l_ij} leads to non-diagonal $\check{\bm{\alpha}}$ \eqref{eq:check_abc}. In the following we propose diagonal $\check{\bm{\alpha}}$ which leads to faster and more numerically stable algorithms.

First, we write the full squared diffusion matrix as a function of $\mathbf{X}$ as follows
\begin{equation}\label{eq:SigmaSigmaTWF}
    \bm{\Sigma}\bm{\Sigma}^\top(\mathbf{X}) = \diag(\mathbf{X}) - \sum_{l=1}^L \bm{\Pi}_l \mathbf{X}\mathbf{X}^\top \bm{\Pi}_l,
\end{equation}
where $\bm{\Pi}_l$ is the $M \times M$ block matrix with the $(l,l)$-th block equal to $\mathbf{I}_{M_l}$ and all other blocks equal to zero. Then, we vectorize $\bm{\Sigma}\bm{\Sigma}^\top_l(\bm{X}^{(l)})$ in \eqref{eq:SigmaSigmaT_final_WF},
\begin{equation}
\label{eq:vecSigmaFischerWright}
    \vect(\bm{\Sigma}\bm{\Sigma}^\top_l(\bm{X}^{(l)})) = \vect(\diag(\mathbf{X})) - \sum_{l=1}^L \vect(\bm{\Pi}_l \mathbf{X}\mathbf{X}^\top \bm{\Pi}_l).
\end{equation}
Instead of directly finding $\bm{\alpha}^{(l)ij}$, we will solve for $\check{\bm{\alpha}}$ as defined in \eqref{eq:check_abc}. To do so, we focus on the second term in \eqref{eq:vecSigmaFischerWright} and use vectorization of the outer product formula 
\begin{equation*}
    -\sum_{l=1}^L \vect(\bm{\Pi}_l \mathbf{X}\mathbf{X}^\top \bm{\Pi}_l) = -\left(\sum_{l=1}^L \bm{\Pi}_l \otimes\bm{\Pi}_l\right)\vect(\mathbf{X} \mathbf{X}^\top ).
\end{equation*}
Thus, $\check{\bm{\alpha}}= - \sum_{l=1}^L \bm{\Pi}_l \otimes\bm{\Pi}_l$, which is a diagonal matrix. Moreover, in our case $L = 1$, making $\check{\bm{\alpha}} = - \mathbf{I}_{d^2\times d^2}$. Now, we can use Theorem~\ref{thm:Covariance_matrix} to implement $\bm{\Omega}_h(\mathbf{x})$ for the Wright--Fisher diffusion \eqref{eq:WF_final_vec} with $\check{\bm{\alpha}} = - \mathbf{I}_{d^2\times d^2}$.

Alternately, we show how the choice of $\check{\bm{\alpha}} = - \mathbf{I}_{d^2\times d^2}$ corresponds to directly plugging the definition $\bm{\Sigma}\bm{\Sigma}^\top(\mathbf{x})$ for $L = 1$ from Equation \eqref{eq:SigmaSigmaT_vec} to Equation \eqref{eq:CovODE}. This derivation will also lead to simpler implementation formulas that will not involve matrix exponentials of $d^2 \times d^2$ matrices. 

First, we notice that 
\begin{equation} \label{eq:Ct_WF}
\begin{aligned}
    \frac{\dif \mathbf{C}(t)}{\dif t} &= \mathbf{A} \mathbf{C}(t) + \mathbf{C}(t) \mathbf{A}^{\top} + \mathbb{E}[\bm{\Sigma}\bm{\Sigma}^\top(\mathbf{X}_t)]\\
    &= \mathbf{A} \mathbf{C}(t) + \mathbf{C}(t) \mathbf{A}^{\top} + \diag(\mathbf{m}(t)) - \mathbf{C}(t) - \mathbf{m}(t)\mathbf{m}(t)^\top\\
    &= (\mathbf{A} - \frac{1}{2}\mathbf{I})  \mathbf{C}(t) + \mathbf{C}(t)(\mathbf{A} - \frac{1}{2}\mathbf{I})^\top + \diag(\mathbf{m}(t))  - \mathbf{m}(t)\mathbf{m}(t)^\top.
\end{aligned}
\end{equation}

The differential Sylvester equation \eqref{eq:Ct_WF} given initial condition $\mathbf{C}(0) =\mathbf{0}$ has a unique solution 
\begin{align*} 
    \mathbf{C}(t) &= -\int_0^t \exp((\mathbf{A}-\frac{1}{2}\mathbf{I})(t-s))\mathbf{m}(s)\mathbf{m}(s)^\top\exp((\mathbf{A}-\frac{1}{2}\mathbf{I})^\top(t-s))\dif s\\
    &+ \int_0^t \exp((\mathbf{A}-\frac{1}{2}\mathbf{I})(t-s))\diag(\mathbf{m}(s)) \exp((\mathbf{A}-\frac{1}{2}\mathbf{I})^\top(t-s)) \dif s.
\end{align*}
Using $\mathbf{m}(t)$ from \eqref{eq:MeanODESolution}, we get
\begin{equation}
    \mathbf{C}(t) = \mathbf{J}_1(t, \mathbf{A}) + \mathbf{J}_2(t, \mathbf{A}) + \mathbf{J}_3(t, \mathbf{A}) +\mathbf{J}_4(t, \mathbf{A}) +\mathbf{J}_5(t, \mathbf{A}),
\end{equation}
where we defined
\begin{align*}
    \mathbf{J}_1(t, \mathbf{A})&\coloneqq  -\int_0^t \exp((\mathbf{A}-\frac{1}{2}\mathbf{I})(t-s))\exp(\mathbf{A} s)(\mathbf{m}(0) - \mathbf{b})(\mathbf{m}(0) - \mathbf{b})^\top \exp(\mathbf{A}^\top s)\exp((\mathbf{A}-\frac{1}{2}\mathbf{I})^\top(t-s)) \dif s,\\
    \mathbf{J}_2(t, \mathbf{A})&\coloneqq - \int_0^t \exp((\mathbf{A}-\frac{1}{2}\mathbf{I})(t-s))\exp(\mathbf{A} s)(\mathbf{m}(0) - \mathbf{b})\mathbf{b}^\top \exp((\mathbf{A}-\frac{1}{2}\mathbf{I})^\top(t-s)) \dif s,\\
    \mathbf{J}_3(t, \mathbf{A})&\coloneqq -\int_0^t \exp((\mathbf{A}-\frac{1}{2}\mathbf{I})(t-s))\mathbf{b}(\mathbf{m}(0) - \mathbf{b})^\top \exp(\mathbf{A}^\top s)\exp((\mathbf{A}-\frac{1}{2}\mathbf{I})^\top(t-s)) \dif s,\\
    \mathbf{J}_4(t, \mathbf{A}) &\coloneqq \int_0^t \exp((\mathbf{A}-\frac{1}{2}\mathbf{I})(t-s))\diag(\exp(\mathbf{A} s)(\mathbf{m}(0) - \mathbf{b})) \exp((\mathbf{A}-\frac{1}{2}\mathbf{I})^\top(t-s)) \dif s,\\
    \mathbf{J}_5(t, \mathbf{A}) &\coloneqq \int_0^t \exp((\mathbf{A}-\frac{1}{2}\mathbf{I})(t-s))\bm{\Sigma}\mathbf{\Sigma}^\top(\mathbf{b})\exp((\mathbf{A}-\frac{1}{2}\mathbf{I})^\top(t-s))\dif s.
\end{align*}
Now, comparing the previous integrals with integrals \eqref{eq:cov_int_1}-\eqref{eq:cov_int_5} from Theorem~\ref{thm:Covariance_matrix}, it is clear that 
\begin{align*}
    \vect(\mathbf{J}_1(t, \mathbf{A})) &= \mathbf{I}_1(h, \mathbf{A}, -\mathbf{I}_{d^2\times d^2}) \vect((\mathbf{x} - \mathbf{b})(\mathbf{x} - \mathbf{b})^\top);\\
    \vect(\mathbf{J}_2(t, \mathbf{A})) &= \mathbf{I}_2(h, \mathbf{A}, -\mathbf{I}_{d^2\times d^2}) \vect((\mathbf{x} - \mathbf{b})\mathbf{b}^\top)\notag\\
    \vect(\mathbf{J}_3(t, \mathbf{A})) &=  \mathbf{I}_3(h, \mathbf{A}, -\mathbf{I}_{d^2\times d^2})  \vect(\mathbf{b}(\mathbf{x} - \mathbf{b})^\top);\\
    \vect(\mathbf{J}_4(t, \mathbf{A})) &= \mathbf{I}_4(h, \mathbf{A}, -\mathbf{I}_{d^2\times d^2}, \check{\bm{\beta}})  (\mathbf{x} - \mathbf{b});\\
    \vect(\mathbf{J}_5(t, \mathbf{A})) &= \mathbf{I}_5(h, \mathbf{A}, -\mathbf{I}_{d^2\times d^2}) \vect(\bm{\Sigma}\bm{\Sigma}^\top(\mathbf{b})),
\end{align*}
where $\check{\bm{\beta}}$ is defined in \eqref{eq:alpha_beta}. The four integrals $\mathbf{J}_1, \mathbf{J}_2, \mathbf{J}_3$ and $\mathbf{J}_5$ can be calculated explicitly without vectorization as follows
\begin{equation} \label{eq:Js}
\begin{aligned}
    \mathbf{J}_1(t, \mathbf{A})& =  \frac{1 - e^t}{e^t}\exp(\mathbf{A} t)(\mathbf{m}(0) - \mathbf{b})(\mathbf{m}(0) - \mathbf{b})^\top \exp(\mathbf{A}^\top t),\\
    \mathbf{J}_2(t, \mathbf{A})&=  - \exp(\mathbf{A} t)(\mathbf{m}(0) - \mathbf{b}) \mathbf{b}^\top \int_0^t \exp((\mathbf{A}-\mathbf{I})^\top s) \dif s,\\
    \mathbf{J}_3(t, \mathbf{A})& = -\int_0^t \exp((\mathbf{A}-\mathbf{I})s)\dif s\ \mathbf{b}(\mathbf{m}(0) - \mathbf{b})^\top \exp(\mathbf{A}^\top t) = \mathbf{J}_2(t, \mathbf{A})^\top,\\
    \mathbf{J}_5(t, \mathbf{A}) &= \int_0^t \exp((\mathbf{A}-\frac{1}{2}\mathbf{I})(t-s))\bm{\Sigma}\mathbf{\Sigma}^\top(\mathbf{b})\exp(-(\mathbf{A}-\frac{1}{2}\mathbf{I})^\top s)\dif s\exp((\mathbf{A}-\frac{1}{2}\mathbf{I})^\top t).
\end{aligned}
\end{equation}
The integral $\mathbf{J}_4$ can only be calculated as $\vect(\mathbf{J}_4(t, \mathbf{A}))$. All integrals in \eqref{eq:Js} can be computed using Theorem 1 from \citep{VanLoan1978}. Thus, for $L = 1$ and $\check{\bm{\alpha}} = - \mathbf{I}_{d^2\times d^2}$, we do not need to vectorize four out of five integrals, which makes computations much cheaper, faster, and more stable. Conversely, if we worked with $\check{\bm{\alpha}} $ as defined in \eqref{eq:alpha_beta}, we would need to vectorize all five integrals.

To reduce computational cost, all quantities in $\bm{\Omega}_h$ that depend only on $\mathbf{A}$ and $h$ (the matrix exponentials, the integral of matrix exponentials, the vectorized flow matrix $\mathbf{J}_4$, and the constant diffusion term $\bm{\Sigma}\mathbf{\Sigma}^\top(\mathbf{b})$) are computed once prior to the scan over observations and cached. Inside the scan, only the state-dependent terms (quadratic and cross terms in $\mathbf{x} - \mathbf{b}$) are evaluated per step, reducing each likelihood contribution to a handful of matrix-vector products.

\subsection{Gaussian approximation estimator for the Student Kramers oscillator}

From \eqref{eq:GA}, the transition distribution of the GA approximation is
\begin{align*}
    \begin{bmatrix}
        X_{t_k}\\
        V_{t_k} \\
    \end{bmatrix} \mid
    \begin{bmatrix}
        X_{t_{k-1}}\\
        V_{t_{k-1}}\\
    \end{bmatrix} = \begin{bmatrix}
        x\\
        v\\
    \end{bmatrix} &\sim \mathcal{N}\left(
    \bm{\mu}_h^{[\mathrm{GA}]}(x, v), 
    \bm{\Omega}_h^{\mathrm{[GA]}}(x,v)\right),
\end{align*}
where
\begin{equation}
   \bm{\mu}_h^{[\mathrm{GA}]}(x, v) =  \begin{bmatrix}
        x + h v - \frac{h^2}{2}(\eta v + U'(x))\\
        v - h (\eta v + U'(x)) + \frac{h^2}{2}(\eta^2 v + \eta U'(x) - U''(x) v)
    \end{bmatrix}.
\end{equation}
Here, we cannot use $\bm{\Omega}_h^{[\mathrm{GA}]}$ from \eqref{eq:GA} directly because it is singular. Instead, we use formula \eqref{eq:expendingL} to add one more order of $h$ to $\bm{\Omega}_h^{[\mathrm{GA}]}$ and obtain
\begin{align}
    [\bm{\Omega}_h^{[\mathrm{GA}]}(x,v)]_{11} &= \frac{h^3}{3} \left( v \left( v \alpha + \beta \right) + \gamma \right), \label{eq:OmegahK11} \\
    [\bm{\Omega}_h^{[\mathrm{GA}]}(x,v)]_{12} &= \frac{h^2}{2} \left( v \left( v \alpha + \beta \right) + \gamma \right) + \frac{h^3}{6} \left( 2bv x^2 \alpha + 2av x^3 \alpha + v^2 \alpha^2 + bx^2 \beta + ax^3 \beta + v \alpha \beta + d \left( 2v \alpha + \beta \right) \right) \notag \\
    & +\frac{h^3}{6} \left( cx \left( 2v \alpha + \beta \right) + \alpha \gamma - \left( 5v^2 \alpha + 4v \beta + 3 \gamma \right) \eta \right), \label{eq:OmegahK12} \\
     [\bm{\Omega}_h^{[\mathrm{GA}]}(x,v)]_{22} &= h \left( v \left( v \alpha + \beta \right) + \gamma \right) + \frac{h^2}{2} \left( 2bv x^2 \alpha + 2av x^3 \alpha + v^2 \alpha^2 + bx^2 \beta + ax^3 \beta + v \alpha \beta + d \left( 2v \alpha + \beta \right) \right) \notag \\
    & + \frac{h^2}{2} \left( cx \left( 2v \alpha + \beta \right) + \alpha \gamma - \left( 4v^2 \alpha + 3v \beta + 2 \gamma \right) \eta \right) \notag \\
    & + \frac{h^3}{6} \left( 2d^2 \alpha + 8bv^2 x \alpha + 12av^2 x^2 \alpha \right) \notag \\
    & + \frac{h^3}{6} \left( 2b^2 x^4 \alpha + 4ab x^5 \alpha + 2a^2 x^6 \alpha + 2bv x^2 \alpha^2 + 2av x^3 \alpha^2 + v^2 \alpha^3 + 6bv x \beta + 9av x^2 \beta + bx^2 \alpha \beta \right) \notag \\
    & + \frac{h^3}{6} \left( ax^3 \alpha \beta + v \alpha^2 \beta + d \alpha \left( 4x (c + x (b + a x)) + 2v \alpha + \beta \right) + 4bx \gamma + 6ax^2 \gamma + \alpha^2 \gamma + 2c^2 x^2 \alpha \right) \notag \\
    & + \frac{h^3}{6} \left( -d \left( 10v \alpha + 3 \beta \right) \eta - \left( bx^2 \left( 10v \alpha + 3 \beta \right) + ax^3 \left( 10v \alpha + 3 \beta \right) + \alpha \left( 6v^2 \alpha + 5v \beta + 4 \gamma \right) \right) \eta \right) \notag \\
    & + \frac{h^3}{6} c \left( 4v^2 \alpha + 4x^3 (b + a x) \alpha + 3v \beta + x \alpha \beta + 2 \gamma + 2v x \alpha ( \alpha - 5 \eta ) - 3x \beta \eta \right) \notag \\
    & + \frac{h^3}{6} \left( 12v^2 \alpha + 7v \beta + 4 \gamma \right) \eta^2. \label{eq:OmegahK22}
\end{align}
Although we could trim $\bm{\Omega}_h^{[\mathrm{GA}]}$ to include only the lowest order terms of $h$, leading to an estimator as in \cite{DitlevsenSamson2018}, we use the full approximation given by formulas \eqref{eq:OmegahK11}-\eqref{eq:OmegahK22}. The GA estimator is defined as
\begin{align*}
\widehat{\bm{\theta}}_N^\mathrm{[GA]} = \argmin_{\bm{\theta}} \sum_{k=1}^N \left( \log \det \bm{\Omega}_h^{[\mathrm{GA}]}(\mathbf{Y}_{t_{k-1}}) + (\mathbf{Y}_{t_k} - \bm{\mu}_h^{[\mathrm{GA}]}(\mathbf{Y}_{t_{k-1}}))^\top \bm{\Omega}_h^{[\mathrm{GA}]}(\mathbf{Y}_{t_{k-1}})^{-1} (\mathbf{Y}_{t_k} - \bm{\mu}_h^{[\mathrm{GA}]}(\mathbf{Y}_{t_{k-1}})) \right),
\end{align*}
where $Y_{t_k} = (X_{t_k}, V_{t_k})$ for $k = 0, 1, \ldots, N$.

\subsection{Ozaki's local linearization estimator for the Student Kramers Oscillator}

To derive the LL estimator, we first transform  \eqref{eq:StudentKramersSDE} to an SDE with constant diffusion coefficient. We achieve this by applying the Lamperti transform $\psi$, similar to the approach in \cite{nagahara1996non} for a one-dimensional nonlinear student-type Pearson diffusion. We have
\begin{align*}
    U_t = \psi(V_t) = \int^{V_t} \frac{d v}{\sqrt{\alpha v^2 + \beta v + \gamma}} = \frac{1}{\sqrt{\alpha}}\arcsinh \left(\frac{2 \alpha V_t + \beta}{\sqrt{ 4 \alpha \gamma - \beta^2}}\right).
\end{align*}
Since we assume that $\alpha > 0$ and $4\alpha \gamma - \beta^2 \leq 0$, then
\begin{align*}
    V_t = \psi^{-1}(U_t) = \frac{\sqrt{4 \alpha \gamma - \beta^2}\sinh(\sqrt{\alpha} U_t) - \beta}{2 \alpha}.
\end{align*}
We transform \eqref{eq:StudentKramersSDE} by applying It\^o's lemma to $\widetilde{\bm{\psi}}(X_t, V_t) = (X_t, \psi(V_t))$
\begin{align*}
    \dif X_t &= F^{(1)}(\widetilde{\bm{\psi}}^{-1}(X_t, U_t)) \dif t,\\
    \dif U_t &= \left(\frac{\partial \psi}{\partial v}(\widetilde{\bm{\psi}}^{-1}(X_t, U_t))F^{(2)}(\widetilde{\bm{\psi}}^{-1}(X_t, U_t))+\frac{1}{2}\frac{\partial^2 \psi}{\partial v^2}(\widetilde{\bm{\psi}}^{-1}(X_t, U_t)\sigma^2(\psi^{-1}(U_t)))\right)\dif t + \dif W_t.
\end{align*}
The transformed SDE becomes
\begin{align*}
    \dif X_t &= \left(\sqrt{ \gamma - \frac{\beta^2}{4 \alpha}}\frac{\sinh(\sqrt{\alpha} U_t)}{\sqrt{\alpha}} - \frac{\beta}{2 \alpha}\right) \dif t,\\
    \dif U_t &= \left(-\left(\eta + \frac{\alpha}{2}\right) \frac{\tanh(\sqrt{\alpha} U_t)}{\sqrt{\alpha}} + \frac{\frac{\beta}{2 \alpha} \eta - U'(X_t)}{\sqrt{ \gamma - \frac{\beta^2}{4\alpha}}\cosh(\sqrt{\alpha}U_t)}\right)\dif t + \dif W_t.
\end{align*}
To implement the LL estimator, we need to find $D\mathbf{F}(x,u)$ for the corresponding drift function $\mathbf{F}$, which is
\begin{align*}
    D\mathbf{F}(x, u) = \begin{bmatrix}
        0 & \sqrt{ \gamma - \frac{\beta^2}{4\alpha}}\cosh(\sqrt{\alpha}U_t)\\
        -\frac{U''(x)}{\sqrt{ \gamma - \frac{\beta^2}{4\alpha}}\cosh(\sqrt{\alpha}U_t)} & -\frac{\eta + \frac{\alpha}{2}}{\cosh^2(\sqrt{\alpha}U_t)} - \frac{\frac{\beta}{2 \alpha} \eta - U'(X_t)}{\sqrt{ \gamma - \frac{\beta^2}{4\alpha}}\cosh(\sqrt{\alpha}U_t)}\sqrt{\alpha}\tanh(\sqrt{\alpha}U_t)
    \end{bmatrix}.
\end{align*}
Then, the LL estimator for SDE \eqref{eq:StudentKramersSDE} is given by
\begin{align}
\widehat{\bm{\theta}}_N^\mathrm{[LL]} = \argmin_{\bm{\theta}} &\left\{\sum_{k=1}^N (\widetilde{\bm{\psi}}(\mathbf{Y}_{t_k}) - \bm{\mu}_h^{[\mathrm{LL}]}(\widetilde{\bm{\psi}}(\mathbf{Y}_{t_{k-1}})))^\top \bm{\Omega}_h^{[\mathrm{LL}]}(\widetilde{\bm{\psi}}(\mathbf{Y}_{t_{k-1}}))^{-1} (\widetilde{\bm{\psi}}(\mathbf{Y}_{t_k}) - \bm{\mu}_h^{[\mathrm{LL}]}(\widetilde{\bm{\psi}}(\mathbf{Y}_{t_{k-1}}))) \right. \notag \\
&\left. + \sum_{k=1}^N \log \det \bm{\Omega}_h^{[\mathrm{LL}]}(\widetilde{\bm{\psi}}(\mathbf{Y}_{t_{k-1}})) + \sum_{k=1}^N \log (\psi'(V_{t_k})^2) \right\}, \label{eq:LLstudent}
\end{align}
where $\bm{\mu}_h^{[\mathrm{LL}]}$ and $\bm{\Omega}_h^{[\mathrm{LL}]}$ are defined in \eqref{eq:LL}. The last term in \eqref{eq:LLstudent} arises due to the change of variable formula.

\section{Supplementary Material: Proofs}  \label{sec:Proofs}

Denote the infinitesimal generator of SDE \eqref{eq:SDE} by $\mathbb{L}$ and the infinitesimal generator of SDE \eqref{eq:SplittingEq1} by $\mathbb{L}^{[1]}$. For sufficiently smooth function $\bm{\phi}$, it holds
\begin{align}
    \mathbb{L}\bm{\phi}(\mathbf{x}) &\coloneqq \sum_{i=1}^d \partial_{x^{(i)}} \bm{\phi}(\mathbf{x}) F^{(i)}(\mathbf{x})+ \frac{1}{2}\sum_{i,j = 1}^d \partial^2_{x^{(i)} x^{(j)}} \bm{\phi}(\mathbf{x}) [\bm{\Sigma}\bm{\Sigma}^\top(\mathbf{x})]_{ij}, \label{eq:L}\\
    \mathbb{L}^{[1]}\bm{\phi}(\mathbf{x}) &\coloneqq \sum_{i=1}^d \partial_{x^{(i)}}\bm{\phi}(\mathbf{x}) \mathbf{A}_{i\cdot}(\mathbf{x} - \mathbf{b})+ \frac{1}{2}\sum_{i,j = 1}^d \partial^2_{x^{(i)} x^{(j)}} \bm{\phi}(\mathbf{x}) [\bm{\Sigma}\bm{\Sigma}^\top(\mathbf{x})]_{ij}. \label{eq:L1}
\end{align} 
To emphasize parameters, we sometimes write
\begin{equation*}
    \mathbb{L}_{\bm{\theta}}\bm{\phi}(\mathbf{x}; \bm{\theta}) = \mathbb{L}_{\bm{\theta}^{(1)},\bm{\theta}^{(2)}}\bm{\phi}(\mathbf{x}; \bm{\theta})
\end{equation*}
and under the true parameters
\begin{equation*}
    \mathbb{L}_{\bm{\theta}_0}\bm{\phi}(\mathbf{x}; \bm{\theta}) = \mathbb{L}_{\bm{\theta}^{(1)}_0,\bm{\theta}^{(2)}_0}\bm{\phi}(\mathbf{x}; \bm{\theta}) = \mathbb{L}_0\bm{\phi}(\mathbf{x}; \bm{\theta}).
\end{equation*}

To prove the asymptotic behavior of the estimator, we approximate $\bm{\Omega}_h$. However, we use the exact expressions in the implementation, with the approximations serving only as a tool for the proofs.
\begin{proposition}
    \label{prop:OmegahProp}
    For the covariance matrix $\bm{\Omega}_h^\mathrm{[SS]}(\mathbf{x})$ defined in \eqref{eq:OmegahSS}, it holds
    \begin{align}
        \bm{\Omega}_h^\mathrm{[SS]}(\mathbf{x}) &= h\bm{\Sigma}\bm{\Sigma}^\top(\mathbf{x}) + \frac{h^2}{2}\left(\mathbf{A} \bm{\Sigma}\bm{\Sigma}^\top(\mathbf{x}) + \bm{\Sigma}\bm{\Sigma}^\top(\mathbf{x}) \mathbf{A}^\top  + \mathbb{L}\bm{\Sigma}\bm{\Sigma}^\top(\mathbf{x})\right)+ \mathbf{R}(h^3, \mathbf{x}); \label{eq:OmegahSS_approx}\\
        \bm{\Omega}_h^{\mathrm{[SS]}}(\mathbf{x})^{-1} &=\frac{1}{h}\bm{\Sigma}\bm{\Sigma}^\top(\mathbf{x})^{-1} - \frac{1}{2}(\bm{\Sigma}\bm{\Sigma}^\top(\mathbf{x})^{-1}\mathbf{A} + \mathbf{A}^\top \bm{\Sigma}\bm{\Sigma}^\top(\mathbf{x})^{-1})\notag\\
        &-\frac{1}{2}\bm{\Sigma}\bm{\Sigma}^\top(\mathbf{x})^{-1}\mathbb{L}\bm{\Sigma}\bm{\Sigma}^\top(\mathbf{x})\bm{\Sigma}\bm{\Sigma}^\top(\mathbf{x})^{-1}  + \mathbf{R}(h, \mathbf{x});\\
        \log \det \bm{\Omega}_h^{\mathrm{[SS]}}(\mathbf{x}) &=d \log h +  \log \det \bm{\Sigma}\bm{\Sigma}^\top(\mathbf{x}) + h \tr \mathbf{A} + \frac{h}{2} \tr \bm{\Sigma}\bm{\Sigma}^\top(\mathbf{x})^{-1}\mathbb{L}\bm{\Sigma}\bm{\Sigma}^\top(\mathbf{x}) + {R}(h^2, \mathbf{x}).
    \end{align}
\end{proposition}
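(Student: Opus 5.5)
The plan is to expand $\bm{\Omega}_h(\mathbf{y};\bm{\theta})$ for a generic starting point $\mathbf{y}$ in powers of $h$, and then substitute $\mathbf{y} = \bm{f}_{h/2}(\mathbf{x})$. The inverse and log-determinant formulas then follow from matrix perturbation expansions.

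\textbf{Step 1: expand the integrand of \eqref{eq:Omegah_int}.} Start from the integral representation \eqref{eq:Omegah_int}. Since $\bm{\Sigma}\bm{\Sigma}^\top$ is quadratic, the inner expectation is explicit:
\begin{equation*}
\mathbb{E}^{[1]}[\bm{\Sigma}\bm{\Sigma}^\top(\mathbf{X}_u^{[1]})\mid \mathbf{X}_0^{[1]}=\mathbf{y}] = \bm{\Sigma}\bm{\Sigma}^\top(\mathbf{m}(u)) + \text{(reshaped } \check{\bm{\alpha}}\vect(\mathbf{C}(u))\text{)}.
\end{equation*}
Here $\mathbf{m}(u)$ and $\mathbf{C}(u)$ are given by Theorem~\ref{thm:Covariance_matrix} with $\mathbf{m}(0)=\mathbf{y}$ and $\mathbf{C}(0)=\mathbf{0}$. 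Alternatively, Dynkin's formula for $\mathbb{L}^{[1]}$ \eqref{eq:L1} gives
\begin{equation*}
\mathbb{E}^{[1]}[\bm{\Sigma}\bm{\Sigma}^\top(\mathbf{X}_u^{[1]})] = \bm{\Sigma}\bm{\Sigma}^\top(\mathbf{y}) + u\,\mathbb{L}^{[1]}\bm{\Sigma}\bm{\Sigma}^\top(\mathbf{y}) + \mathbf{R}(u^2,\mathbf{y}).
\end{equation*}
The remainder is polynomial in $\mathbf{y}$ because $(\mathbb{L}^{[1]})^2\bm{\Sigma}\bm{\Sigma}^\top$ is a polynomial and the conditional moments of the linear-drift Pearson diffusion are polynomial in $\mathbf{y}$ with coefficients continuous in $u$.

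\textbf{Step 2: integrate in $u$.} Expand $\exp(\mathbf{A}(h-u)) = \mathbf{I} + (h-u)\mathbf{A} + O(h^2)$ and integrate over $u\in[0,h]$. This yields
\begin{equation*}
\bm{\Omega}_h(\mathbf{y}) = h\bm{\Sigma}\bm{\Sigma}^\top(\mathbf{y}) + \tfrac{h^2}{2}\big(\mathbf{A}\bm{\Sigma}\bm{\Sigma}^\top(\mathbf{y}) + \bm{\Sigma}\bm{\Sigma}^\top(\mathbf{y})\mathbf{A}^\top + \mathbb{L}^{[1]}\bm{\Sigma}\bm{\Sigma}^\top(\mathbf{y})\big) + \mathbf{R}(h^3,\mathbf{y}).
\end{equation*}

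\textbf{Step 3: substitute $\mathbf{y}=\bm{f}_{h/2}(\mathbf{x})$.} Use $\bm{f}_{h/2}(\mathbf{x}) = \mathbf{x} + \tfrac{h}{2}\mathbf{N}(\mathbf{x}) + \mathbf{R}(h^2,\mathbf{x})$, which follows from \eqref{eq:SplittingEq2} and (A3), with polynomial growth of the flow inherited from (A2)--(A3). Then
\begin{equation*}
h\bm{\Sigma}\bm{\Sigma}^\top(\bm{f}_{h/2}(\mathbf{x})) = h\bm{\Sigma}\bm{\Sigma}^\top(\mathbf{x}) + \tfrac{h^2}{2}\sum_i N^{(i)}(\mathbf{x})\,\partial_{x^{(i)}}\bm{\Sigma}\bm{\Sigma}^\top(\mathbf{x}) + \mathbf{R}(h^3,\mathbf{x}).
\end{equation*}
In the $h^2$ terms, replacing $\mathbf{y}$ by $\mathbf{x}$ costs only $\mathbf{R}(h^3,\mathbf{x})$. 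The key observation is that $\mathbf{A}(\mathbf{x}-\mathbf{b})+\mathbf{N}(\mathbf{x})=\mathbf{F}(\mathbf{x})$ and the second-order parts of the two generators coincide. Hence
\begin{equation*}
\mathbb{L}^{[1]}\bm{\Sigma}\bm{\Sigma}^\top + \mathbf{N}\cdot\nabla\bm{\Sigma}\bm{\Sigma}^\top = \mathbb{L}\bm{\Sigma}\bm{\Sigma}^\top,
\end{equation*}
which gives \eqref{eq:OmegahSS_approx}.

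\textbf{Step 4: inverse and log-determinant.} Write $\bm{\Omega}_h^{\mathrm{[SS]}} = h\mathbf{S}(\mathbf{I} + h\mathbf{S}^{-1}\mathbf{M} + \mathbf{R}(h^2))$, where $\mathbf{S}=\bm{\Sigma}\bm{\Sigma}^\top(\mathbf{x})$ and $\mathbf{M}$ is the $h^2/2$ bracket. A Neumann series gives
\begin{equation*}
(\bm{\Omega}_h^{\mathrm{[SS]}})^{-1} = h^{-1}\mathbf{S}^{-1} - \mathbf{S}^{-1}\mathbf{M}\mathbf{S}^{-1} + \mathbf{R}(h,\mathbf{x}),
\end{equation*}
and expanding $\mathbf{S}^{-1}\mathbf{M}\mathbf{S}^{-1}$ yields the stated three terms. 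For the log-determinant, use $\log\det(h\mathbf{S}) = d\log h + \log\det\mathbf{S}$ and $\log\det(\mathbf{I}+h\mathbf{B}) = h\tr\mathbf{B} + R(h^2)$. Finally, $\tfrac12\tr\big(\mathbf{S}^{-1}(\mathbf{A}\mathbf{S}+\mathbf{S}\mathbf{A}^\top)\big) = \tr\mathbf{A}$.

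\textbf{Main obstacle.} I expect the main difficulty to be controlling the remainders uniformly with polynomial growth in Steps 1 and 4. In particular, $\bm{\Sigma}\bm{\Sigma}^\top(\mathbf{x})^{-1}$ must be polynomially bounded so that $\mathbf{S}^{-1}\mathbf{R}$ stays of the class $\mathbf{R}$. I would first handle this in the elliptic case, assuming $\|\bm{\Sigma}\bm{\Sigma}^\top(\mathbf{x})^{-1}\|$ has polynomial growth uniformly on $\overline{\Theta}$. The hypoelliptic case requires a blockwise rescaling by powers of $h$ rather than the plain Neumann argument.
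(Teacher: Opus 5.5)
Your proposal is correct and follows the paper's proof essentially step for step: expand $\mathbb{E}^{[1]}[\bm{\Sigma}\bm{\Sigma}^\top(\mathbf{X}_u^{[1]})]$ inside \eqref{eq:Omegah_int} to first order via $\mathbb{L}^{[1]}$, integrate in $u$, substitute $\bm{f}_{h/2}(\mathbf{x}) = \mathbf{x} + \frac{h}{2}\mathbf{N}(\mathbf{x}) + \mathbf{R}(h^2,\mathbf{x})$, and merge $\mathbb{L}^{[1]}\bm{\Sigma}\bm{\Sigma}^\top + \mathbf{N}\cdot\nabla\bm{\Sigma}\bm{\Sigma}^\top = \mathbb{L}\bm{\Sigma}\bm{\Sigma}^\top$ using $\mathbf{A}(\mathbf{x}-\mathbf{b}) + \mathbf{N}(\mathbf{x}) = \mathbf{F}(\mathbf{x})$. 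The main difference is that you carry out the inverse and log-determinant expansions that the paper delegates to Lemma~6.2 of \citet{pilipovic2024SecondOrder}, and your caveat is well placed: these two formulas presuppose ellipticity with polynomially controlled $\bm{\Sigma}\bm{\Sigma}^\top(\mathbf{x})^{-1}$, and for remainders of class $\mathbf{R}$ uniformly in $\mathbf{x}$ you should pair the exact resolvent identity with a lower bound such as $\bm{\Omega}_h \succeq c\,h\,\mathbf{I}$ rather than rely on convergence of the Neumann series, a point the paper leaves implicit.
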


\subsection{Proof of Theorem \ref{thm:StudentKramers}}
To prove the existence and uniqueness of the solution to \eqref{eq:StudentKramersSDE}, we show that the diffusion function
\begin{equation*}
    \bm{\Sigma}(x,v) = \begin{bmatrix}
        0 & 0 \\
        0 & \sqrt{\alpha v^2 + \beta v + \gamma}
    \end{bmatrix}
\end{equation*}
is Lipschitz and that the drift function
\begin{equation*}
    \mathbf{F}(x, v) = \begin{bmatrix}
        v\\
        -\eta v + a x^3 + b x^2 + c x + d
    \end{bmatrix}
\end{equation*}
is one-sided Lipschitz and of at most polynomial growth (see Assumptions \ref{as:monoton_condition} and \ref{as:F_polynomial_growth} in Section \ref{sec:Assumptions}).

To see that $\bm{\Sigma}(x,v)$ is Lipschitz, we show that the derivative of $\sigma(v) = \sqrt{\alpha v^2 + \beta v + \gamma}$ is bounded for all $v \in \mathbb{R}$. We have
\begin{align*}
    |\sigma'(v)| = \frac{|2 \alpha v + \beta|}{2 \sqrt{\alpha v^2 + \beta v + \gamma}}.
\end{align*}
If $v = 0$, then $|\sigma'(v)| = \frac{|\beta|}{2\sqrt{\gamma}}$. Otherwise, for $|v| > c_1 > 0$ and $\sqrt{1 + \beta/(\alpha v) + \gamma / (\alpha v^2)} > c_2 > 0$, we have
\begin{align*}
    |\sigma'(v)| &\leq \frac{|\alpha v|}{\sqrt{\alpha v^2 + \beta v + \gamma}} + \frac{|\beta|}{2 \sqrt{\alpha v^2 + \beta v + \gamma}} < \frac{\sqrt{\alpha}}{c_2} + \frac{|\beta|}{2 \sqrt{\alpha} c_1 c_2}.
\end{align*}

To show that $\mathbf{F}$ is one-sided Lipschitz, we start by computing the Jacobian matrix of $\mathbf{F}$ as
\begin{equation}
    D \mathbf{F}(x, v) = \begin{bmatrix}
        0 & 1\\
        3 a x^2 + 2 b x + c & -\eta
    \end{bmatrix}.
\end{equation}
Since $a < 0$ and $\eta > 0$, all components of $D \mathbf{F}(x, v)$ are upper bounded, so $\mathbf{F}$ is one-sided Lipschitz. By construction, $\mathbf{F}$ has polynomial growth. Thus, a unique, strong solution exists to \eqref{eq:StudentKramersSDE}.

Next, we prove that the solution to \eqref{eq:StudentKramersSDE} is hypoelliptic using H\"ormander's condition \citep{Hormander1967}. To apply this condition, we first write  \eqref{eq:StudentKramersSDE} in Stratonovich form:
\begin{equation} \label{eq:StudentKramersSDEStratonovich}
    \begin{aligned}
        \dif X_t &= V_t \dif t, \\
        \dif V_t &= \left(-\eta V_t + a X_t^3 + b X_t^2 + c X_t + d - \frac{1}{2}\left(\alpha V_t + \frac{1}{2}\beta\right) \right) \dif t + \sqrt{\alpha V_t^2 + \beta V_t + \gamma} \circ \dif W_t.
    \end{aligned}
\end{equation}
Now, the associated drift and diffusion vector fields are
\begin{equation}
    \mathbf{V}_0(x, v) = \begin{bmatrix}
        v\\
        \left(-\eta - \frac{1}{2}\alpha\right) v + a x^3 + b x^2 + c x + d - \frac{1}{4}\beta
    \end{bmatrix}, \quad
    \mathbf{V}_1(x, v) = \begin{bmatrix}
        0\\
        \sqrt{\alpha v^2 + \beta v + \gamma}
    \end{bmatrix}.
\end{equation}
We recall that the Lie bracket of smooth vector fields $\bm{f}, \bm{g}: \mathbb{R}^{d} \to \mathbb{R}^{d}$, defined as
\begin{equation*}
    [\bm{f}, \bm{g}] \coloneqq D \bm{g}(\mathbf{x}) \bm{f}(\mathbf{x}) - D \bm{f}(\mathbf{x}) \bm{g}(\mathbf{x}),
\end{equation*}
is used to verify H\"ormander's condition.

We define the set $\mathcal{H}$ of vector fields iteratively as
\begin{align*}
    \mathbf{V}_1 \in \mathcal{H},\\
    \mathbf{H} \in \mathcal{H} \Rightarrow [\mathbf{V}_0, \mathbf{H}], [\mathbf{V}_1, \mathbf{H}] \in \mathcal{H}.
\end{align*}
The weak H\"ormander condition is met if the vectors in $\mathcal{H}$ span $\mathbb{R}^{d}$ at every point $\mathbf{x} \in \mathbb{R}^{d}$. Initially, vector $\mathbf{V}_1$ spans $\{(0, y) \in \mathbb{R}^{2} \mid y \in \mathbb{R}\}$, a one-dimensional subspace. Therefore, we need to verify the existence of some $\mathbf{H} \in \mathcal{H}$ with a non-zero first element. It is easy to see that
\begin{align*}
    [\mathbf{V}_0, \mathbf{V}_1]^{(1)} &= -\sigma(v) < 0, \quad \forall v \in \mathbb{R},\\
    [\mathbf{V}_1, \mathbf{V}_0]^{(1)} &= \sigma(v) \geq 0, \quad \forall v \in \mathbb{R}.
\end{align*}
Thus, the Student Kramers SDE defined in \eqref{eq:StudentKramersSDE} is hypoelliptic, meaning it admits a smooth transition density.

Since SDE \eqref{eq:StudentKramersSDE} is hypoelliptic and the drift and diffusion functions with their corresponding derivatives grow at most polynomially, SDE \eqref{eq:StudentKramersSDE} has the strong Feller property (see, for example, Theorem 1.2 in \cite{Hairer2011}). Then, we use Theorem 4.5 from \cite{Meyn1993} to prove that SDE \eqref{eq:StudentKramersSDE} has an invariant measure. We choose $V(x,v) = \frac{1}{2} v^2 + U(x)$ as a Lyapunov function. Since $a < 0$, $\lim\limits_{\|\mathbf{x}\| \to \infty} V(\mathbf{x}) = +\infty$.

For SDE \eqref{eq:StudentKramersSDE}, the infinitesimal generator is
\begin{equation*}
   \mathbb{L} \phi(x, v) = v \frac{\partial \phi}{\partial x} + (-\eta v - U'(x)) \frac{\partial \phi}{\partial v} + \frac{1}{2}(\alpha v^2 + \beta v + \gamma) \frac{\partial^2 \phi}{\partial v^2}.
\end{equation*}
Then,
\begin{equation}
   \mathbb{L} V(x, v) = \left(\frac{\alpha}{2} - \eta\right) v^2 + \frac{\beta}{2} v + \frac{\gamma}{2}.
\end{equation}
Since $\alpha < 2 \eta$, we can find a compact set $K \subset \mathbb{R}^2$ and constants $c_1 > 0, c_2 \in \mathbb{R}$, such that
\begin{equation}
   \mathbb{L} V(\mathbf{x}) \leq -c_1 \|\mathbf{x}\|^2 + c_2 \mathds{1}\{\mathbf{x} \in K\}.
\end{equation}
According to Theorem 4.5 in \cite{Meyn1993}, an invariant measure $\pi$ for SDE \eqref{eq:StudentKramersSDE} exists.

We can justify decoupling of the invariant density $ \pi(x,v)\approx \pi_X(x)\pi_V(v)$ by analyzing the system across two limits: the marginalization of the momentum variable, and the overdamped Smoluchowski-Kramers limit of the spatial variable.

Assuming strong damping $\eta \gg 1$, the velocity relaxation occurs on a much faster timescale than the macroscopic evolution of $X_t$. To establish the macroscopic behavior of $X_t$, we apply the Smoluchowski-Kramers approximation—also known as the overdamped limit. Mathematically, this is formalized via the averaging principle for fast-slow stochastic systems \citep{Pavliotis2008, Pavliotis2014}, where $V_t$ acts as a rapidly mixing fast variable and $X_t$ as the slow variable.

In the strong friction regime, the instantaneous dynamics of $V_t$ are dominated by the large $O(\eta)$ friction term and the noise, making the bounded $O(1)$ gradient $-U'(X_t)$ negligible from the perspective of the fast timescale. Consequently, for a frozen macroscopic position $X_t \approx x$, the velocity evolves as a one-dimensional Pearson diffusion:
\begin{equation}
\mathrm{d} V_t \approx -\eta V_t \mathrm{d}t + \sqrt{\alpha V_t^2 + \beta V_t + \gamma} \mathrm{d} W_t.
\end{equation}
The exact stationary density $\pi_V(v)$ for this fast process satisfies the stationary Fokker-Planck equation
\begin{equation}
\frac{\partial}{\partial v} (\eta v \pi_V) + \frac{1}{2} \frac{\partial^2}{\partial v^2} \Big[ (\alpha v^2 + \beta v + \gamma) \pi_V \Big] = 0.
\end{equation}
Integrating once with vanishing boundary conditions at infinity yields
\begin{equation}
\frac{\mathrm{d}}{\mathrm{d}v} \ln \pi_V(v) = \frac{-2\eta v - \frac{\mathrm{d}}{\mathrm{d}v}\sigma^2(v)}{\sigma^2(v)} = \frac{-(2\eta + 2\alpha)v - \beta}{\alpha v^2 + \beta v + \gamma}.
\end{equation}
Integrating this provides the exact skew $t$-distribution invariant density
\begin{equation} \label{eq:pi_V}
\pi_V(v) \propto (\alpha v^2 + \beta v + \gamma)^{-\left(\frac{\eta}{\alpha} + 1\right)} \exp\left( \frac{2\beta \eta}{\alpha \sqrt{4\alpha\gamma - \beta^2}} \arctan\left( \frac{2\alpha v + \beta}{\sqrt{4\alpha\gamma - \beta^2}} \right) \right).
\end{equation}

To establish the homogenized marginal equation for $X_t$, we algebraically rearrange the $V_t$ equation in \eqref{eq:StudentKramersSDE} to isolate the displacement $V_t \mathrm{d}t$
\begin{equation}
\mathrm{d}X_t = V_t \mathrm{d}t = -\frac{1}{\eta} U'(X_t) \mathrm{d}t + \frac{1}{\eta}\sigma(V_t)\mathrm{d}W_t - \frac{1}{\eta} \mathrm{d}V_t.
\end{equation}
In the high-friction limit $\eta \to \infty$, we analyze the macroscopic dynamics over timescales much larger than the velocity relaxation time. The exact differential term $\frac{1}{\eta}\mathrm{d}V_t$ integrates to $(V_t - V_0)/\eta$, which vanishes in probability because the stationary velocity process $V_t$ is bounded. 

The remaining stochastic integral $\frac{1}{\eta} \int \sigma(V_t) \mathrm{d}W_t$ is a continuous martingale. Because the isolated fast process $V_t$ is ergodic, its time-averaged quadratic variation converges almost surely to its stationary expectation. By the functional central limit theorem for martingales \citep{EthierKurtz1986}, this highly fluctuating noise term converges in distribution to a standard Itô integral driven by the constant effective variance
\begin{equation}
\sigma_{\text{eq}}^2 = \frac{1}{\eta^2}\mathbb{E}_{\pi_V} [\sigma^2(V)] = \frac{1}{\eta^2}\mathbb{E}_{\pi_V} [\alpha V^2 + \beta V + \gamma] = \frac{2}{\eta}\mathbb{E}_{\pi_V}[V^2] = \frac{2\gamma}{\eta(2 \eta - \alpha)},
\end{equation}
where the latter equalities follow from the stationarity of the fast process. Specifically, enforcing the condition that the infinitesimal generator expectations vanish, $\mathbb{E}_{\pi_V}[\mathbb{L}f(V)] = 0$ for any $f$, immediately yields $\mathbb{E}_{\pi_V}[V] = 0$ and $\mathbb{E}_{\pi_V}[\sigma^2(V)] = 2\eta\mathbb{E}_{\pi_V}[V^2]$. 

The homogenization process generates no spurious noise-induced drift because the diffusion coefficient $\sigma(V_t)$ depends strictly on the fast variable and is independent of the spatial coordinate $X_t$. Consequently, the effective macroscopic SDE for the spatial coordinate is the overdamped Itô Langevin equation
\begin{equation}
\mathrm{d} X_t \approx -\frac{1}{\eta} U'(X_t) \mathrm{d} t + \sqrt{\sigma_{\text{eq}}^2} \mathrm{d} \tilde{W}_t,
\end{equation}
where $\tilde{W}_t$ is a standard macroscopic Brownian motion. The stationary solution to this purely gradient-driven system is the Boltzmann-Gibbs distribution
\begin{equation} \label{eq:pi_X}
\pi_X(x) \propto \exp\left( -\frac{2 U(x)/\eta}{\sigma_{\text{eq}}^2} \right).
\end{equation}

\subsection{Proof of Proposition \ref{prop:OmegahProp}}

To approximate $\bm{\Omega}_h$, we approximate the term in the integrated in \eqref{eq:Omegah_int}. We first recall the well-known formula 
\begin{equation} \label{eq:expendingL}
    \mathbb{E}[\bm{\phi}(\mathbf{X}_{t_k}) \mid \mathbf{X}_{t_{k-1}} = \mathbf{x}] = \sum_{n = 0}^\infty \frac{h^n}{n!} \mathbb{L}^n\bm{\phi}(\mathbf{x}),
\end{equation}
where $\mathbb{L}^0\bm{\phi}(\mathbf{x}) = \bm{\phi}(\mathbf{x})$. 
We then apply it to the middle term  in \eqref{eq:Omegah_int} and obtain 
\begin{align*}
    \mathbb{E}^{[1]}[\bm{\Sigma}\bm{\Sigma}^\top(\mathbf{X}_s^{[1]}) \mid \mathbf{X}_{t_{k-1}}^{[1]} = \mathbf{x} ] = \bm{\Sigma}\bm{\Sigma}^\top(\mathbf{x}) + (s-t_{k-1}) \mathbb{L}^{[1]}\bm{\Sigma}\bm{\Sigma}^\top(\mathbf{x}) + \mathbf{R}((s-t_{k-1})^2, \mathbf{x}).
\end{align*}
Then, \eqref{eq:Omegah_int} becomes 
\begin{align}
    \bm{\Omega}_h(\mathbf{x}) 
    &= h\bm{\Sigma}\bm{\Sigma}^\top(\mathbf{x}) + \frac{h^2}{2}\left(\mathbf{A} \bm{\Sigma}\bm{\Sigma}^\top(\mathbf{x}) + \bm{\Sigma}\bm{\Sigma}^\top(\mathbf{x}) \mathbf{A}^\top  + \mathbb{L}^{[1]}\bm{\Sigma}\bm{\Sigma}^\top(\mathbf{x})\right)+ \mathbf{R}(h^3, \mathbf{x}). \label{eq:Omega_approx}
\end{align}
As Strang splitting approximation \eqref{eq:StrangSplitting} uses $\bm{\Omega}_h(\bm{f}_{h/2}(\mathbf{x}))$,   we   approximate it using  Lemma \ref{lemma:fh} and the fact that $\bm{f}_{h/2}(\mathbf{x}) = \mathbf{x} + h/2 \mathbf{N}(\mathbf{x}) + \mathbf{R}(h^2, \mathbf{x})$. We get
\begin{align}
    [\bm{\Sigma}\bm{\Sigma}^\top(\bm{f}_{h/2}(\mathbf{x}))]_{ij} 
    &=[\bm{\Sigma}\bm{\Sigma}^\top(\mathbf{x}))]_{ij} + \frac{h}{2} (D [\bm{\Sigma}\bm{\Sigma}^\top(\mathbf{x}))]_{ij})^\top \mathbf{N}(\mathbf{x}) + \mathbf{R}(h^2, \mathbf{x}). \label{eq:SigmaSigmaTf}
\end{align}

Now, we derive $\mathbb{L}^{[1]}\bm{\Sigma}\bm{\Sigma}^\top(\mathbf{x})$ element-wise
\begin{align}
    [\mathbb{L}^{[1]}\bm{\Sigma}\bm{\Sigma}^\top(\mathbf{x})]_{ij} &= \mathbb{L}^{[1]}[\bm{\Sigma}\bm{\Sigma}^\top(\mathbf{x})]_{ij} = (D [\bm{\Sigma}\bm{\Sigma}^\top(\mathbf{x}))]_{ij})^\top \mathbf{A}(\mathbf{x}-\mathbf{b}) + \sum_{k, l = 1}^d \alpha^{ij}_{kl} [\bm{\Sigma}\bm{\Sigma}^\top(\mathbf{x})]_{kl}. \label{eq:L1SigmaSigmaT}
\end{align}
Finally, combining \eqref{eq:L1}, \eqref{eq:Omega_approx}, \eqref{eq:SigmaSigmaTf} and \eqref{eq:L1SigmaSigmaT}, we get \eqref{eq:OmegahSS_approx}. The rest of the proof is the same as in Lemma 6.2 in \citep{pilipovic2024SecondOrder}

\subsection{Proof of Theorem \ref{thm:Consistency}}
We start by approximating the nonlinear solution $\bm{f}_h$ (Section 1.8 in \citep{SolvingODEI}), $\log |\det D {\bm{f}}_{h/2}(\mathbf{x}; \bm{\theta}^{(1)})|$ which appears in the objective function and $\bm{\mu}_h(\bm{f}_{h/2}(\mathbf{x}))$ up to the lowest necessary order of $h$.
\begin{lemma} \label{lemma:fh} Let Assumptions \ref{as:monoton_condition}, \ref{as:F_polynomial_growth} and \ref{as:fhInv} hold. When $h \to 0$, the $h$-flow of  \eqref{eq:SplittingEq2} approximates as
\begin{align*}
    \bm{f}_h(\mathbf{x}) &= \mathbf{x} + h \mathbf{N}(\mathbf{x}) + \frac{h^2}{2} (D \mathbf{N}(\mathbf{x}))\mathbf{N}(\mathbf{x}) + \mathbf{R}(h^3, \mathbf{x}),\\
    \bm{f}^{-1}_h(\mathbf{x}) &= \mathbf{x} - h \mathbf{N}(\mathbf{x}) + \frac{h^2}{2} (D\mathbf{N}(\mathbf{x}))\mathbf{N}(\mathbf{x}) + \mathbf{R}(h^3, \mathbf{x}).
\end{align*}
For the $h$-flow of \eqref{eq:SplittingEq2}, $\bm{f}_h$, holds
\begin{align*}
        2 \log |\det D {\bm{f}}_{h/2}(\mathbf{X}_{t_k}; \bm{\theta}^{(1)})| &=  h \tr D{\mathbf{N}}(\mathbf{X}_{t_{k-1}}; \bm{\theta}^{(1)})  + {R}(h^2, \mathbf{X}_{t_{k-1}})
    \end{align*}
For the functions $\bm{f}_{h}$ in \eqref{eq:fhflow} and $\bm{\mu}_h$ in \eqref{eq:mu_h}, it holds
\begin{align*}
    \bm{\mu}_h(\bm{f}_{h/2}(\mathbf{x})) &= \mathbf{x} + h(\mathbf{F}(\mathbf{x}) - \frac{1}{2} \mathbf{N}(\mathbf{x})) + \mathbf{R}(h^2,\mathbf{x}). 
    \end{align*}
\end{lemma}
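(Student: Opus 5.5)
The proof splits into three computations, each a Taylor expansion in $h$ whose remainder is controlled by the polynomial growth in \ref{as:F_polynomial_growth}. We assume $\mathbf{N}=\mathbf{F}-\mathbf{A}(\cdot-\mathbf{b})$ is $C^3$ with derivatives of polynomial growth.

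\textbf{Flow and inverse flow.} Since $\frac{\dif}{\dif t}\bm{f}_t(\mathbf{x})=\mathbf{N}(\bm{f}_t(\mathbf{x}))$, we get
\[
\frac{\dif^2}{\dif t^2}\bm{f}_t(\mathbf{x}) = D\mathbf{N}(\bm{f}_t(\mathbf{x}))\,\mathbf{N}(\bm{f}_t(\mathbf{x})),
\]
and the third derivative is a polynomial-growth expression in $\bm{f}_t(\mathbf{x})$. Taylor's theorem with integral remainder then gives the expansion of $\bm{f}_h$. The remainder is of the form $\mathbf{R}(h^3,\mathbf{x})$ provided $\sup_{t\le h}\|\bm{f}_t(\mathbf{x})\|\le C(1+\|\mathbf{x}\|)^C$. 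We obtain this moment/growth bound from the one-sided Lipschitz condition \ref{as:monoton_condition}: Gronwall applied to $\|\bm{f}_t(\mathbf{x})\|^2$ yields $\|\bm{f}_t(\mathbf{x})\|\le e^{Ct}(\|\mathbf{x}\|+C)$. For the inverse, \ref{as:fhInv} lets us write $\bm{f}_h^{-1}=\bm{f}_{-h}$, the backward flow of $-\mathbf{N}$, which solves an ODE with vector field $-\mathbf{N}$ for small $h$. Substituting $-h$ into the same expansion gives $\mathbf{x}-h\mathbf{N}+\frac{h^2}{2}D\mathbf{N}\,\mathbf{N}$, since the second-order term is even in $h$. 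A local a priori bound on the backward flow over $[0,h_0)$ is the main technical point here, because one-sided Lipschitz gives growth control only forward in time. We plan to handle it by restricting to $h<h_0$ and using continuity and compactness in $\bm{\theta}$ together with the polynomial growth; alternatively, we verify the inverse expansion by composing it with $\bm{f}_h$ and checking that the result equals the identity up to $\mathbf{R}(h^3,\cdot)$.

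\textbf{Log-determinant.} The Jacobian $\mathbf{J}_t=D\bm{f}_t(\mathbf{x})$ solves $\dot{\mathbf{J}}_t=D\mathbf{N}(\bm{f}_t)\mathbf{J}_t$ with $\mathbf{J}_0=\mathbf{I}$. Liouville's formula gives
\[
\log\det\mathbf{J}_{h/2}=\int_0^{h/2}\tr D\mathbf{N}(\bm{f}_s(\mathbf{x}))\dif s=\tfrac h2\tr D\mathbf{N}(\mathbf{x})+R(h^2,\mathbf{x}),
\]
where the determinant is positive, so the absolute value is harmless. Evaluating at $\mathbf{X}_{t_k}$ rather than $\mathbf{X}_{t_{k-1}}$ changes the leading term by $\frac h2(\tr D\mathbf{N}(\mathbf{X}_{t_k})-\tr D\mathbf{N}(\mathbf{X}_{t_{k-1}}))$. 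This difference is $h$ times an increment bounded by polynomial growth times $\|\mathbf{X}_{t_k}-\mathbf{X}_{t_{k-1}}\|$, which is of order $\sqrt h$ in $L^p$ by standard moment bounds. So the identity should be read as holding up to a remainder controlled in this moment sense, with the $R$ understood accordingly.

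\textbf{The composed mean.} Using \eqref{eq:mu_h}, $\exp(\mathbf{A}h)=\mathbf{I}+h\mathbf{A}+O(h^2)$ and $\bm{f}_{h/2}(\mathbf{x})=\mathbf{x}+\frac h2\mathbf{N}(\mathbf{x})+\mathbf{R}(h^2,\mathbf{x})$, we get
\[
\bm{\mu}_h(\bm{f}_{h/2}(\mathbf{x}))=\bm{f}_{h/2}(\mathbf{x})+h\mathbf{A}(\bm{f}_{h/2}(\mathbf{x})-\mathbf{b})+\mathbf{R}(h^2,\mathbf{x})=\mathbf{x}+\tfrac h2\mathbf{N}(\mathbf{x})+h\mathbf{A}(\mathbf{x}-\mathbf{b})+\mathbf{R}(h^2,\mathbf{x}).
\]
Since $\mathbf{A}(\mathbf{x}-\mathbf{b})=\mathbf{F}(\mathbf{x})-\mathbf{N}(\mathbf{x})$, this equals $\mathbf{x}+h(\mathbf{F}(\mathbf{x})-\frac12\mathbf{N}(\mathbf{x}))+\mathbf{R}(h^2,\mathbf{x})$. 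The constants in the remainder are uniform in $\bm{\theta}\in\overline{\Theta}$ because $\overline{\Theta}$ is compact and $\mathbf{A},\mathbf{b}$ are continuous in $\bm{\theta}$.
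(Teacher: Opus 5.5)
Your route is the paper's. The paper gives no argument beyond citing the Taylor expansion of the flow \citep{SolvingODEI}. Your forward-flow expansion (Gronwall from the one-sided Lipschitz bound that $\mathbf{N}=\mathbf{F}-\mathbf{A}(\cdot-\mathbf{b})$ inherits from \ref{as:monoton_condition}, then the integral remainder) is correct, as are the Liouville computation of $\log\det D\bm{f}_{h/2}(\mathbf{x})$ and the expansion of $\bm{\mu}_h(\bm{f}_{h/2}(\mathbf{x}))$.

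\textbf{The gap: the inverse flow.} The genuine gap is the one you flag for $\bm{f}_h^{-1}$, and neither proposed fix closes it. Compactness of $\overline{\Theta}$ gives uniformity in $\bm{\theta}$, not polynomial control in $\mathbf{x}$. Knowing $\bm{f}_h(\mathbf{g}_h(\mathbf{x}))=\mathbf{x}+\mathbf{R}(h^3,\mathbf{x})$ for the candidate $\mathbf{g}_h(\mathbf{x})=\mathbf{x}-h\mathbf{N}(\mathbf{x})+\frac{h^2}{2}D\mathbf{N}(\mathbf{x})\mathbf{N}(\mathbf{x})$ does not control $\bm{f}_h^{-1}-\mathbf{g}_h$ without a polynomial bound on how much $\bm{f}_h^{-1}$ stretches distances. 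That is the same missing backward-flow control, and it does not follow from \ref{as:monoton_condition}, \ref{as:F_polynomial_growth} and \ref{as:fhInv}.

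Here is a counterexample in $d=1$. Take $N(x)=-x\log(1+x^2)\log\log(e+x^2)$. It is smooth with polynomially bounded derivatives, $N'\le 0$, and $\int^{\infty}|N(y)|^{-1}\,\mathrm{d}y=\infty$. So the backward flow is global and $f_h$ is a bijection of $\mathbb{R}$. Along $\dot{x}=-N(x)$, however, $u=\log x$ satisfies $\dot{u}\ge 2u\log u$ for $x\ge e$, hence $\log f_{-h}(x)\ge(\log x)^{e^{2h}}$. Thus $f_{-h}$ is super-polynomial for every $h>0$, and $f_h^{-1}(x)-x+hN(x)-\frac{h^2}{2}N'(x)N(x)$ is not $R(h^3,x)$.

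You must add a hypothesis such as $\sup_{0\le s<h_0}\|\bm{f}_{-s}(\mathbf{x})\|\le C(1+\|\mathbf{x}\|)^C$. One sufficient condition is the reverse bound $(\mathbf{x}-\mathbf{y})^\top(\mathbf{N}(\mathbf{x})-\mathbf{N}(\mathbf{y}))\ge -C\|\mathbf{x}-\mathbf{y}\|^2$, under which your Gronwall argument runs backward verbatim. With such a hypothesis your Taylor argument at $-h$ goes through. The same bound also transfers your Liouville computation to $D\bm{f}_{-h/2}$, the Jacobian that actually enters \eqref{eq:S_obj}. The paper assumes this tacitly. It holds trivially for the Student Kramers oscillator, whose backward flow is explicit ($x$ fixed, $v\mapsto v-sN^{(2)}(x)$).

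\textbf{The log-determinant.} Your diagnosis is right, but state it precisely rather than ``with the $R$ understood accordingly''. Per step, $h\bigl(\tr D\mathbf{N}(\mathbf{X}_{t_k})-\tr D\mathbf{N}(\mathbf{X}_{t_{k-1}})\bigr)$ has conditional $L^p$ size of order $h^{3/2}$. So the lemma's deterministic $R(h^2,\mathbf{X}_{t_{k-1}})$ fails in general, and the correct per-step statement has $\mathbf{X}_{t_k}$ on both sides. Nothing downstream is affected: summed over $k$, the discrepancy telescopes to $h\bigl(\tr D\mathbf{N}(\mathbf{X}_{t_N})-\tr D\mathbf{N}(\mathbf{X}_{t_0})\bigr)$, which is negligible at every scaling used in the proofs. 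Indeed \eqref{eq:asymptotic_S_obj} keeps the trace at $\mathbf{X}_{t_k}$.
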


To approximate the random variable $\mathbf{Z}_{t_k}(\bm{\theta}^{(1)})$ in \eqref{eq:Ztk} around $\mathbf{X}_{t_{k-1}}$, we start by defining the following the random sequences  
\begin{align}    
    \bm{\eta}(\mathbf{X}_{t_{k-1}};\bm{\theta}_0)\coloneqq \frac{1}{h^{1/2}}\int_{t_{k-1}}^{t_k} \bm{\Sigma}(\mathbf{X}_t;\bm{\theta}^{(2)}_0) \dif \mathbf{W}_t, && \bm{\xi}(\mathbf{X}_{t_{k-1}};\bm{\theta}_0)  \coloneqq  \frac{1}{h^{3/2}} \int_{t_{k-1}}^{t_k} (t_k - t)\bm{\Sigma}(\mathbf{X}_t;\bm{\theta}^{(2)}_0) \dif\mathbf{W}_t. \label{eq:eta_xi}
\end{align}
The random variables $\bm{\eta}(\mathbf{X}_{t_{k-1}};\bm{\theta}_0)$ and $\bm{\xi}(\mathbf{X}_{t_{k-1}};\bm{\theta}_0)$ in \eqref{eq:eta_xi} have zero mean. Moreover, at time $t_{k-1}$ they are $\mathcal{F}_{t_k}$ measurable and independent of $\mathcal{F}_{t_{k-1}}$. Using It\^{o} formula it is easy to see that $\bm{\eta}(\mathbf{X}_{t_{k-1}};\bm{\theta}_0)$ and $\bm{\xi}(\mathbf{X}_{t_{k-1}};\bm{\theta}_0)$ depend on the drift parameter $\bm{\theta}^{(1)}_0$. For short, we write $\bm{\eta}_0(\mathbf{X}_{t_{k-1}}) \coloneqq \bm{\eta}(\mathbf{X}_{t_{k-1}};\bm{\theta}_0)$, $\bm{\xi}_0(\mathbf{X}_{t_{k-1}}) \coloneqq \bm{\xi}(\mathbf{X}_{t_{k-1}};\bm{\theta}_0)$,  $\bm{\eta}\bm{\eta}_0^\top(\mathbf{X}_{t_{k-1}}) \coloneqq \bm{\eta}(\mathbf{X}_{t_{k-1}};\bm{\theta}_0)\bm{\eta}^\top(\mathbf{X}_{t_{k-1}};\bm{\theta}_0)$ and $\mathbf{F}_0(\mathbf{X}_{t_{k-1}}) \coloneqq \mathbf{F}(\mathbf{X}_{t_{k-1}}; \bm{\theta}^{(1)}_0)$.

The It\^{o} isometry yields
\begin{align}
    \mathbb{E}_{\bm{\theta}_0}[\bm{\eta}_0(\mathbf{X}_{t_{k-1}})\bm{\eta}_0(\mathbf{X}_{t_{k-1}})^\top \mid \mathcal{F}_{t_{k-1}}] &=  \bm{\Sigma}\bm{\Sigma}_0^\top(\mathbf{X}_{t_{k-1}}) + \frac{h}{2}\mathbb{L}_0\bm{\Sigma}\bm{\Sigma}_0^\top(\mathbf{X}_{t_{k-1}})+ \mathbf{R}(h^2, \mathbf{X}_{t_{k-1}}), \label{eq:etaeta}\\
    \mathbb{E}_{\bm{\theta}_0}[\bm{\eta}_0(\mathbf{X}_{t_{k-1}})\bm{\xi}_0(\mathbf{X}_{t_{k-1}})^\top \mid \mathcal{F}_{t_{k-1}}] &= \mathbb{E}_{\bm{\theta}_0}[\bm{\xi}_0(\mathbf{X}_{t_{k-1}})\bm{\eta}_0(\mathbf{X}_{t_{k-1}})^\top \mid \mathcal{F}_{t_{k-1}}] =  \frac{1}{2}\bm{\Sigma}\bm{\Sigma}_0^\top(\mathbf{X}_{t_{k-1}}) + \mathbf{R}(h, \mathbf{X}_{t_{k-1}}), \notag \\
    \mathbb{E}_{\bm{\theta}_0}[\bm{\xi}_0(\mathbf{X}_{t_{k-1}})\bm{\xi}_0(\mathbf{X}_{t_{k-1}})^\top \mid \mathcal{F}_{t_{k-1}}] &= \frac{1}{3}\bm{\Sigma}\bm{\Sigma}_0^\top(\mathbf{X}_{t_{k-1}}) + \mathbf{R}(h, \mathbf{X}_{t_{k-1}}). \notag
\end{align}
The following proposition is the last building block for approximating the objective function $\mathcal{L}^{\mathrm{[SS]}}$ in \eqref{eq:S_obj}. It derives from the previous Lemma. 

\begin{proposition} \label{prop:ZtkandZtkbar} The random variable $\mathbf{Z}_{t_k}(\bm{\theta}^{(1)})$ in \eqref{eq:Ztk} is approximated as
\begin{align*}
    \mathbf{Z}_{t_k}(\bm{\theta}^{(1)}) &= h^{1/2}\bm{\eta}(\mathbf{X}_{t_{k-1}}) + h (\mathbf{F}_0(\mathbf{X}_{t_{k-1}}) - \mathbf{F}(\mathbf{X}_{t_{k-1}}))\\
    &- \frac{h^{3/2}}{2}D\mathbf{N}(\mathbf{X}_{t_{k-1}}) \bm{\eta}(\mathbf{X}_{t_{k-1}})+h^{3/2} D\mathbf{F}_0(\mathbf{X}_{t_{k-1}}) \bm{\xi}(\mathbf{X}_{t_{k-1}}) + \mathbf{R}(h^2,\mathbf{X}_{t_{k-1}}).
\end{align*}
\end{proposition}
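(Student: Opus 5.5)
The proof expands each of the two terms of $\mathbf{Z}_{t_k}(\bm{\theta}^{(1)})$ in \eqref{eq:Ztk} around $\mathbf{x} \coloneqq \mathbf{X}_{t_{k-1}}$ and then subtracts. Throughout, $\mathbf{R}(h^p,\mathbf{x})$ will also stand for random remainders. For these I only claim conditional $L^q$ bounds given $\mathcal{F}_{t_{k-1}}$ of the form $h^p C(1+\|\mathbf{x}\|)^C$. Such bounds are what the later ergodic-average arguments use, and they follow from the polynomial growth in \ref{as:F_polynomial_growth} together with the moment bounds on $\mathbf{X}$ implied by \ref{as:monoton_condition}. The second term of $\mathbf{Z}_{t_k}$ is already handled by Lemma~\ref{lemma:fh}: $\bm{\mu}_h(\bm{f}_{h/2}(\mathbf{x})) = \mathbf{x} + h(\mathbf{F}(\mathbf{x}) - \tfrac12\mathbf{N}(\mathbf{x})) + \mathbf{R}(h^2,\mathbf{x})$. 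So the work lies in $\bm{f}_{h/2}^{-1}(\mathbf{X}_{t_k})$.

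\textbf{Step 1: Itô--Taylor expansion of $\mathbf{X}_{t_k}$ under $\bm{\theta}_0$.} I write
\[
\mathbf{X}_{t_k} - \mathbf{x} = \int_{t_{k-1}}^{t_k}\mathbf{F}_0(\mathbf{X}_t)\dif t + h^{1/2}\bm{\eta}_0(\mathbf{x}).
\]
Next I apply Itô's formula to $\mathbf{F}_0(\mathbf{X}_t)$ on $[t_{k-1},t]$:
\[
\mathbf{F}_0(\mathbf{X}_t) = \mathbf{F}_0(\mathbf{x}) + \int_{t_{k-1}}^t D\mathbf{F}_0(\mathbf{X}_s)\bm{\Sigma}(\mathbf{X}_s)\dif\mathbf{W}_s + \int_{t_{k-1}}^t \mathbb{L}_0\mathbf{F}_0(\mathbf{X}_s)\dif s .
\]
The $\dif s$ part, integrated over $t$, contributes $\mathbf{R}(h^2,\mathbf{x})$. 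For the stochastic part I use stochastic Fubini:
\[
\int_{t_{k-1}}^{t_k}\!\int_{t_{k-1}}^t D\mathbf{F}_0(\mathbf{X}_s)\bm{\Sigma}(\mathbf{X}_s)\dif\mathbf{W}_s\dif t = \int_{t_{k-1}}^{t_k}(t_k-s)D\mathbf{F}_0(\mathbf{X}_s)\bm{\Sigma}(\mathbf{X}_s)\dif\mathbf{W}_s .
\]
Replacing $D\mathbf{F}_0(\mathbf{X}_s)$ by $D\mathbf{F}_0(\mathbf{x})$ costs $h^{1/2}\cdot h^{3/2}$ in $L^q$, by the BDG inequality and $\mathbb{E}[\|\mathbf{X}_s - \mathbf{x}\|^q\mid\mathcal{F}_{t_{k-1}}] \le C h^{q/2}(1+\|\mathbf{x}\|)^C$. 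This gives
\[
\mathbf{X}_{t_k} = \mathbf{x} + h\mathbf{F}_0(\mathbf{x}) + h^{1/2}\bm{\eta}_0(\mathbf{x}) + h^{3/2}D\mathbf{F}_0(\mathbf{x})\bm{\xi}_0(\mathbf{x}) + \mathbf{R}(h^2,\mathbf{x}).
\]

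\textbf{Step 2: the backward flow.} By Lemma~\ref{lemma:fh} with step $h/2$, $\bm{f}_{h/2}^{-1}(\mathbf{y}) = \mathbf{y} - \tfrac h2\mathbf{N}(\mathbf{y}) + \mathbf{R}(h^2,\mathbf{y})$. I plug in $\mathbf{y} = \mathbf{X}_{t_k}$ and Taylor-expand $\mathbf{N}$ around $\mathbf{x}$:
\[
\mathbf{N}(\mathbf{X}_{t_k}) = \mathbf{N}(\mathbf{x}) + D\mathbf{N}(\mathbf{x})(\mathbf{X}_{t_k}-\mathbf{x}) + O(\|\mathbf{X}_{t_k}-\mathbf{x}\|^2).
\]
The quadratic term is $\mathbf{R}(h,\mathbf{x})$, and by Step 1, $D\mathbf{N}(\mathbf{x})(\mathbf{X}_{t_k}-\mathbf{x}) = h^{1/2}D\mathbf{N}(\mathbf{x})\bm{\eta}_0(\mathbf{x}) + \mathbf{R}(h,\mathbf{x})$. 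Multiplied by $h/2$, both of these remainders become $\mathbf{R}(h^2,\mathbf{x})$. The term $\mathbf{R}(h^2,\mathbf{X}_{t_k})$ becomes $\mathbf{R}(h^2,\mathbf{x})$ after bounding the polynomial weight in $\mathbf{X}_{t_k}$ by conditional moments. Hence
\[
\bm{f}_{h/2}^{-1}(\mathbf{X}_{t_k}) = \mathbf{x} + h\mathbf{F}_0(\mathbf{x}) + h^{1/2}\bm{\eta}_0 + h^{3/2}D\mathbf{F}_0(\mathbf{x})\bm{\xi}_0 - \tfrac h2\mathbf{N}(\mathbf{x}) - \tfrac{h^{3/2}}{2}D\mathbf{N}(\mathbf{x})\bm{\eta}_0 + \mathbf{R}(h^2,\mathbf{x}).
\]

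\textbf{Step 3: subtraction.} Subtracting the expansion of $\bm{\mu}_h(\bm{f}_{h/2}(\mathbf{x}))$, the terms $\mathbf{x}$ and $-\tfrac h2\mathbf{N}(\mathbf{x})$ cancel. What remains is $h(\mathbf{F}_0 - \mathbf{F})(\mathbf{x})$ plus the three stochastic terms, which is exactly the claimed formula.

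\textbf{Main obstacle.} The delicate part is Step 1 and the bookkeeping of random remainders. One must check that every discarded term is of order $h^2$ in conditional $L^q$ with a polynomial weight in $\mathbf{x}$, not merely pathwise. This needs moment bounds for $\sup_{t\in[t_{k-1},t_k]}\|\mathbf{X}_t\|$ and for increments of $\mathbf{X}$, which I would take from \ref{as:monoton_condition}--\ref{as:F_polynomial_growth} via standard BDG/Gronwall arguments. The same bounds are needed to pass from $\mathbf{R}(h^2,\mathbf{X}_{t_k})$ to $\mathbf{R}(h^2,\mathbf{X}_{t_{k-1}})$.
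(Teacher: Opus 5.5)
Your proof is correct and follows essentially the route the paper intends: an It\^o--Taylor expansion of $\mathbf{X}_{t_k}$, with the $(t_k-s)$-weighted stochastic integral producing $\bm{\xi}$; then Lemma~\ref{lemma:fh} applied to both $\bm{f}_{h/2}^{-1}$ and $\bm{\mu}_h\circ\bm{f}_{h/2}$; and finally cancellation of the $\tfrac{h}{2}\mathbf{N}(\mathbf{x})$ terms on subtraction. The paper only states that the result ``derives from the previous Lemma'', and its $\mathbf{R}$ notation is deterministic, so your explicit tracking of the random remainders through conditional $L^q$ bounds fills in detail the paper omits rather than departing from it.
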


We approximate the objective function $\mathcal{L}^{\mathrm{[SS]}}$ in \eqref{eq:S_obj} up to order $R(h^{3/2}, \mathbf{X}_{t_{k-1}})$ by 
\begin{align}
     \mathcal{L}_N^{\mathrm{[SS]}}(\mathbf{X}_{0:t_N}; \bm{\theta}) &\coloneqq\sum_{k=1}^{N}\left( \log \det \bm{\Sigma}\bm{\Sigma}^\top(\mathbf{X}_{t_{k-1}}; \bm{\theta}^{(2)})  + \bm{\eta}_0^\top(\mathbf{X}_{t_{k-1}}) \bm{\Sigma}\bm{\Sigma}^\top(\mathbf{X}_{t_{k-1}}; \bm{\theta}_0^{(2)})^{-1}\bm{\eta}(\mathbf{X}_{t_{k-1}})\right)\notag\\
     &\hspace{-5ex}+ 2\sqrt{h}\sum_{k=1}^{N}  \bm{\eta}_0^\top(\mathbf{X}_{t_{k-1}}) \bm{\Sigma}\bm{\Sigma}^\top(\mathbf{X}_{t_{k-1}}; \bm{\theta}^{(2)})^{-1}(\mathbf{F}(\mathbf{X}_{t_{k-1}}; \bm{\theta}^{(1)}_0) - \mathbf{F}(\mathbf{X}_{t_{k-1}}; \bm{\theta}^{(1)}))\notag\\ 
     &\hspace{-5ex}+ h\sum_{k=1}^{N} (\mathbf{F}(\mathbf{X}_{t_{k-1}}; \bm{\theta}^{(1)}_0) - \mathbf{F}(\mathbf{X}_{t_{k-1}}; \bm{\theta}^{(1)}))^\top \bm{\Sigma}\bm{\Sigma}^\top(\mathbf{X}_{t_{k-1}}; \bm{\theta}^{(2)})^{-1}(\mathbf{F}(\mathbf{X}_{t_{k-1}}; \bm{\theta}^{(1)}_0) - \mathbf{F}(\mathbf{X}_{t_{k-1}}; \bm{\theta}^{(1)}))\notag\\
     &\hspace{-5ex}- h\sum_{k=1}^{N} \left(\bm{\eta}_0^\top(\mathbf{X}_{t_{k-1}})D \mathbf{F}(\mathbf{X}_{t_{k-1}}; \bm{\theta}^{(1)})^\top\bm{\Sigma}\bm{\Sigma}^\top(\mathbf{X}_{t_{k-1}}; \bm{\theta}^{(2)})^{-1}\bm{\eta}_0(\mathbf{X}_{t_{k-1}}) - \tr D\mathbf{F}(\mathbf{X}_{t_k}; \bm{\theta}^{(1)})\right)\notag\\
     &\hspace{-5ex}-\frac{h}{2}\sum_{k=1}^{N} \bm{\eta}_0^\top(\mathbf{X}_{t_{k-1}}) \bm{\Sigma}\bm{\Sigma}^\top(\mathbf{X}_{t_{k-1}}; \bm{\theta}^{(2)})^{-1}\mathbb{L}_{\bm{\theta}}\bm{\Sigma}\bm{\Sigma}^\top(\mathbf{X}_{t_{k-1}}; \bm{\theta}^{(2)})\bm{\Sigma}\bm{\Sigma}^\top(\mathbf{X}_{t_{k-1}}; \bm{\theta}_0^{(2)})^{-1}\bm{\eta}(\mathbf{X}_{t_{k-1}})\notag\\
     &\hspace{-5ex}+\frac{h}{2}\sum_{k=1}^{N}\tr \mathbb{L}_{\bm{\theta}}\bm{\Sigma}\bm{\Sigma}^\top(\mathbf{X}_{t_{k-1}}; \bm{\theta}^{(2)})\bm{\Sigma}\bm{\Sigma}^\top(\mathbf{X}_{t_{k-1}}; \bm{\theta}^{(2)})^{-1}. \label{eq:asymptotic_S_obj}
\end{align}
This approximation is enough to prove the asymptotic properties of the estimator $\hat{\bm{\theta}}_N^{\mathrm{[SS]}}$. It omits the terms of order $R(h, \mathbf{X}_{t_{k-1}})$ that do not depend on $\bm{\theta}^{(1)}$. It holds
\begin{align*}
    \mathcal{L}^{\mathrm{[SS]}}(\mathbf{X}_{0:t_N}; \bm{\theta}) &= \mathcal{L}_N^{\mathrm{[SS]}}(\mathbf{X}_{0:t_N}; \bm{\theta}) \\
    &+2h\sum_{k=1}^{N} \bm{\eta}^\top(\mathbf{X}_{t_{k-1}}; \bm{\theta}_0) \bm{\Sigma}\bm{\Sigma}^\top(\mathbf{X}_{t_{k-1}}; \bm{\theta}^{(2)})^{-1}D\mathbf{F}(\mathbf{X}_{t_{k-1}}; \bm{\theta}^{(1)}_0) \bm{\xi}^\top(\mathbf{X}_{t_{k-1}}; \bm{\theta}_0^{(2)})  + R(h^{3/2}, \mathbf{X}_{t_{k-1}}).
\end{align*}
To establish consistency, we follow the proof of Theorem 1 in \citet{Kessler1997} and study the limit of $\mathcal{L}_N^{\mathrm{[SS]}}(\bm{\theta})$ from \eqref{eq:asymptotic_S_obj} rescaled by the correct rate of convergence. More precisely, the consistency of the diffusion parameter is proved by studying the limit of $\frac{1}{N}\mathcal{L}_N^{\mathrm{[SS]}}(\bm{\theta})$. In contrast, the consistency of the drift parameter is proved by studying the limit of $\frac{1}{Nh}(\mathcal{L}_N^{\mathrm{[SS]}}(\bm{\theta}^{(1)}, \bm{\theta}^{(2)}) - \mathcal{L}_N^{\mathrm{[SS]}}(\bm{\theta}^{(1)}_0, \bm{\theta}^{(2)})$.

To prove the consistency of the diffusion parameter $\bm{\theta}^{(2)}$, we need to prove that
\begin{equation}
    \frac{1}{N} \mathcal{L}_N^{\mathrm{[SS]}} (\bm{\theta}^{(1)},\bm{\theta}^{(2)}) \xrightarrow[\substack{h \to 0 \\ Nh \to \infty}]{\mathbb{P}_{\bm{\theta}_0}} \int \left(\log \det \bm{\Sigma}\bm{\Sigma}^\top(\mathbf{x}; \bm{\theta}^{(2)}) + \tr(\bm{\Sigma}\bm{\Sigma}^\top(\mathbf{x}; \bm{\theta}^{(2)})^{-1}\bm{\Sigma}\bm{\Sigma}^\top(\mathbf{x}; \bm{\theta}^{(2)}_0)\right) \dif \nu_0(\mathbf{x}), \label{eq:LikConvSigma}
\end{equation}
uniformly in $\bm{\theta}$. This can be done following  the proof of consistency for $\widehat{\bm{\theta}}_N^{(2)\mathrm{[SS]}}$ in \citep{Kessler1997}.  

To prove the consistency of the drift estimators $\widehat{\bm{\theta}}_N^{(1)\mathrm{[SS]}}$, we start by finding the limit in $\mathbb{P}_{\bm{\theta}_0}$ of  
\begin{align}
     \frac{1}{N h} ( \mathcal{L}_N^{\mathrm{[SS]}}(\bm{\theta}^{(1)}, \bm{\theta}^{(2)})  - \mathcal{L}_N^{\mathrm{[SS]}}(\bm{\theta}^{(1)}_0, \bm{\theta}^{(2)})), \label{eq:LikConsBeta}
\end{align}
for $N h \to \infty$, $h \to 0$, uniformly in $\bm{\theta}$. Starting with \eqref{eq:asymptotic_S_obj}, we get 
\begin{align*}
    & \frac{1}{N h} ( \mathcal{L}_N^{\mathrm{[SS]}}(\bm{\theta}^{(1)}, \bm{\theta}^{(2)})  - \mathcal{L}_N^{\mathrm{[SS]}}(\bm{\theta}^{(1)}_0, \bm{\theta}^{(2)}) = \\
    &\frac{2}{N\sqrt{h}}\sum_{k=1}^{N}  \bm{\eta}_0^\top(\mathbf{X}_{t_{k-1}}) \bm{\Sigma}\bm{\Sigma}^\top(\mathbf{X}_{t_{k-1}}; \bm{\theta}^{(2)})^{-1}(\mathbf{F}(\mathbf{X}_{t_{k-1}}; \bm{\theta}^{(1)}_0) - \mathbf{F}(\mathbf{X}_{t_{k-1}}; \bm{\theta}^{(1)}))\\
    &+ \frac{1}{N}\sum_{k=1}^{N} (\mathbf{F}(\mathbf{X}_{t_{k-1}}; \bm{\theta}^{(1)}_0) - \mathbf{F}(\mathbf{X}_{t_{k-1}}; \bm{\theta}^{(1)}))^\top \bm{\Sigma}\bm{\Sigma}^\top(\mathbf{X}_{t_{k-1}}; \bm{\theta}^{(2)})^{-1}(\mathbf{F}(\mathbf{X}_{t_{k-1}}; \bm{\theta}^{(1)}_0) - \mathbf{F}(\mathbf{X}_{t_{k-1}}; \bm{\theta}^{(1)}))\notag\\
    &- \frac{1}{N}\sum_{k=1}^{N} \bm{\eta}_0^\top(\mathbf{X}_{t_{k-1}})(D \mathbf{F}(\mathbf{X}_{t_{k-1}}; \bm{\theta}^{(1)}) - D \mathbf{F}(\mathbf{X}_{t_{k-1}}; \bm{\theta}^{(1)}_0))^\top\bm{\Sigma}\bm{\Sigma}^\top(\mathbf{X}_{t_{k-1}}; \bm{\theta}^{(2)})^{-1}\bm{\eta}_0(\mathbf{X}_{t_{k-1}})\\
    &+\frac{1}{N}\sum_{k=1}^{N} \tr(D\mathbf{F}(\mathbf{X}_{t_k}; \bm{\theta}^{(1)}) - D\mathbf{F}(\mathbf{X}_{t_k}; \bm{\theta}^{(1)}_0))\\
    &-\frac{h}{2}\sum_{k=1}^{N} \bm{\eta}_0^\top(\mathbf{X}_{t_{k-1}}) \bm{\Sigma}\bm{\Sigma}^\top(\mathbf{X}_{t_{k-1}}; \bm{\theta}^{(2)})^{-1}(\mathbb{L}_{\bm{\theta}^{(1)},\bm{\theta}^{(2)}} - \mathbb{L}_{\bm{\theta}^{(1)}_0, \bm{\theta}^{(2)}})\{\bm{\Sigma}\bm{\Sigma}^\top(\mathbf{X}_{t_{k-1}}; \bm{\theta}^{(2)})\}\\
    &\hspace{70ex}\bm{\Sigma}\bm{\Sigma}^\top(\mathbf{X}_{t_{k-1}}; \bm{\theta}^{(2)})^{-1}\bm{\eta}_0(\mathbf{X}_{t_{k-1}})\\
    &+\frac{h}{2}\sum_{k=1}^{N}\tr ((\mathbb{L}_{\bm{\theta}^{(1)},\bm{\theta}^{(2)}} - \mathbb{L}_{\bm{\theta}^{(1)}_0, \bm{\theta}^{(2)}})\{\bm{\Sigma}\bm{\Sigma}^\top(\mathbf{X}_{t_{k-1}}; \bm{\theta}^{(2)})\}\bm{\Sigma}\bm{\Sigma}^\top(\mathbf{X}_{t_{k-1}}; \bm{\theta}^{(2)})^{-1}).
\end{align*}
To prove the convergence in probability of the previous sequence, we use Lemma 8 in \citep{Kessler1997} and Lemma 9 in \citep{GenonCatalot&Jacod}. To apply Lemma 9 from \citep{GenonCatalot&Jacod}, we need to show that the sum of expectations converges to a certain value while the sum of covariances converges to zero. Here, we only show the former. Moreover, standard tools like Proposition A1 in \citep{Gloter2006} or Lemma 3.1 in \citep{Yoshida1990} can be used to prove uniform convergence. Thus, we look at the expectation to find the limit of \eqref{eq:LikConsBeta}. We use  the known covariances \eqref{eq:etaeta} to get
\begin{align}
    \frac{1}{N h} ( \mathcal{L}_N^{\mathrm{[SS]}}(\bm{\theta}^{(1)}, \bm{\theta}^{(2)})  - \mathcal{L}_N^{\mathrm{[SS]}}(\bm{\theta}^{(1)}_0, \bm{\theta}^{(2)}) \xrightarrow[\substack{h \to 0 \\ Nh \to \infty}]{\mathbb{P}_{\bm{\theta}_0}} &\int(\mathbf{F}_0(\mathbf{x}) - \mathbf{F}(\mathbf{x}))^\top \bm{\Sigma}\bm{\Sigma}^\top(\mathbf{x})^{-1} (\mathbf{F}_0(\mathbf{x}) - \mathbf{F}(\mathbf{x})) \dif \nu_0(\mathbf{x})\label{eq:drift_cons}\\
    +&\int \tr (D\left(\mathbf{F}_0\left(\mathbf{x}\right)-\mathbf{F}\left(\mathbf{x}\right)\right)(\bm{\Sigma}\bm{\Sigma}_0^\top(\mathbf{x})\bm{\Sigma}\bm{\Sigma}^\top(\mathbf{x})^{-1} - \mathbf{I}))\dif \nu_0(\mathbf{x}) \notag\\
    &\hspace{-29ex}+\frac{1}{2}\int \tr ((\mathbb{L}_{\bm{\theta}^{(1)}_0,\bm{\theta}^{(2)}} - \mathbb{L}_{\bm{\theta}^{(1)}, \bm{\theta}^{(2)}})\{\bm{\Sigma}\bm{\Sigma}^\top(\mathbf{x})\}\bm{\Sigma}\bm{\Sigma}^\top(\mathbf{x})^{-1}(\bm{\Sigma}\bm{\Sigma}_0^\top(\mathbf{x})\bm{\Sigma}\bm{\Sigma}^\top(\mathbf{x})^{-1} - \mathbf{I}))\dif \nu_0(\mathbf{x}). \notag
\end{align}
Thus, the drift estimator's consistency follows the diffusion estimator's consistency. This is because the right-hand side of \eqref{eq:drift_cons} is non-negative when $\bm{\theta}^{(2)} = \bm{\theta}^{(2)}_0$, and the left-hand side is non-positive, following the definition of the likelihood.

\subsection{Proof of the asymptotic normality}

To prove Theorem \ref{thm:AsymtoticNormality}, it is enough to prove Lemmas \ref{lemma:AsymptoticNormality1} and \ref{lemma:LnConvergence}. 

In the following proofs, we use the fact that $\partial_{\theta^{(i)}} R(h^p, \mathbf{Y}_{t_{k-1}}) = R(h^p, \mathbf{Y}_{t_{k-1}})$, for any $p$ and $i$.

\begin{proof}[Proof of Lemma \ref{lemma:AsymptoticNormality1}]
We start by proving the first part of the lemma. First, we find their second derivatives with respect to $\bm{\theta}^{(1)}$
    \begin{align*}
        \frac{1}{N h} \partial_{\theta^{(1)}_{i_1}\theta^{(1)}_{i_2}}^2\mathcal{L}_N^{\mathrm{[SS]}}\left(\mathbf{X}_{0:t_N}; \bm{\theta}\right) &=-\frac{2}{N\sqrt{h}} \sum_{k=1}^{N}   \bm{\eta}_0^\top(\mathbf{X}_{t_{k-1}})     \bm{\Sigma}\bm{\Sigma}^\top(\mathbf{X}_{t_{k-1}}; \bm{\theta}^{(2)})^{-1}\partial_{\theta^{(1)}_{i_1}\theta^{(1)}_{i_2}}^2 \mathbf{F}(\mathbf{X}_{t_{k-1}}; \bm{\theta}^{(1)}) \notag\\
        &\hspace{-25ex}+ \frac{2}{N}\sum_{k=1}^{N} \partial_{\theta^{(1)}_{i_1}}\mathbf{F}(\mathbf{X}_{t_{k-1}}; \bm{\theta}^{(1)})^\top  \bm{\Sigma}\bm{\Sigma}^\top(\mathbf{X}_{t_{k-1}}; \bm{\theta}^{(2)})^{-1}\partial_{\theta^{(1)}_{i_2}}\mathbf{F}(\mathbf{X}_{t_{k-1}}; \bm{\theta}^{(1)})\notag\\
        &\hspace{-25ex}- \frac{2}{N}\sum_{k=1}^{N} \partial_{\theta^{(1)}_{i_1}\theta^{(1)}_{i_2}}^2\mathbf{F}(\mathbf{X}_{t_{k-1}}; \bm{\theta}^{(1)})^\top  \bm{\Sigma}\bm{\Sigma}^\top(\mathbf{X}_{t_{k-1}}; \bm{\theta}^{(2)})^{-1}(\mathbf{F}(\mathbf{X}_{t_{k-1}}; \bm{\theta}^{(1)}_0) - \mathbf{F}(\mathbf{X}_{t_{k-1}}; \bm{\theta}^{(1)}))\notag\\
        &\hspace{-25ex}- \frac{1}{N}\sum_{k=1}^{N} \bm{\eta}_0^\top(\mathbf{X}_{t_{k-1}}) D \partial_{\theta^{(1)}_{i_1}\theta^{(1)}_{i_2}}^2\mathbf{F}(\mathbf{X}_{t_{k-1}}; \bm{\theta}^{(1)})^\top 
        \bm{\Sigma}\bm{\Sigma}^\top(\mathbf{X}_{t_{k-1}}; \bm{\theta}_0^{(2)})^{-1}\bm{\eta}(\mathbf{X}_{t_{k-1}}) + \frac{1}{N}\sum_{k=1}^{N} \tr D  \partial_{\theta^{(1)}_{i_1}\theta^{(1)}_{i_2}}^2 \mathbf{F}(\mathbf{X}_{t_k}; \bm{\theta}^{(1)})\\
        &\hspace{-25ex}-\frac{1}{2N}\sum_{k=1}^{N} \bm{\eta}_0^\top(\mathbf{X}_{t_{k-1}}) \bm{\Sigma}\bm{\Sigma}^\top(\mathbf{X}_{t_{k-1}}; \bm{\theta}^{(2)})^{-1}  \partial_{\theta^{(1)}_{i_1}\theta^{(1)}_{i_2}}^2\mathbb{L}_{\bm{\theta}}\bm{\Sigma}\bm{\Sigma}^\top(\mathbf{X}_{t_{k-1}}; \bm{\theta}^{(2)})\bm{\Sigma}\bm{\Sigma}^\top(\mathbf{X}_{t_{k-1}}; \bm{\theta}^{(2)})^{-1}\bm{\eta}_0(\mathbf{X}_{t_{k-1}})\\
        &\hspace{-25ex}+\frac{1}{2N}\sum_{k=1}^{N}\tr   \partial_{\theta^{(1)}_{i_1}\theta^{(1)}_{i_2}}^2\mathbb{L}_{\bm{\theta}}\bm{\Sigma}\bm{\Sigma}^\top(\mathbf{X}_{t_{k-1}}; \bm{\theta}^{(2)})\bm{\Sigma}\bm{\Sigma}^\top(\mathbf{X}_{t_{k-1}}; \bm{\theta}^{(2)})^{-1}.
    \end{align*}
As in the proof of consistency, it holds
\begin{align*}
    \frac{1}{N h} \partial_{\theta^{(1)}_{i_1}\theta^{(1)}_{i_2}}^2\mathcal{L}_N^{\mathrm{[SS]}} \left(\mathbf{X}_{0:t_N}; \bm{\theta}\right) \Big|_{\bm{\theta} = \bm{\theta}_0} &\xrightarrow[\substack{h \to 0 \\ Nh \to \infty}]{\mathbb{P}_{\bm{\theta}_0}} 2 \int  \partial_{\theta^{(1)}_{i_1}}\mathbf{F}_0(\mathbf{x})^\top  \bm{\Sigma}\bm{\Sigma}_0^\top(\mathbf{x})^{-1}  \partial_{\theta^{(1)}_{i_2}}\mathbf{F}_0(\mathbf{x}) \dif \nu_0(\mathbf{x}),
\end{align*}
uniformly in $\bm{\theta}$. Now, we investigate the limit of $\frac{1}{N \sqrt{h}} \partial_{\theta^{(1)}_{i_1}\theta^{(2)}_{j_2}}^2\mathcal{L}_N^{\mathrm{[SS]}}(\bm{\theta})$ 
\begin{align*}
        \frac{1}{N \sqrt{h}} \partial_{\theta^{(1)}_{i_1}\theta^{(2)}_{j_2}}^2\mathcal{L}_N^{\mathrm{[SS]}}\left(\mathbf{X}_{0:t_N}; \bm{\theta}\right) &= -\frac{2}{N} \sum_{k=1}^{N}  \bm{\eta}^\top(\mathbf{X}_{t_{k-1}}; \bm{\theta}_0)  \partial_{\theta^{(2)}_{j_2}} \bm{\Sigma}\bm{\Sigma}^\top(\mathbf{X}_{t_{k-1}}; \bm{\theta}^{(2)})^{-1}\partial_{\theta^{(1)}_{i_1}} \mathbf{F}(\mathbf{X}_{t_{k-1}}; \bm{\theta}^{(1)})\notag\\
        &+ \frac{1}{N}\sum_{k=1}^N R(\sqrt{h}, \mathbf{X}_{t_{k-1}})\notag.
\end{align*}
The previous sequence converges to zero due to Lemma 9 in \citep{GenonCatalot&Jacod}. Next, we look at the limits of $\frac{1}{N} \partial_{\theta^{(2)}_{j_1}\theta^{(2)}_{j_2}}^2\mathcal{L}_N^{\mathrm{[SS]}}(\bm{\theta})$ 
   \begin{align*}
        \frac{1}{N} \partial_{\theta^{(2)}_{j_1}\theta^{(2)}_{j_2}}^2\mathcal{L}_N^{\mathrm{[SS]}}\left(\mathbf{X}_{0:t_N}; \bm{\theta}\right)&=  \frac{1}{N} \sum_{k=1}^{N} \partial_{\theta^{(2)}_{j_1}\theta^{(2)}_{j_2}}^2 \log \det  \bm{\Sigma}\bm{\Sigma}^\top(\mathbf{X}_{t_{k-1}}; \bm{\theta}^{(2)})  \\
        &\hspace{-22ex}+ \frac{1}{N}\sum_{k=1}^{N}\partial_{\theta^{(2)}_{j_1}\theta^{(2)}_{j_2}}^2\tr(\bm{\eta}\bm{\eta}_0^\top(\mathbf{X}_{t_{k-1}})   \bm{\Sigma}\bm{\Sigma}^\top(\mathbf{X}_{t_{k-1}}; \bm{\theta}^{(2)})^{-1})+ \frac{1}{N}\sum_{k=1}^N R(\sqrt{h}, \mathbf{X}_{t_{k-1}})\\
        &\hspace{-22ex}=  \frac{1}{N} \sum_{k=1}^{N}\tr( \bm{\Sigma}\bm{\Sigma}^\top(\mathbf{X}_{t_{k-1}}; \bm{\theta}^{(2)})^{-1} \partial_{\theta^{(2)}_{j_1}\theta^{(2)}_{j_2}}^2 \bm{\Sigma}\bm{\Sigma}^\top(\mathbf{X}_{t_{k-1}}; \bm{\theta}^{(2)})) \\
        &\hspace{-22ex}- \frac{1}{N} \sum_{k=1}^{N}\tr ( \bm{\Sigma}\bm{\Sigma}^\top(\mathbf{X}_{t_{k-1}}; \bm{\theta}^{(2)})^{-1} ( \partial_{\theta^{(2)}_{j_1}} \bm{\Sigma}\bm{\Sigma}^\top(\mathbf{X}_{t_{k-1}}; \bm{\theta}^{(2)})) \bm{\Sigma}\bm{\Sigma}^\top(\mathbf{X}_{t_{k-1}}; \bm{\theta}^{(2)})^{-1} \partial_{\theta^{(2)}_{j_2}} \bm{\Sigma}\bm{\Sigma}^\top(\mathbf{X}_{t_{k-1}}; \bm{\theta}^{(2)}))\notag\\
        &\hspace{-22ex}-\frac{1}{N}\sum_{k=1}^{N}\tr(\bm{\eta}\bm{\eta}_0^\top(\mathbf{X}_{t_{k-1}})  \bm{\Sigma}\bm{\Sigma}^\top(\mathbf{X}_{t_{k-1}}; \bm{\theta}^{(2)})^{-1}(\partial_{\theta^{(2)}_{j_1}\theta^{(2)}_{j_2}}^2 \bm{\Sigma}\bm{\Sigma}^\top)  \bm{\Sigma}\bm{\Sigma}^\top(\mathbf{X}_{t_{k-1}}; \bm{\theta}^{(2)})^{-1})\notag\\
        &\hspace{-22ex} +\frac{1}{N}\sum_{k=1}^{N}\tr(\bm{\eta}\bm{\eta}_0^\top(\mathbf{X}_{t_{k-1}})  \bm{\Sigma}\bm{\Sigma}^\top(\mathbf{X}_{t_{k-1}}; \bm{\theta}^{(2)})^{-1}(\partial_{\theta^{(2)}_{j_1}} \bm{\Sigma}\bm{\Sigma}^\top)  \bm{\Sigma}\bm{\Sigma}^\top(\mathbf{X}_{t_{k-1}}; \bm{\theta}^{(2)})^{-1}
        (\partial_{\theta^{(2)}_{j_2}} \bm{\Sigma}\bm{\Sigma}^\top)  \bm{\Sigma}\bm{\Sigma}^\top(\mathbf{X}_{t_{k-1}}; \bm{\theta}^{(2)})^{-1})\notag\\
        &\hspace{-22ex} +\frac{1}{N}\sum_{k=1}^{N}\tr(\bm{\eta}\bm{\eta}_0^\top(\mathbf{X}_{t_{k-1}})  \bm{\Sigma}\bm{\Sigma}^\top(\mathbf{X}_{t_{k-1}}; \bm{\theta}^{(2)})^{-1}(\partial_{\theta^{(2)}_{j_2}} \bm{\Sigma}\bm{\Sigma}^\top)  \bm{\Sigma}\bm{\Sigma}^\top(\mathbf{X}_{t_{k-1}}; \bm{\theta}^{(2)})^{-1}
        (\partial_{\theta^{(2)}_{j_1}} \bm{\Sigma}\bm{\Sigma}^\top)  \bm{\Sigma}\bm{\Sigma}^\top(\mathbf{X}_{t_{k-1}}; \bm{\theta}^{(2)})^{-1})\\
        &\hspace{-22ex}+ \frac{1}{N}\sum_{k=1}^N R(\sqrt{h}, \mathbf{X}_{t_{k-1}}).
\end{align*}
Using the conditional covariance matrix of of $\bm{\eta}(\mathbf{X}_{t_{k-1}}; \bm{\theta}_0)$ from \eqref{eq:eta_xi} with additional calculations, we can conclude that
\begin{align*}
    \frac{1}{N} \partial_{\theta^{(2)}_{j_1}\theta^{(2)}_{j_2}}^2\mathcal{L}_N^{\mathrm{[SS]}}\left(\mathbf{X}_{0:t_N}; \bm{\theta}\right) \Big|_{\bm{\theta} = \bm{\theta}_0} &\xrightarrow[\substack{h \to 0 \\ Nh \to \infty}]{\mathbb{P}_{\bm{\theta}_0}}  \int\tr ( \bm{\Sigma}\bm{\Sigma}_0^\top(\mathbf{x})^{-1} ( \partial_{\theta^{(2)}_{j_1}} \bm{\Sigma}\bm{\Sigma}_0^\top(\mathbf{x})) \bm{\Sigma}\bm{\Sigma}_0^\top(\mathbf{x})^{-1} \partial_{\theta^{(2)}_{j_2}} \bm{\Sigma}\bm{\Sigma}_0^\top(\mathbf{x}))\dif \nu_0(\mathbf{x}),
\end{align*}
uniformly in $\bm{\theta}$. 

This concludes the first part of the lemma. The second part follows from the fact that all limits are continuous in $\bm{\theta}$.
\end{proof}

\begin{proof}[Proof of Lemma \ref{lemma:LnConvergence}]
To prove the lemma, we need to compute $\bm{\lambda}_N^\mathrm{[SS]}$ from \eqref{eq:lambda}. We start with $-\frac{1}{\sqrt{N h}}\partial_{\theta^{(1)}_{i}} \mathcal{L}_N^\mathrm{[SS]}$
\begin{align*}
        &-\frac{1}{\sqrt{N h}}\partial_{\theta^{(1)}_{i}}\mathcal{L}_N^{\mathrm{[SS]}}\left(\mathbf{X}_{0:t_N}; \bm{\theta}\right) = \frac{2}{\sqrt{N }} \sum_{k=1}^{N}  \bm{\eta}_0^\top(\mathbf{X}_{t_{k-1}})  \bm{\Sigma}\bm{\Sigma}^\top(\mathbf{X}_{t_{k-1}}; \bm{\theta}^{(2)})^{-1}\partial_{\theta^{(1)}_{i}} \mathbf{F}(\mathbf{X}_{t_{k-1}}; \bm{\theta}^{(1)}) \notag\\
        &\hspace{2ex}+ 2\sqrt{\frac{h}{N }}\sum_{k=1}^{N} \partial_{\theta^{(1)}_{i}} \mathbf{F}(\mathbf{X}_{t_{k-1}}; \bm{\theta}^{(1)})^\top \bm{\Sigma}\bm{\Sigma}^\top(\mathbf{X}_{t_{k-1}}; \bm{\theta}^{(2)})^{-1}(\mathbf{F}(\mathbf{X}_{t_{k-1}}; \bm{\theta}^{(1)}_0) - \mathbf{F}(\mathbf{X}_{t_{k-1}}; \bm{\theta}^{(1)}))\notag\\
        &\hspace{2ex}+ \sqrt{\frac{h}{N }}\sum_{k=1}^{N} \bm{\eta}_0^\top(\mathbf{X}_{t_{k-1}})   D \partial_{\theta^{(1)}_{i}} \mathbf{F}(\mathbf{X}_{t_{k-1}}; \bm{\theta}^{(1)})^\top\bm{\Sigma}\bm{\Sigma}^\top(\mathbf{X}_{t_{k-1}}; \bm{\theta}^{(2)})^{-1}\bm{\eta}_0(\mathbf{X}_{t_{k-1}})  -  \sqrt{\frac{h}{N }}\sum_{k=1}^{N} \tr D \partial_{\theta^{(1)}_{i}} \mathbf{F}(\mathbf{X}_{t_k}; \bm{\theta}^{(1)})\\
        &\hspace{2ex}+\frac{1}{2}\sqrt{\frac{h}{N}}\sum_{k=1}^{N} \bm{\eta}_0^\top(\mathbf{X}_{t_{k-1}}) \bm{\Sigma}\bm{\Sigma}^\top(\mathbf{X}_{t_{k-1}}; \bm{\theta}^{(2)})^{-1}  \partial_{\theta^{(1)}_{i}}\mathbb{L}_{\bm{\theta}}\bm{\Sigma}\bm{\Sigma}^\top(\mathbf{X}_{t_{k-1}}; \bm{\theta}^{(2)})\bm{\Sigma}\bm{\Sigma}^\top(\mathbf{X}_{t_{k-1}}; \bm{\theta}^{(2)})^{-1}\bm{\eta}_0(\mathbf{X}_{t_{k-1}})\\
        &\hspace{2ex}-\frac{1}{2}\sqrt{\frac{h}{N}}\sum_{k=1}^{N}\tr   (\partial_{\theta^{(1)}_{i}}\mathbb{L}_{\bm{\theta}}\bm{\Sigma}\bm{\Sigma}^\top(\mathbf{X}_{t_{k-1}}; \bm{\theta}^{(2)})\bm{\Sigma}\bm{\Sigma}^\top(\mathbf{X}_{t_{k-1}}; \bm{\theta}^{(2)})^{-1}).
\end{align*}
Similarly, for $-\frac{1}{\sqrt{N}}\partial_{\theta^{(2)}_{j}} \mathcal{L}_N^\mathrm{[SS]}$, we get
\begin{align*}
        -\frac{1}{\sqrt{N }}\partial_{\theta^{(2)}_{j}}\mathcal{L}_N^{\mathrm{[SS]}}\left(\mathbf{X}_{0:t_N}; \bm{\theta}\right)&= -\frac{1}{\sqrt{N }} \sum_{k=1}^{N} \tr(\bm{\Sigma}\bm{\Sigma}^\top(\mathbf{X}_{t_{k-1}}; \bm{\theta}^{(2)})^{-1}\partial_{\theta^{(2)}_{j}}\bm{\Sigma}\bm{\Sigma}^\top)\\
        &\hspace{-10ex}+\frac{1}{\sqrt{N }}\sum_{k=1}^N \bm{\eta}_0^\top(\mathbf{X}_{t_{k-1}}) \bm{\Sigma}\bm{\Sigma}^\top(\mathbf{X}_{t_{k-1}}; \bm{\theta}^{(2)})^{-1}(\partial_{\theta^{(2)}_{j}}\bm{\Sigma}\bm{\Sigma}^\top) \bm{\Sigma}\bm{\Sigma}^\top(\mathbf{X}_{t_{k-1}}; \bm{\theta}^{(2)})^{-1} \bm{\eta}_0(\mathbf{X}_{t_{k-1}})\notag\\
        &\hspace{-10ex}+ 2\sqrt{\frac{h}{N }}\sum_{k=1}^{N}  \bm{\eta}_0^\top(\mathbf{X}_{t_{k-1}})  \bm{\Sigma}\bm{\Sigma}^\top(\mathbf{X}_{t_{k-1}}; \bm{\theta}^{(2)})^{-1}(\partial_{\theta^{(2)}_{j}}\bm{\Sigma}\bm{\Sigma}^\top) \bm{\Sigma}\bm{\Sigma}^\top(\mathbf{X}_{t_{k-1}}; \bm{\theta}^{(2)})^{-1}\\
        &\hspace{46ex}(\mathbf{F}(\mathbf{X}_{t_{k-1}}; \bm{\theta}^{(1)}_0) - \mathbf{F}(\mathbf{X}_{t_{k-1}}; \bm{\theta}^{(1)}))\\
        &\hspace{-10ex}+ \sum_{k=1}^{N}R(\frac{h}{\sqrt{N}}, \mathbf{X}_{t_{k-1}}).
\end{align*}
To prove the convergence in distribution of $\bm{\lambda}_N^\mathrm{[SS]}$, we introduce the following triangular arrays that arise from the previous calculations
\begin{align}
    \bm{\phi}_{N,k-1}^{\mathrm{[SS]}(i)}(\bm{\theta}_0) &\coloneqq \frac{2}{\sqrt{N }}  \bm{\eta}_0^\top(\mathbf{X}_{t_{k-1}}) \bm{\Sigma}\bm{\Sigma}^\top_0(\mathbf{X}_{t_{k-1}})^{-1}\partial_{\theta^{(1)}_{i}} \mathbf{F}_0(\mathbf{X}_{t_{k-1}})\label{eq:phiC}\\
    &+  \sqrt{\frac{h}{N }} (\tr(\bm{\eta}\bm{\eta}_0^\top(\mathbf{X}_{t_{k-1}})  D \partial_{\theta^{(1)}_{i}} \mathbf{F}_0(\mathbf{X}_{t_{k-1}})^\top\bm{\Sigma}\bm{\Sigma}^\top_0(\mathbf{X}_{t_{k-1}})^{-1}) - \tr D \partial_{\theta^{(1)}_{i}} \mathbf{F}_0(\mathbf{X}_{t_{k-1}})) \notag\\
    &+\frac{1}{2}\sqrt{\frac{h}{N}}\tr(\bm{\eta}\bm{\eta}_0^\top(\mathbf{X}_{t_{k-1}}) \bm{\Sigma}\bm{\Sigma}^\top(\mathbf{X}_{t_{k-1}}; \bm{\theta}^{(2)})^{-1}  \partial_{\theta^{(1)}_{i}}\mathbb{L}_{\bm{\theta}}\bm{\Sigma}\bm{\Sigma}^\top(\mathbf{X}_{t_{k-1}}; \bm{\theta}^{(2)})\bm{\Sigma}\bm{\Sigma}^\top(\mathbf{X}_{t_{k-1}}; \bm{\theta}^{(2)})^{-1})\notag\\
    &-\frac{1}{2}\sqrt{\frac{h}{N}}\tr   (\partial_{\theta^{(1)}_{i}}\mathbb{L}_{\bm{\theta}}\bm{\Sigma}\bm{\Sigma}^\top(\mathbf{X}_{t_{k-1}}; \bm{\theta}^{(2)})\bm{\Sigma}\bm{\Sigma}^\top(\mathbf{X}_{t_{k-1}}; \bm{\theta}^{(2)})^{-1}),\notag\\
    \bm{\rho}_{N,k-1}^{\mathrm{[SS]}(j)}(\bm{\theta}_0) &\coloneqq \frac{1}{\sqrt{N }}\bm{\eta}_0^\top(\mathbf{X}_{t_{k-1}}) \bm{\Sigma}\bm{\Sigma}^\top_0(\mathbf{X}_{t_{k-1}})^{-1}(\partial_{\theta^{(2)}_{j}}\bm{\Sigma}\bm{\Sigma}_0^\top(\mathbf{X}_{t_{k-1}})) \bm{\Sigma}\bm{\Sigma}^\top_0(\mathbf{X}_{t_{k-1}})^{-1} \bm{\eta}_0(\mathbf{X}_{t_{k-1}})\label{eq:rhoC}\\
    &- \frac{1}{\sqrt{N}}\tr(\bm{\Sigma}\bm{\Sigma}^\top_0(\mathbf{X}_{t_{k-1}})^{-1}\partial_{\theta^{(2)}_{j}}\bm{\Sigma}\bm{\Sigma}_0^\top(\mathbf{X}_{t_{k-1}})).\notag
\end{align}
Then, $\bm{\lambda}_N^\mathrm{[SS]}$ rewrites as
\begin{align}
    \bm{\lambda}_N^\mathrm{[SS]} = \sum_{k=1}^N \begin{bmatrix}
        \bm{\phi}_{N,k-1}^{\mathrm{[SS]}(1)}(\bm{\theta}_0)\\
        \vdots\\
        \bm{\phi}_{N,k-1}^{\mathrm{[SS]}(r)}(\bm{\theta}_0)\\
        \bm{\rho}_{N,k-1}^{\mathrm{[SS]}(1)}(\bm{\theta}_0)\\
        \vdots\\        
        \bm{\rho}_{N,k-1}^{\mathrm{[SS]}(s)}(\bm{\theta}_0)
    \end{bmatrix} + \frac{1}{N}\sum_{k=1}^{N}R(\sqrt{N h^2}, \mathbf{X}_{t_{k-1}}). \label{eq:LN_rewritten}
\end{align}
Thus, to establish estimators' asymptotic normality, we need an extra convergence condition $N h^2 \to 0$. This is common in literature, and it is necessary for most estimators. 

We notice that  $\bm{\phi}_{N,k-1}^{\mathrm{[SS]}(i)}(\bm{\theta}_0)$ and $ \bm{\rho}_{N,k-1}^{\mathrm{[SS]}(j)}(\bm{\theta}_0)$ are centered conditionally to $\mathcal{F}_{t_{k-1}}$ and adapted to the filtration $\mathcal{F}_{t_k}$. Thus, the proof follows directly by applying Proposition 3.1 in \citep{CRIMALDI2005571}. 
\end{proof}

\end{document}